\documentclass[12pt, notitlepage, reqno]{article}
 
 \usepackage{natbib}
 \bibliographystyle{abbrvnat}
 
 \usepackage{titling}
 \usepackage{amssymb, latexsym}
 \usepackage{amsmath} 
 \usepackage{amsthm}
 \usepackage{bm}
 \usepackage{bbm}
 \usepackage[mathscr]{eucal}
 \usepackage{enumerate}
\usepackage{hhline}
\usepackage[table]{xcolor}
\usepackage{multirow}
\usepackage[]{algorithm2e}
\usepackage{mathtools}

\newtheorem{theorem}{Theorem}[section]
\newtheorem{lemma}[theorem]{Lemma}

\newtheorem{corollary}[theorem]{Corollary}

\newtheorem{dfn}[theorem]{Definition}
\newtheorem{condition}{Condition}

\DeclareMathOperator*{\argmin}{arg\,min}
\DeclareMathOperator*{\argmax}{arg\,max}

\DeclarePairedDelimiter\ceil{\lceil}{\rceil}

\newenvironment{brsm}{
  \bigl[ \begin{smallmatrix} }{%
  \end{smallmatrix} \bigr]}
\newcommand{\intd}{\mbox{\textup{d}}} 

\newcommand{\setr}{\mathbb{R}} 
\newcommand{\setnn}{\mathbb{N}_{0}} 
\newcommand{\setnnn}{\mathbb{N}_{1}} 

\providecommand{\keywords}[1]{\textbf{\textit{Key Words:}} #1}
\title{Bayesian Mosaic: Parallelizable Composite Posterior}
\author{Ye Wang and David B. Dunson\\
 Department of Statistical Science \\
Duke University}

\date{}

 \begin{document}

\begin{titlepage}
\maketitle
\thispagestyle{empty}

 \begin{abstract}
This paper proposes {\em Bayesian mosaic}, a
parallelizable composite posterior, for scalable Bayesian inference
on a broad class of multivariate discrete data models. 
Sampling is embarrassingly parallel since {\em Bayesian mosaic} is a
multiplication of component posteriors that can be independently sampled from.
Analogous to composite likelihood methods, these component posteriors
are based on univariate or bivariate
marginal densities. Utilizing the fact that the score functions of
these densities are unbiased, we show that {\em Bayesian mosaic}
is consistent and 
asymptotically normal under mild
conditions. Since the evaluation of univariate or bivariate
marginal densities can rely on numerical integration, sampling
from {\em Bayesian mosaic} bypasses the traditional data augmented Markov chain Monte
Carlo (DA-MCMC) method, which has a provably slow mixing rate when data are
imbalanced. Moreover, we show that sampling from {\em Bayesian
  mosaic} has better scalability to large sample size than DA-MCMC.
The method is evaluated via
simulation studies and an application on a citation count dataset.
 \end{abstract}

\keywords{Big data; Composite likelihood; Discrete data; Embarrassingly
  parallel; Hierarchical model; High-dimensional; Latent Gaussian}
\end{titlepage}

\setcounter{page}{1}

 \section{INTRODUCTION}
 
 \setlength{\parindent}{5ex}

There is great interest in designing flexible models
for multivariate discrete data. 
A common strategy is to define
a generalized linear model (GLM) for each variable, with dependence in
the different variables induced through including multivariate latent 
variables in the GLMs. 
Alternatively, discrete data can be directly linked to the latent
variables via some link functions.
A popular choice for the latent
variable distribution is the multivariate Gaussian due to simplicity in modeling the
dependence structure.
For instance, multivariate Poisson regression with the underlying intercepts 
modeled jointly as a Gaussian has been widely used in
accident analysis~\citep{ma2008multivariate, el2014investigation}.
\citet{canale2011bayesian} proposed a multivariate count model that
handles both over-dispersion and under-dispersion. This model uses a
 rounding function
to directly link the multivariate count data to a latent Gaussian. We term these models as multivariate latent Gaussian
 models. Unfortunately, despite their great flexibility, the usage of
 this class of models is limited by the computationally challenging model fitting.

The challenge is due partially to the fact that likelihood functions
marginalizing out the latent variables lack analytic forms. Hence, Bayesian inference is usually done via data augmented Markov chain Monte Carlo
 (DA-MCMC) algorithms that sample both the latent variables and the model
 parameters from their joint posterior. However, it is well known that posterior dependence
 between the latent variables and the model parameters can substantially
 slow down the mixing rate of the Markov chain. In fact,
\citet{johndrow2016inefficiency} has shown that the mixing rate can
 be so slow that the DA-MCMC sampler cannot generate any reliable posterior samples when the data
 are severely imbalanced (e.g., excessive zeros in count data).

One possible solution is to bypass sampling entirely using one of the
following two strategies.  The first is the integrated nested Laplace approximation
(INLA), which is designed for latent Gaussian models that have a small
number of parameters remaining after marginalizing out the latent variables~\citep{rue2009approximate}.  Although INLA has had excellent
performance in specialized settings, in many applications, there are moderate to large numbers of
population parameters, ruling out such approaches. 
Another strategy is the so-called variational
approximations~\citep{attias2000variational, jaakkola2000bayesian},
which introduce an approximate posterior with a factorized form. One then optimizes the
parameters of this approximate posterior to minimize its Kullback-Leibler divergence from
the exact posterior. However, in general one has no idea how accurate
this approximation is and additionally it is well known that it
often substantially underestimates the true posterior
covariance.

We propose {\em Bayesian mosaic}, which is a surrogate
posterior derived by multiplying a collection of component
posteriors. Unlike variational approximations, where one has to choose
the variational class and optimize its parameters, the construction of
{\em Bayesian
  mosaic} is automatically determined by the data distribution.
It is related to the composite likelihood
approach~\citep{cox2004note} with its component posteriors being based
on univariate and bivariate marginal
distributions. However,  {\em Bayesian
  mosaic} is different from Bayesian composite
likelihood methods~\citep{pauli2011bayesian} in that it has an
easy-to-sample multiplicative form while a posterior density induced
by a composite likelihood does not. Utilizing that these marginal densities have unbiased score
functions, we have shown that {\em Bayesian mosaic} is consistent and
asymptotically normal under mild conditions. 
It is applicable to a class of  {\em mosaic-type} data distributions
that covers and is much broader than the class of multivariate latent Gaussian models mentioned earlier.

We also propose an efficient parallel
sampling strategy utilizing the posterior dependence structure induced
by the multiplicative form. This parallelization is substantially different from standard parallel
MCMC algorithms which are based on
partitions of the dataset~\citep{wang2013parallelizing,
  scott2016bayes}.  The sparse dependence structure of {\em Bayesian
  mosaic} allows us to  directly sample from each component posterior independently.
Moreover, we have shown that the asymptotic per-iteration computational
complexity of sampling from {\em Bayesian mosaic} is linear in the cardinality of the observed data, which is
in general much smaller than the sample size. On the other
hand, the per-iteration computational
complexity of DA-MCMC is linear in the
sample size.

The remainder of the paper is organized as follows. \S\ref{sec:bm}
provides definitions
and the sampling strategy. \S\ref{sec:theory}  provides theories on the richness of the  {\em
  mosaic-type} class and asymptotic properties of {\em
  Bayesian mosaic}. The
performance is demonstrated via both simulation studies and an
application on a
citation network dataset in
\S\ref{sec:experiment}.

\section{Bayesian Mosaic} \label{sec:bm}
We start by introducing
notation that will be used throughout the paper. Before presenting
the formal definition, we will first motivate the
proposed method by introducing a class of multivariate latent Gaussian
 models and describing computational issues
DA-MCMC algorithms encounter in fitting these models. After
defining {\em Bayesian mosaic}, we present a sampling algorithm
and a post-processing method to handle parameter constraints. We
end the section by generalizing {\em Bayesian mosaic} to
dependent data.

\subsection{Notations} \label{ssec:notation}
We represent vectors by
lower case letters and matrices by
capital letters, both in a boldface font. Unless otherwise stated, all
vectors will be column vectors. We use $\setr$ to denote the set of
all real numbers, $\setnn$ the nonnegative integers, $\setnnn$ the positive integers and $\|\cdot\|$ the Euclidean
norm. For $d\in \setnnn$, $\bm{\theta}\in\setr^{d}$ and $\delta>0$, we
define a radius-$\delta$ ball of $\bm{\theta}$ at $\bm{\theta}_{0}$ as
$\mathcal{B}_{\bm{\theta}}(\bm{\theta}_{0},\delta)=\{\bm{\theta}:\|\bm{\theta}-\bm{\theta}_{0}\|<\delta\}$. For
succinctness, we denote the multiple integral of a multivariate
function $g(\bm{y})$ as
\begin{align*}
\int g(\bm{y}) \intd \bm{y} = \int\cdots\int g(y_1,\ldots,y_p) \intd
  y_1\cdots \intd y_p.
\end{align*}

We will always use $f$ to denote a density function and $\ell$ to
denote a log-density function. The density function will be presented
in a conditional style, e.g., $f(\bm{y}|\bm{\theta})$, where
$\bm{\theta}$ are the model parameters. Given that $\bm{y}$ follows
some distribution $P_{\bm{\theta}}$ with a density function
$f(\bm{y}|\bm{\theta})$, we use $\mathbb{E}_{\bm{\theta}}g(\bm{y})$ to denote the expectation of
$g(\bm{y})$. More specifically,
\begin{align*}
\mathbb{E}_{\bm{\theta}}g(\bm{y}) = \int g(\bm{y}) f(\bm{y}|\bm{\theta}) \intd\bm{y}.
\end{align*}
We use $\phi(\bm{x}|\bm{\mu},\bm{\Sigma})$ to denote the density
function of a Gaussian distribution with mean $\bm{\mu}$ and
covariance $\bm{\Sigma}$, and use
$\Phi\left(U|\bm{\mu},\bm{\Sigma}\right)$ to denote its cdf function:
\begin{align*}
\phi(\bm{x}|\bm{\mu},\bm{\Sigma}) = &
  |2\pi\bm{\Sigma}|^{-1/2}\exp\big\{-(\bm{x}-\bm{\mu})^{\top}\bm{\Sigma}^{-1}(\bm{x}-\bm{\mu})/2\big\},\\
\Phi\left(U|\bm{\mu},\bm{\Sigma}\right) = &
                                            \int_{U}\phi(\bm{x}|\bm{\mu},\bm{\Sigma})\intd  \bm{x}.
\end{align*}

For better representation of the higher-order remainder of the Taylor
expansion for multivariate functions, we adopt the
notations of \cite{folland2005higher}. For any $d\in \setnnn$, a $d$-dimensional {\em multi-index}
$\bm{\alpha}$ for $\bm{x}=(x_1,\ldots,x_d)^{\top}\in\setr^{d}$ is defined as a $d$-tuple of nonnegative integers, i.e., 
$\bm{\alpha} = (\alpha_1, \ldots, \alpha_d)$,
where $\alpha_{j}\in \setnn$ for $j=1,\ldots,d$. We further
define 
\begin{align*}
|\bm{\alpha}| = \sum_{j=1}^{d}\alpha_j, & \quad \bm{\alpha}! =
                                     \prod_{j=1}^{d}\alpha_j!,\quad \bm{x}^{\bm{\alpha}} = \prod_{j=1}^{d}x_j^{\alpha_j} , \quad \partial^{\bm{\alpha}} g(\bm{x}) = \frac{\partial^{|\bm{\alpha}|} g(\bm{x})}{\partial
  x_1^{\alpha_{1}}\cdots\partial x_d^{\alpha_d}}.
\end{align*}

We will also use the following vector calculus notation to ease our
representation of the gradient vector and the Hessian matrix. Considering
two vectors
$\bm{\eta}=\big(\eta_{1},\ldots,\eta_{d_{\eta}}\big)^{\top}$,
$\bm{\zeta}=\big(\zeta_{1},\ldots,\zeta_{d_{\zeta}}\big)^{\top}$ and some function $g(\bm{\eta},\bm{\zeta})$, we denote the gradient of $f(\bm{\eta},\bm{\zeta})$
w.r.t. $\bm{\eta}$ as 
\[\triangledown_{\bm{\eta}}g(\bm{\eta},\bm{\zeta})=\left[\frac{\partial
  g(\bm{\eta},\bm{\zeta})}{\partial \eta_{1}},\ldots, \frac{\partial
  g(\bm{\eta},\bm{\zeta})}{\partial
  \eta_{d_{\eta}}}\right]^{\top},\]
and the gradient of $g(\bm{\eta},\bm{\zeta})$
w.r.t. $\bm{\eta}$ and $\bm{\zeta}$ as
\[\triangledown_{\bm{\eta},\bm{\zeta}}g(\bm{\eta},\bm{\zeta})=\left[\frac{\partial
  g(\bm{\eta},\bm{\zeta})}{\partial \eta_{1}},\ldots, \frac{\partial
  g(\bm{\eta},\bm{\zeta})}{\partial
  \eta_{d_{\eta}}}, \frac{\partial
  g(\bm{\eta},\bm{\zeta})}{\partial \zeta_{1}},\ldots, \frac{\partial
  g(\bm{\eta},\bm{\zeta})}{\partial
  \zeta_{d_{\zeta}}}\right]^{\top}.\]
We
further define
\begin{align*}
\triangledown_{\bm{\zeta}}\triangledown_{\bm{\eta}}g(\bm{\eta},\bm{\zeta})=
\begin{bmatrix}
    \frac{\partial^{2}
  g(\bm{\eta},\bm{\zeta})}{\partial \eta_{1}\partial \zeta_{1}} &\cdots & \frac{\partial^{2}
  g(\bm{\eta},\bm{\zeta})}{\partial
  \eta_{1}\partial \zeta_{d_{\zeta}}} \\
\vdots & & \vdots \\
 \frac{\partial^{2}g(\bm{\eta},\bm{\zeta})}{\partial
  \eta_{d_{\eta}}\partial \zeta_{1}} &\cdots & \frac{\partial^{2}
  g(\bm{\eta},\bm{\zeta})}{\partial
  \eta_{d_{\eta}}\partial \zeta_{d_{\zeta}}}
\end{bmatrix}.
\end{align*}
We
suppress $\triangledown_{\bm{\eta}}\triangledown_{\bm{\eta}}$ to
$\triangledown_{\bm{\eta}}^{2}$. Note that
$\triangledown_{\bm{\eta},\bm{\zeta}}^{2}g(\bm{\eta},\bm{\zeta})$ is
the Hessian of $g(\bm{\eta},\bm{\zeta})$.

Consider
$p\in\setnnn$ and a
sequence indexed by two subscripts $\left\{x_{st}\right\}$ where
$1\leq s < t \leq p$. Whenever we write $x_{12},\ldots,x_{(p-1)p}$,
we mean
\[x_{12},\ldots,x_{1p},x_{23},\ldots,x_{2p},\ldots,x_{(p-2)(p-1)},
  x_{(p-2)p}, x_{(p-1)p}\]
where the elements are ordered in a row-major manner.

\subsection{Multivariate Latent Gaussian
 Model} \label{ssec:mvt-latent-gaussian}
Suppose $p\in\setnnn$, $n\in\setnnn$ and $\bm{y}_1,\ldots,\bm{y}_n$
are i.i.d. $p$-dimensional observations from some multivariate
latent Gaussian model. Letting $\bm{y}=(y_{1},\ldots,y_{p})^{\top}\in\mathcal{Y}\subseteq\setr^{p}$ and introducing
$\bm{x}=(x_{1},\ldots,x_{p})^{\top}\in\setr^{p}$, the density function of
the model can be written as
\begin{align} \label{eq:mvt-lg-pois}
\begin{split}
f(\bm{y}|\bm{\mu},\bm{\Sigma})=\int
  \prod_{j=1}^{p}h_{j}(y_{j}|x_{j})\phi(
\bm{x}|\bm{\mu}, \bm{\Sigma})
\mbox{d}\bm{x},
\end{split}
\end{align}
where $h_{j}$'s are univariate density functions, $\bm{\Sigma}=\left\{\sigma_{st}\right\}$
is a $p\times p$ positive definite matrix and $\bm{\mu}
=\left(\mu_1,\ldots,\mu_{p}\right)^{\top}\in \setr^{p}$. We refer to $h_{j}$'s as
link densities.

The integral in (\ref{eq:mvt-lg-pois}) usually does not have an analytical
solution, and accurate numerical integration is infeasible even for moderately large $p$. Hence, fully Bayesian inference is usually
based on a DA-MCMC algorithm, where $\bm{x}$ is augmented and sampled
together with the model parameters $\bm{\mu}$ and $\bm{\Sigma}$.

If we let $h_{j}$'s be discrete data densities, then
(\ref{eq:mvt-lg-pois}) provides a rich class of
multivariate discrete data models. However, real world discrete
datasets are often severely imbalanced. Taking online advertising as
an example, the click through rate of a certain link is usually very
close to 0. Supposing that one wants to fit a logistic regression to predict
the probability of a certain user's clicking the link, only a tiny
faction of the responses will be 1. 

Unfortunately, DA-MCMC has a provably slow mixing rate in imbalanced
discrete data problems. Considering an intercept-only probit model and
assuming that the data are infinite imbalanced,
\citet{johndrow2016inefficiency} have shown that the step size of
DA-MCMC is
roughly $O(\frac{1}{\sqrt{n}})$, while the width of the high
probability region of the posterior is roughly $O(\frac{1}{\log
  n})$. For large $n$, the step size will become much smaller than the
width of the high probability bulk, 
causing extreme slow
mixing. Moreover, this mismatch will become worse as $n$ grows, and
presents huge practical problems in broad settings.

Another drawback of DA-MCMC is its poor scalability to large sample
size. The per-iteration
computational complexity is at least $O(n)$ due to the need to sample an augmented
$\bm{x}_{i}$ for each $\bm{y}_{i}$. Moreover, the number of model
parameters in (\ref{eq:mvt-lg-pois}) increases quadratically as the
data dimensionality grows.

One way to bypass DA-MCMC is to evaluate the integral in
(\ref{eq:mvt-lg-pois}) directly via deterministic numerical
integration methods.
Unfortunately, these methods are only feasible for small $p$. A potential
solution is to approximate Bayesian inference by using a composite
likelihood whose individual components are low-dimensional conditional
or marginal densities that can be numerically evaluated~\citep{pauli2011bayesian}. {\em
  Bayesian mosaic} is partially motivated by this idea.

Suppose $\bm{y}=\left(y_{1},\ldots,y_{p}\right)^{\top}$ follows the multivariate
latent Gaussian model whose density function is
defined in (\ref{eq:mvt-lg-pois}). One can easily prove the following:
\begin{itemize}
\item[i)] For $j=1,\ldots,p$, the univariate
marginal density for $y_{j}$ is
\[f_{jj}(y_{j}|\mu_{j},\sigma_{jj})=\int h_{j}(y_{j}|x_{j})\phi(x_{j}|\mu_{j},\sigma_{jj})\intd
  x_{j},\]
\item[ii)] For $1\leq s < t \leq p$, the bivariate
marginal density for $y_{s}$ and $y_{t}$ is
\begin{align*}
\begin{split}
&f_{st}(y_{s},y_{t}|\sigma_{st}, \mu_{s},\sigma_{ss},\mu_{t},\sigma_{tt})\\
=&\int h_{s}(y_{s}|x_{1}) h_{t}(y_{t}|x_{2})
\phi\big(\begin{brsm}
   x_{1}\\
   x_{2} 
\end{brsm}\big|\begin{brsm}
    \mu_{s}\\
    \mu_{t} 
\end{brsm},
\begin{brsm}
    \sigma_{ss} & \sigma_{st} \\
    \sigma_{ts}& \sigma_{tt}
\end{brsm}\big)\intd \begin{brsm}
   x_{1}\\
   x_{2} 
\end{brsm}.
\end{split}
\end{align*}
\end{itemize}

There is a rich literature on numerical integration methods for
univariate and bivariate
functions. Hence $f_{jj}$'s and $f_{st}$'s can be efficiently
evaluated. In fact, many composite likelihood methods have been using
these lower-dimensional densities as individual components due to the
fact that they are computationally easier to work
with~\citep{cox2004note}.

Consider the following composite log-likelihood which consists of only
univariate marginal densities:
\begin{align*}
Q_{n}(\bm{\mu},\sigma_{11},\ldots,\sigma_{pp})
= &\sum_{i}^{n}\sum_{j=1}^{p}\ell_{jj}(y_{ij}|\mu_{j},\sigma_{jj}),
\end{align*}
where $\ell_{jj}(y_{j}|\mu_{j},\sigma_{jj}) = \log
f_{jj}(y_{j}|\mu_{j},\sigma_{jj})$ for $j=1,\ldots,p$. 
One can construct the following posterior distribution:
\begin{align*}
\pi_{n}^{*}(\bm{\mu},\sigma_{11},\ldots,\sigma_{pp})\propto &
  \exp\left\{Q_{n}(\bm{\mu},\sigma_{11},\ldots,\sigma_{pp})\right\}\pi(\bm{\mu},
  \sigma_{11},\ldots,\sigma_{pp}),
\end{align*}
given prior 
$\pi(\bm{\mu}, \sigma_{11},\ldots,\sigma_{pp})$.
If we assume prior independence, so that
$\pi(\bm{\mu}, \sigma_{11},\ldots,\sigma_{pp}) =
  \prod_{j=1}^{p}\pi_{jj}(\mu_{j}, \sigma_{jj})$,
and let
\[\pi_{n,jj}^{*}(\mu_{j},
  \sigma_{jj})\propto\exp\left\{\sum_{i}^{n}\ell_{jj}(y_{ij}|\mu_{j},\sigma_{jj})\right\}\pi_{jj}(\mu_{j},
  \sigma_{jj}),\]
it can be shown that
\begin{align} \label{eq:mvtlg-knot}
\pi_{n}^{*}(\bm{\mu},\sigma_{11},\ldots,\sigma_{pp})=
  \prod_{j=1}^{p}\pi_{n,jj}^{*}(\mu_{j},
  \sigma_{jj}).
\end{align}
We have constructed a surrogate posterior distribution for $\bm{\mu}$ and
$\sigma_{jj}$'s. These are the parameters that characterize the univariate
  marginal distributions of the data. The factorized form of the composite likelihood and prior
independence induces posterior independence in $(\mu_{j},
  \sigma_{jj})$'s. Therefore, sampling from
  (\ref{eq:mvtlg-knot}) can be
  split into independently sampling from each $\pi_{n,jj}^{*}(\mu_{j},
  \sigma_{jj})$.

To complete our surrogate posterior distribution, we need some
conditional distribution of $\sigma_{st}$'s given $\bm{\mu}$ and
$\sigma_{jj}$'s. 
Consider the following composite log-likelihood:
\begin{align*}
L_{n}(\bm{\mu},\bm{\Sigma})
= &\sum_{i}^{n}\sum_{s<t}^{p}\ell_{st}(y_{is},y_{it}|\sigma_{st}, \mu_{s},\sigma_{ss},
\mu_{t},\sigma_{tt}),
\end{align*}
where 
$\ell_{st}(y_{s},y_{t}|\sigma_{st}, \mu_{s},\sigma_{ss},\mu_{t},\sigma_{tt}) = \log f_{st}(y_{s},y_{t}|\sigma_{st}, \mu_{s},\sigma_{ss},\mu_{t},\sigma_{tt})$
for $1\leq s < t \leq p$. 
This time we will assume prior conditional
independence, i.e., the prior density takes the following factorized form:
\begin{align*}
\pi(\sigma_{12},\ldots,\sigma_{(p-1)p}|\bm{\mu}, \sigma_{11},\ldots,\sigma_{pp})=&
  \prod_{1\leq s < t \leq p}^{p}\pi_{st}(\sigma_{st}|\mu_{s}, \sigma_{ss}, \mu_{t}, \sigma_{tt}).
\end{align*}
Letting
\begin{align*}
&\pi_{n,st}^{*}(\sigma_{st}|\mu_{s}, \sigma_{ss}, \mu_{t}, \sigma_{tt})\\
\propto&\exp\left\{\sum_{i}^{n}\ell_{st}(y_{is},y_{it}|\sigma_{st},
 \mu_{s}, \sigma_{ss}, \mu_{t}, \sigma_{tt})\right\}\pi_{st}(\sigma_{st}|\mu_{s}, \sigma_{ss}, \mu_{t}, \sigma_{tt}),
\end{align*}
we can construct the following conditional posterior density:
\begin{align} \label{eq:mvtlg-tile}
\begin{split}
\pi_{n}^{*}(\sigma_{12},\ldots,\sigma_{(p-1)p}|\bm{\mu}, \sigma_{11},\ldots,\sigma_{pp})=&\prod_{1\leq s < t \leq
  p}^{p}\pi_{n,st}^{*}(\sigma_{st}|\mu_{s}, \sigma_{ss}, \mu_{t}, \sigma_{tt}).
\end{split}
\end{align}
Similarly, we have posterior conditional independence in
$\sigma_{st}$'s given $\bm{\mu}$ and
$\sigma_{jj}$'s. Therefore, sampling from (\ref{eq:mvtlg-tile}) can be
  split into independently sampling from each
  $\pi_{n,st}^{*}(\sigma_{st}|\mu_{s}, \sigma_{ss}, \mu_{t},
  \sigma_{tt})$.

Combining  (\ref{eq:mvtlg-knot}) and  (\ref{eq:mvtlg-tile}) we
construct the following surrogate posterior density:
\[\pi_{n}^{*}(\bm{\mu},\bm{\Sigma})=\pi_{n}^{*}(\sigma_{12},\ldots,\sigma_{(p-1)p}|\bm{\mu}, \sigma_{11},\ldots,\sigma_{pp}) \pi_{n}^{*}(\bm{\mu},\sigma_{11},\ldots,\sigma_{pp}).\]
To summarize, we have proposed a surrogate posterior distribution which
is a multiplication of component posteriors. These component
posteriors are
based on either univariate or bivariate marginal densities. Sampling from this
posterior can be done via a composite sampling strategy that contains
two steps. In the first step, we sample the parameters that
characterize the univariate marginal densities ($\bm{\mu}$ and
$\sigma_{jj}$'s). In the second step, we plug
the previous samples into the conditional densities and
sample those parameters that characterize the pairwise relationship
($\sigma_{st}$'s). The computation of both steps can be
easily parallelized due to the sparse posterior dependence structure. We term
$\pi_{n}^{*}(\bm{\mu},\bm{\Sigma})$ as a {\em Bayesian mosaic}
posterior under model (\ref{eq:mvt-lg-pois}). A formal definition
will follow.

\subsection{Definition of Bayesian Mosaic}
It can be seen that the 
independence structure in (\ref{eq:mvtlg-knot}) relies on the fact that univariate marginal
distributions do not share parameters, and that the 
conditional independence structure in (\ref{eq:mvtlg-tile}) requires
that the parameters characterizing the pairwise relationships
($\sigma_{st}$'s) only appear in one bivariate marginal distribution.
We term the class of data
distributions that satisfy the above conditions as {\em mosaic-type}. Below is a formal definition.

\begin{dfn}\label{def:mosaic-type}
Suppose $p\in\setnnn$ and $\bm{y}_{1},\ldots,\bm{y}_{n}$
are i.i.d. $p$-dimensional data vectors
from distribution $P_{\bm{\theta}}$ with density function
$f(\bm{y}|\bm{\theta})$. Let $\bm{\theta}_{st}$, $1\leq s\leq t \leq
p$, be non-overlapping sub-vectors of $\bm{\theta}$ such that 
\[\bm{\theta}=\big[\bm{\theta}_{12}^{\top},\ldots, \bm{\theta}_{(p-1)p}^{\top}, \bm{\theta}_{11}^{\top},\ldots, \bm{\theta}_{pp}^{\top}\big]^{\top},\]
then the data distribution $P_{\bm{\theta}}$
is mosaic-type if there exists a collection of density functions
$f_{st}$ for $1\leq s \leq t \leq p$ such that
\begin{itemize}
\item[i)] for $j=1,\ldots,p$, the density of the univariate marginal
  distribution for dimension $j$ is
\[f_{jj}(y_{j}|\bm{\theta}_{jj}).\]
\item[ii)] for $1 \leq t < s \leq p$, the density of the bivariate
  marginal data distribution for dimension $s$ and $t$ is
\[f_{st}(y_{s},y_{t}|\bm{\theta}_{st}, \bm{\theta}_{ss}, \bm{\theta}_{tt}).\]
\end{itemize}
\end{dfn}

We term $\bm{\theta}_{jj}$'s as {\em knots} since they are shared among
multiple bivariate marginal distributions. We term
$\bm{\theta}_{st}$'s as {\em tiles} since they only appear in one
bivariate marginal distribution. In the multivariate latent Gaussian
example,
\[\bm{\theta}_{jj}=\begin{brsm}
    \mu_{j} \\
    \sigma_{jj}
\end{brsm}, \quad \bm{\theta}_{st}=\sigma_{st}.\]
We will show that Definition~\ref{def:mosaic-type} provides a
rich class of models later in \S\ref{ssec:richness}. Although
we require $\bm{y}_{1},\ldots,\bm{y}_{n}$
to be independent for now, to ease our analysis of asymptotic
properties, in practice this requirement can
be relaxed. We provide a more general definition in \S\ref{ssec:generalize}.

Before defining {\em Bayesian mosaic}, we will first introduce some
notation. For $j=1,\ldots,p$, define
\begin{align*}
\ell_{jj}(\bm{\theta}_{jj}, y_{j}) =\log
  f_{jj}(y_{j}|\bm{\theta}_{jj}), \quad Q_{n,j}(\bm{\theta}_{jj})=\sum_{i=1}^{n}\ell_{jj}(\bm{\theta}_{jj}, y_{ij}).
\end{align*}
For $1\leq s<t \leq p$, define
\begin{align*}
&\ell_{st}(\bm{\theta}_{st}, \bm{\theta}_{ss}, \bm{\theta}_{tt},
  y_{s},y_{t}) = \log
  f_{st}(y_{s},y_{t}|\bm{\theta}_{st}, \bm{\theta}_{ss},
                  \bm{\theta}_{tt}),\\
&L_{n,st}(\bm{\theta}_{st}, \bm{\theta}_{ss}, \bm{\theta}_{tt})=\sum_{i=1}^{n}\ell_{st}(\bm{\theta}_{st}, \bm{\theta}_{ss}, \bm{\theta}_{tt},
  y_{is},y_{it}).
\end{align*}
The formal definition of {\em Bayesian mosaic} is given below.

\begin{dfn}\label{def:BM}
Under the setup of Definition~\ref{def:mosaic-type} and considering prior densities
$\pi_{jj}(\bm{\theta}_{jj})$ for $j=1,\ldots,p$ and $\pi_{st}(\bm{\theta}_{st}|\bm{\theta}_{ss},\bm{\theta}_{tt})$ for $1 \leq t < s \leq p$, we
introduce the following:
\begin{itemize}
\item[i)] For $j=1,\ldots,p$, the knot marginal for $\bm{\theta}_{jj}$ is
\begin{align*}
\kappa_{n,j}(\bm{\theta}_{jj}) \propto e^{Q_{n,j}(\bm{\theta}_{jj})}\pi_{jj}(\bm{\theta}_{jj}).
\end{align*}
\item[ii)] For $1\leq t < s \leq p$, the tile conditional for
  $\bm{\theta}_{st}$ given
  $\bm{\theta}_{ss}$ and $\bm{\theta}_{tt}$ is
\begin{align*}
\tau_{n,st}(\bm{\theta}_{st}|\bm{\theta}_{ss},\bm{\theta}_{tt}) \propto e^{L_{n,st}(\bm{\theta}_{st}, \bm{\theta}_{ss}, \bm{\theta}_{tt})}\pi_{st}(\bm{\theta}_{st}|\bm{\theta}_{ss},\bm{\theta}_{tt}).
\end{align*}
\end{itemize}
Then we call
\begin{align}\label{eq:BM}
\tilde{\pi}(\bm{\theta}) = \prod_{j}^{p}\kappa_{n,j}(\bm{\theta}_{jj})\prod_{s<t}^{p}\tau_{n,st}(\bm{\theta}_{st}|\bm{\theta}_{ss},\bm{\theta}_{tt})
\end{align}
a Bayesian mosaic posterior under model $P_{\bm{\theta}}$.
\end{dfn}

\begin{figure}
\centering
\includegraphics[width=.5\linewidth]{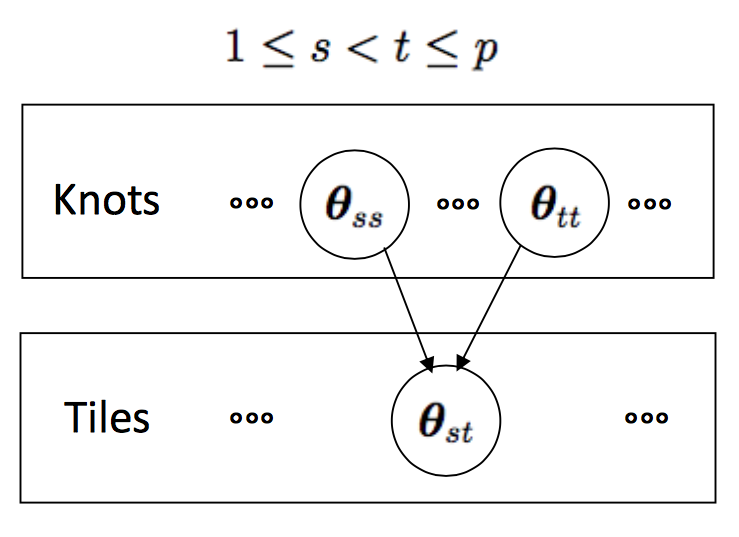}
\caption{DAG representation of a {\em Bayesian mosaic}.}
\label{fig:dag_bayesian_mosaic}
\end{figure}

\subsection{Sampling Bayesian Mosaic} \label{ssec:sampling}
It is easily seen from (\ref{eq:BM}) that the {\em knots} are
marginally independent and the {\em tiles} are conditionally
independent given the {\em knots}. This sparse dependence structure of {\em Bayesian mosaic} 
can be represented by a directed acyclic graph (DAG), as demonstrated
in Figure~\ref{fig:dag_bayesian_mosaic}. Utilizing this structure, we propose a simple parallel sampling strategy
which is summarized in Algorithm~\ref{alg:sampler}, where $M\in\setnnn$
denotes the total number of posterior samples to be collected.

\begin{algorithm}
Step 1
\# on each of the {\em knot marginals} in parallel\\
 \For{$j=1,\ldots,p$}{
     sample $\bm{\theta}_{jj}^{1}, \ldots, \bm{\theta}_{jj}^{M}$ from $\kappa_{n,j}(\bm{\theta}_{jj})$
 }
Step 2
\# on each of the {\em tile conditionals} in parallel\\
\For{$s=2,\ldots,p$}{
  \For{$t=1,\ldots,s-1$}{
    sample $\bm{\theta}_{st}^{m}$ from
    $\tau_{n,st}(\bm{\theta}_{st}|\bm{\theta}_{ss}^{m},\bm{\theta}_{tt}^{m})$,
    for $m=1,\ldots, M$
  }
}
\caption{Parallel sampler for {\em Bayesian mosaic}.}
\label{alg:sampler}
\end{algorithm}

Usually, the {\em knot marginals} and the {\em tile
  conditionals} can not be directly sampled from. We propose to
sample the {\em knot
  marginals} via Metropolis-Hastings (MH) algorithms with the $Q_{n}(\bm{\theta}_{jj})$'s
being evaluated via numerical integration.
Sampling from the {\em tile conditionals} is harder, since the conditional distribution
$\tau_{n,st}(\bm{\theta}_{st}|\bm{\theta}_{ss}^{m},\bm{\theta}_{tt}^{m})$
changes w.r.t. the values of $\bm{\theta}_{ss}^{m}$ and
$\bm{\theta}_{tt}^{m}$. We propose the following three options:
\begin{itemize}
\item[i)] Suppose that the MH sampler on
  $\tau_{n,st}(\bm{\theta}_{st}|\bm{\theta}_{ss}^{m},\bm{\theta}_{tt}^{m})$
  converges rapidly, then for each $m$, one can run the sampler for a
  fixed small number of steps and use the last draw as the sample.
\item[ii)] Suppose that
  $\tau_{n,st}(\bm{\theta}_{st}|\bm{\theta}_{ss}^{m},\bm{\theta}_{tt}^{m})$
  is easy to optimize w.r.t. $\bm{\theta}_{st}$, then one can compute
  the mode and the maximum density value. Then one can either do
  rejection sampling using the maximum density value or obtain the
  Hessian matrix at the mode and approximate the density by its
  Laplace approximation.
\item[iii)] One can directly plug in the posterior means of
the {\em knots} into the {\em tile conditionals} so that they remain
the same across iterations. Simply
substitute
$\tau_{n,st}(\bm{\theta}_{st}|\bm{\theta}_{ss}^{m},\bm{\theta}_{tt}^{m})$
in the second step of Algorithm~\ref{alg:sampler} with
$\tau_{n,st}\left(\bm{\theta}_{st}\big|\frac{1}{M}\sum_{m=1}^{M}\bm{\theta}_{ss}^{m},
  \frac{1}{M}\sum_{m=1}^{M}\bm{\theta}_{tt}^{m}\right)$. 
\end{itemize}

The third option is the fallback plan when the first two are
unavailable. Note that when applying the third option, instead of
sampling from the {\em Bayesian mosaic}, one actually samples from the
following approximation:
\begin{align}\label{eq:plugin-bm}
\prod_{j}^{p}\kappa_{n,j}(\bm{\theta}_{jj})\prod_{s<t}^{p}\tau_{n,st}(\bm{\theta}_{st}|\bm{\theta}^{*}_{ss},\bm{\theta}^{*}_{tt}),
\end{align}
where $\bm{\theta}^{*}_{jj}=\int \kappa_{n,j}(\bm{\theta}_{jj})\intd
\bm{\theta}_{jj}$ is the posterior mean of $\bm{\theta}_{jj}$, for
$j=1,\ldots,p$. 
In \S\ref{ssec:asym-post-mean} we will show that (\ref{eq:plugin-bm})
is still consistent and asymptotically normal in a slightly weaker sense, but will
underestimate the uncertainty compared to the exact {\em Bayesian mosaic}.

\subsection{Handling Parameter Constraints} \label{sec:para-constraint}
In some cases, the model parameters $\bm{\theta}$ live in a constrained
space $\mathcal{T}$. However, the samples from {\em Bayesian mosaic}
do not necessarily also live in this space. For instance, in
(\ref{eq:mvt-lg-pois}), the samples of $\bm{\Sigma}$ from {\em
  Bayesian mosaic} are not guaranteed to be positive
definite. One can easily see this from the fact that the off-diagonal
elements $\sigma_{st}$'s are conditionally independent given the
diagonal elements $\sigma_{jj}$'s. 

To address this, we propose to project the samples from {\em
  Bayesian mosaic} back to the constrained space w.r.t. the Euclidean
distance. Specifically, we solve the
following optimization problem for each sample $\bm{\theta}^{m}$:
\begin{align}
\label{eq:sample-correction}
\tilde{\bm{\theta}}^{m}=\argmin_{\tilde{\bm{\theta}}^{m}\in\mathcal{T}}
  \|\tilde{\bm{\theta}}^{m}-\bm{\theta}^{m}\|.
\end{align}
We term $\tilde{\bm{\theta}}^{m}$'s as the corrected samples from {\em
  Bayesian mosaic}. For many structured constrained
parameter spaces $\mathcal{T}$, (\ref{eq:sample-correction}) has an
analytical solution. For instance, if $\mathcal{T}$ is the cone of
positive definite matrices, then $\tilde{\bm{\theta}}^{m}$ can be
obtained via an eigenvalue decomposition of $\bm{\theta}^{m}$.

In \S\ref{ssec:consistency-normality}, we will prove that the
probability mass of {\em Bayesian mosaic} asymptotically concentrates within a
small neighbourhood of the ``true'' value $\bm{\theta}_{0}$. This
implies that when $n$ is large for many constraints, the majority of the
samples should automatically live inside $\mathcal{T}$ and we only need to
correct the rest. Hence, this correcting step should have minimal impact on the overall
performance.

\subsection{Generalization} \label{ssec:generalize}
In this subsection, we will extend {\em Bayesian mosaic} to dependent
data. We first provide a more general definition of {\em mosaic-type}
data distributions.

\begin{dfn}\label{def:mosaic-type-dependent}
Suppose $p\in\setnnn$ and $\bm{y}_{1},\ldots,\bm{y}_{n}$
are $p$-dimensional data vectors
jointly from distribution $P_{\bm{\theta}}$ with a joint density function
$f(\bm{y}_{1},\ldots,\bm{y}_{n}|\bm{\theta})$. Let $\bm{\theta}_{st}$, $1\leq t\leq s \leq
p$, be non-overlapping sub-vectors of $\bm{\theta}$ such that 
\[\bm{\theta}=\big[\bm{\theta}_{12}^{\top},\ldots, \bm{\theta}_{(p-1)p}^{\top}, \bm{\theta}_{11}^{\top},\ldots, \bm{\theta}_{pp}^{\top}\big]^{\top},\]
then the data distribution $P_{\bm{\theta}}$
is mosaic-type if there exists a collection of density functions
$f_{st}$, $1\leq s \leq t \leq p$ such that
\begin{itemize}
\item[i)] for $j=1,\ldots,p$, the density of the univariate marginal
  distribution for dimension $j$ is
\[f_{jj}(y_{1j},\ldots,y_{nj}|\bm{\theta}_{jj}).\]
\item[ii)] for $1 \leq t < s \leq p$, the density of the bivariate
  marginal data distribution for dimension $s$ and $t$ is
\[f_{st}(y_{1s},\ldots,y_{ns},y_{1t},\ldots,y_{nt}|\bm{\theta}_{st}, \bm{\theta}_{ss}, \bm{\theta}_{tt}).\]
\end{itemize}
\end{dfn}

For $j=1,\ldots,p$, we redefine
$Q_{n,j}(\bm{\theta}_{jj})$ as
\begin{align*}
Q_{n,j}(\bm{\theta}_{jj})=\log f_{jj}(y_{1j},\ldots,y_{nj}|\bm{\theta}_{jj}).
\end{align*}
For $1\leq s<t \leq p$, we redefine
$L_{n,st}(\bm{\theta}_{st}, \bm{\theta}_{ss}, \bm{\theta}_{tt})$ as
\begin{align*}
L_{n,st}(\bm{\theta}_{st}, \bm{\theta}_{ss}, \bm{\theta}_{tt})=\log f_{st}(y_{1s},\ldots,y_{ns},y_{1t},\ldots,y_{nt}|\bm{\theta}_{st}, \bm{\theta}_{ss}, \bm{\theta}_{tt}).
\end{align*}
Then {\em Bayesian mosaic} still follows Definition~\ref{def:BM}.

Under this generalization, one can include random effects and still be able to use {\em Bayesian mosaic}. We will give an
example of such a model in \S\ref{ssec:citation}, where we include
random temporal effects.

\section{Theoretical Analysis} \label{sec:theory}
In this section we will first demonstrate that the {\em mosaic-type}
class contains a rich collection of models. We will then provide 
regularity conditions and prove under these conditions that {\em Bayesian mosaic} is
consistent and asymptotically normal. Moreover, we
will analyze the asymptotic distribution of the {\em tiles}
conditional on the posterior means of the {\em knots}. Finally we will
build a connection between the sampling computational complexity and
the cardinality of the data. We prove the main result and defer other
proofs to the supplement.

\subsection{Richness of the Mosaic-type Distribution Class} \label{ssec:richness}
To evaluate how widely {\em
  Bayesian mosaic} can be applied in practice, it is crucial to
understand how rich the {\em mosaic-type} distribution class is.
The following lemma provides one simple rule to construct new
{\em mosaic-type} distributions from any existing {\em mosaic-type} distributions. With the help of this rule, one can build models
for any type of data with the dependence induced by
latent variables with some underlying {\em mosaic-type} distribution.

\begin{lemma}\label{lem:mosaic-rich}
Suppose that $P_{\bm{\psi}}$ is some data distribution with
density function $f_{0}(\bm{x}|\bm{\psi})$ and consider another data
distribution $P_{\bm{\mu},\bm{\psi}}$ with the following density function:
\begin{align} \label{eq:mosaic-rich}
f(\bm{y}|\bm{\mu},\bm{\psi}) = \int
  f_{0}(\bm{x}|\bm{\psi})\prod_{j=1}^{p}g_{j}(y_{j}|x_{j},\bm{\mu}_{j})\intd \bm{x},
\end{align}
where  $\bm{\mu}=\big(\bm{\mu}_{1}^{\top},\ldots,
\bm{\mu}_{p}^{\top}\big)^{\top}$ and $g_{j}$'s are proper
density functions. If $P_{\bm{\psi}}$ is mosaic-type, so is $P_{\bm{\mu},\bm{\psi}}$.
\end{lemma}

One could choose $P_{\bm{\psi}}$ to be the multivariate Gaussian distribution, and $g_{j}(y_{j}|x_{j},\bm{\mu}_{j})$'s to be any
univariate density. This implies that the {\em mosaic-type}
model class contains the multivariate latent Gaussian models. Note
that besides Gaussian, $P_{\bm{\psi}}$ could also be a Dirichlet or a
multinomial distribution. It is easy to check that both
distributions are {\em mosaic-type}. Moreover, one can construct
arbitrarily complex models by repeatedly
applying Lemma~\ref{lem:mosaic-rich}.

\subsection{Posterior Consistency \& Asymptotic Normality} \label{ssec:consistency-normality}
We start our analysis in a simpler yet
more general setup. Consider $p\in\setnnn$, $d\in\setnnn$ and
$\bm{y}_1,\ldots,\bm{y}_n$ which are i.i.d. $p$-dimensional observations
from distribution $P_{\bm{\theta}}$ possessing a density
$f(\bm{y}|\bm{\theta})$ where $\bm{\theta}\in\mathcal{T}\subset \setr^{d}$. We fix
$\bm{\theta}_0$ to be the ``true value" of the parameters and require that
$\bm{\theta}_0$ is an interior point of $\mathcal{T}$. Consider two non-overlapping
sub-vectors of $\bm{\theta}$, $\bm{\eta}$ and $\bm{\zeta}$, where
$\bm{\eta}$ is $d_{\eta}$-dimensional and $\bm{\zeta}$ is $d_{\zeta}$-dimensional. Let $\bm{\eta}_{0}$
and $\bm{\zeta}_{0}$ be the corresponding ``true values". Consider
two pseudo density functions
$f_{1}(\bm{y}|\bm{\zeta})$ and
$f_{2}(\bm{y}|\bm{\eta},\bm{\zeta})$, which do not have to integrate
to one. In order for proper Bayesian inference, the following
regularity conditions need to hold for both functions. To avoid redundancy, we only define these
conditions for $f_{1}(\bm{y}|\bm{\zeta})$.

\begin{condition} \label{cond:1-1}
The support set $\{\bm{y}: f_{1}(\bm{y}|\bm{\zeta})>0\}$ is the same for all $\bm{\zeta}$.
\end{condition}

\begin{condition} \label{cond:1-2}
Consider $\ell_{1}(\bm{\zeta},
\bm{y})=\log f_{1}(\bm{y}|\bm{\zeta})$. $\ell_{1}(\bm{\zeta},
\bm{y})$ is thrice differentiable with respect to $\bm{\zeta}$ in a
neighborhood $\mathcal{B}_{\bm{\zeta}}(\bm{\zeta}_{0},\delta)$. The
expectations $\mathbb{E}_{\bm{\theta}_{0}}\triangledown_{\bm{\zeta}} \ell_{1}(\bm{\zeta},
\bm{y})$ and $\mathbb{E}_{\bm{\theta}_{0}}\triangledown^{2}_{\bm{\zeta}} \ell_{1}(\bm{\zeta},
\bm{y})$ are both finite and for any {\em multi-index}
$\bm{\alpha}$ for $\bm{\zeta}$ such that $|\bm{\alpha}|=3$, we have
\[\sup_{\bm{\zeta}\in \mathcal{B}_{\bm{\zeta}}(\bm{\zeta}_{0},\delta)} \left|\partial^{\bm{\alpha}}\ell_{1}(\bm{\zeta},
\bm{y})\right| \leq
  M_{\bm{\alpha}}(\bm{y}),\]
and $\mathbb{E}_{\bm{\theta}_{0}}M_{\bm{\alpha}}(\bm{y})<\infty$.
\end{condition}

\begin{condition} \label{cond:1-3}
 Consider $\ell_{1}(\bm{\zeta},
\bm{y})=\log f_{1}(\bm{y}|\bm{\zeta})$. Then
 $\mathbb{E}_{\bm{\theta}_{0}}\triangledown_{\bm{\zeta}} \ell_{1}(\bm{\zeta},
\bm{y}) |_{\bm{\zeta}=\bm{\zeta}_{0}}= \bm{0}$
and 
\begin{align*}
\mathbb{E}_{\bm{\theta}_{0}}\triangledown^{2}_{\bm{\zeta}}\ell_{1}(\bm{\zeta},
\bm{y})
   |_{\bm{\zeta}=\bm{\zeta}_{0}}
= &-\mathbb{E}_{\bm{\theta}_{0}}\big[\triangledown _{\bm{\zeta}} \ell_{1}(\bm{\zeta},
\bm{y})
    \big]\big[\triangledown _{\bm{\zeta}} \ell_{1}(\bm{\zeta},
\bm{y})\big]^{\top}|_{\bm{\zeta}=\bm{\zeta}_{0}}
\end{align*}
Also, the Fisher information
$ -\mathbb{E}_{\bm{\theta}_{0}}\triangledown^{2}_{\bm{\zeta}}\ell_{1}(\bm{\zeta},
\bm{y})
    |_{\bm{\zeta}=\bm{\zeta}_{0}}$
 is positive definite.
\end{condition}

\begin{condition} \label{cond:1-4}
Consider $Q_{n}(\bm{\zeta})=\sum_{i=1}^{n}\log
f_{1}(\bm{y}_{i}|\bm{\zeta})$. For any $\delta>0$, $\exists
\epsilon>0$ such that with
  $P_{\bm{\theta}_{0}}$-probability one
\[\sup_{\bm{\zeta} \notin \mathcal{B}_{\bm{\zeta}}(\bm{\zeta}_{0},\delta)}\frac{1}{n}\big[Q_{n}(\bm{\zeta})
  - Q_{n}(\bm{\zeta}_{0})\big]<-\epsilon\]
for all sufficiently large $n$.
\end{condition}

\begin{condition} \label{cond:1-5}
Consider $Q_{n}(\bm{\zeta})=\sum_{i=1}^{n}\log
f_{1}(\bm{y}_{i}|\bm{\zeta})$ and $\tilde{\bm{\zeta}}_{n}=\argmax_{\bm{\zeta}}Q_{n}(\bm{\zeta})$.
$\tilde{\bm{\zeta}}_{n}$ is consistent at $\bm{\zeta}_{0}$, i.e.,
$\lim_{n\to\infty}\tilde{\bm{\zeta}}_{n}=\bm{\zeta}_{0}$ with $P_{\bm{\theta}_{0}}$-probability one.
\end{condition}

For ease of presentation, we introduce some notation. We define
\begin{align*}
\ell_{1}(\bm{\zeta},
\bm{y}) = \log f_{1}(\bm{y}|\bm{\zeta}), & \quad Q_{n}(\bm{\zeta}) = \sum_{i=1}^{n}\ell_{1}(\bm{\zeta},
\bm{y}_{i}), \\
\ell_{2}(\bm{\eta}, \bm{\zeta},
\bm{y}) = \log f_{2}(\bm{y}|\bm{\eta},\bm{\zeta}), & \quad L_{n}(\bm{\eta},\bm{\zeta}) = \sum_{i=1}^{n}\ell_{2}(\bm{\eta}, \bm{\zeta},
\bm{y}_{i}),\\
\begin{brsm}
    \hat{\bm{\eta}}_{n} \\
    \hat{\bm{\zeta}}_{n}
\end{brsm}
=
  \argmax_{\bm{\eta},\bm{\zeta}}L_{n}(\bm{\eta},\bm{\zeta}), & \quad
  \tilde{\bm{\zeta}}_{n} = \argmax_{\bm{\zeta}}Q_{n}(\bm{\zeta}),
\end{align*}
and
\begin{align*}
\tilde{\bm{I}}_{0}= -\mathbb{E}_{\bm{\theta}_{0}}\triangledown^{2}_{\bm{\zeta}}\ell_{1}(\bm{\zeta},
\bm{y})
    |_{\bm{\theta}=\bm{\theta}_{0}}, & \quad \bm{I}_{0} = -\mathbb{E}_{\bm{\theta}_{0}}\triangledown_{\bm{\eta},\bm{\zeta}}^{2}  \ell_{2}(\bm{\eta}, \bm{\zeta},
\bm{y})
    |_{\bm{\theta}=\bm{\theta}_{0}}.
\end{align*}

Given a prior density $\pi(\bm{\zeta})$, consider the following posterior density of $\bm{\zeta}$:
\[\kappa_{n}\big(\bm{\zeta}\big)\propto
  \exp\{Q_{n}(\bm{\zeta})\}\pi(\bm{\zeta}).\]
Introducing $\bm{\omega}=\sqrt{n}(\bm{\zeta}-\tilde{\bm{\zeta}}_{n})$, 
the posterior density of $\bm{\omega}$ is
\[\pi^{*}_{n,1}(\bm{\omega})=\frac{1}{\sqrt{n}}\kappa_{n}\bigg(\frac{\bm{\omega}}{\sqrt{n}}+\tilde{\bm{\zeta}}_{n}\bigg).\]
The following lemmas state that $\pi^{*}_{n,1}(\bm{\omega})$ is
asymptotically normal under the specified conditions. This lemma is basically the multivariate version of
Theorem 4.2 in \citet{ghosh2007introduction}, hence the proof will be omitted.

\begin{lemma} \label{lem:knot-normality}
Suppose that conditions~\ref{cond:1-1}-\ref{cond:1-5} hold for
$f_{1}(\bm{y}|\bm{\zeta})$, and the
prior density $\pi(\bm{\zeta})$ is continuous and positive
at $\bm{\zeta}_{0}$, then with $P_{\bm{\theta}_{0}}$-probability one
\begin{align}
\lim_{n\to\infty}\int\big|\pi^{*}_{n,1}(\bm{\omega})-\phi\big(\bm{\omega}\big|\bm{0},
  \tilde{\bm{I}}_{0}^{-1}\big)\big|\intd\bm{\omega}= 0. \label{eq:knot-normality}
\end{align}
\end{lemma}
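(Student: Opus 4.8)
The plan is to follow the classical Bernstein–von Mises argument adapted to the misspecified (pseudo-likelihood) setting, exactly as in Theorem 4.2 of \citet{ghosh2007introduction}, but carried out componentwise so it works in $\setr^{d_{\zeta}}$. First I would write the posterior density of the local parameter $\bm{\omega}=\sqrt{n}(\bm{\zeta}-\tilde{\bm{\zeta}}_{n})$ as
\[
\pi^{*}_{n,1}(\bm{\omega}) = \frac{C_{n}^{-1}}{\sqrt{n}}\,\exp\!\big\{Q_{n}(\tilde{\bm{\zeta}}_{n}+\bm{\omega}/\sqrt{n}) - Q_{n}(\tilde{\bm{\zeta}}_{n})\big\}\,\pi\big(\tilde{\bm{\zeta}}_{n}+\bm{\omega}/\sqrt{n}\big),
\]
where $C_{n}$ is the normalizing constant, and then Taylor-expand $Q_{n}$ about its maximizer $\tilde{\bm{\zeta}}_{n}$. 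Since $\triangledown_{\bm{\zeta}}Q_{n}(\tilde{\bm{\zeta}}_{n})=\bm{0}$ (interior maximum, guaranteed for large $n$ by Conditions~\ref{cond:1-5} and \ref{cond:1-2}), the linear term drops and one gets
\[
Q_{n}(\tilde{\bm{\zeta}}_{n}+\bm{\omega}/\sqrt{n}) - Q_{n}(\tilde{\bm{\zeta}}_{n}) = -\tfrac{1}{2}\bm{\omega}^{\top}\Big(-\tfrac{1}{n}\triangledown_{\bm{\zeta}}^{2}Q_{n}(\tilde{\bm{\zeta}}_{n})\Big)\bm{\omega} + R_{n}(\bm{\omega}),
\]
with $R_{n}$ the third-order remainder. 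By the uniform third-derivative bound in Condition~\ref{cond:1-2} together with the strong law of large numbers, $M_{n}:=-\frac{1}{n}\triangledown_{\bm{\zeta}}^{2}Q_{n}(\tilde{\bm{\zeta}}_{n}) \to \tilde{\bm{I}}_{0}$ a.s. (using also $\tilde{\bm{\zeta}}_{n}\to\bm{\zeta}_{0}$ from Condition~\ref{cond:1-5} and the definition of $\tilde{\bm{I}}_{0}$ via Condition~\ref{cond:1-3}), and $\tilde{\bm{I}}_{0}$ is positive definite.

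Next I would establish pointwise convergence of the unnormalized integrand to $\exp\{-\tfrac12\bm{\omega}^{\top}\tilde{\bm{I}}_{0}\bm{\omega}\}$: the quadratic term converges because $M_{n}\to\tilde{\bm{I}}_{0}$; the remainder $R_{n}(\bm{\omega})$ is $O\big(\|\bm{\omega}\|^{3}/\sqrt{n}\cdot \frac1n\sum_i M_{\bm\alpha}(\bm y_i)\big)\to 0$ for fixed $\bm{\omega}$ since $\frac1n\sum_i M_{\bm\alpha}(\bm y_i)\to \mathbb{E}_{\bm\theta_0}M_{\bm\alpha}(\bm y)<\infty$ a.s.; and the prior factor $\pi(\tilde{\bm{\zeta}}_{n}+\bm{\omega}/\sqrt{n})\to\pi(\bm{\zeta}_{0})>0$ by continuity and positivity. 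To promote this to $L^{1}$ convergence (which after normalization is exactly \eqref{eq:knot-normality}, by Scheff\'e's lemma), I would split the integral $\int|\pi^{*}_{n,1}(\bm{\omega})-\phi(\bm{\omega}|\bm{0},\tilde{\bm{I}}_{0}^{-1})|\,\intd\bm{\omega}$ over three regions: a large compact ball $\|\bm{\omega}\|\le A$, a moderate shell $A<\|\bm{\omega}\|\le \delta\sqrt{n}$, and the far tail $\|\bm{\omega}\|>\delta\sqrt{n}$. On the compact ball use the dominated convergence theorem with a Gaussian-type dominating function coming from a uniform quadratic lower bound on $-Q_n$ near $\tilde{\bm\zeta}_n$; on the moderate shell use the negative-definiteness of $M_n$ and the cubic remainder bound to dominate the integrand by $\exp\{-c\|\bm\omega\|^2\}$ for some $c>0$ and large $n$, making the contribution arbitrarily small by choosing $A$ large; on the far tail invoke Condition~\ref{cond:1-4}, which gives $Q_n(\bm\zeta)-Q_n(\tilde{\bm\zeta}_n)\le Q_n(\bm\zeta)-Q_n(\bm\zeta_0) < -n\epsilon$ outside $\mathcal{B}_{\bm\zeta}(\bm\zeta_0,\delta)$ for large $n$, so that region contributes $e^{-n\epsilon}$ times a prior mass, which vanishes. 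Finally, these bounds also show $C_{n}\to (2\pi)^{d_\zeta/2}|\tilde{\bm I}_0|^{-1/2}\pi(\bm\zeta_0)$, so the normalization is handled consistently.

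The main obstacle is the uniform control of the integrand over the moderate-to-large $\|\bm{\omega}\|$ range: the pointwise Taylor argument only works on compacta, and one must glue together the local quadratic behavior (valid in a shrinking-to-fixed $\delta$-ball of $\bm\zeta_0$, i.e.\ $\|\bm\omega\|\lesssim\delta\sqrt n$) with the global separation from Condition~\ref{cond:1-4}, ensuring the two regimes overlap for all large $n$ and that the dominating functions are integrable uniformly in $n$. This is precisely the delicate step in the Ghosh--Ramamoorthi proof, and since the excerpt explicitly says this lemma is the multivariate version of their Theorem 4.2, I would simply cite that the same three-region splitting goes through verbatim with $\bm\zeta\in\setr^{d_\zeta}$ once Conditions~\ref{cond:1-1}--\ref{cond:1-5} are in force, and omit the routine but lengthy estimates — which is exactly what the authors announce they will do.
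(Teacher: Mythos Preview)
Your proposal is correct and takes essentially the same approach as the paper: the authors explicitly state that this lemma is the multivariate version of Theorem~4.2 in \citet{ghosh2007introduction} and omit the proof, and your sketch is precisely the classical three-region Bernstein--von Mises argument underlying that theorem. Your outline of the Taylor expansion, the pointwise convergence, and the compact/moderate/tail splitting is accurate and matches the machinery the paper later deploys in full for Lemma~\ref{lem:bi-dist-dct} and Theorem~\ref{thm:joint-normality}.
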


With a prior density $\pi(\bm{\eta}|\bm{\zeta})$, consider the
following conditional
posterior density of $\bm{\eta}$ given $\bm{\zeta}$:
\[\tau_{n}(\bm{\eta}|\bm{\zeta})\propto \exp\big\{L_{n}\big(\bm{\eta}, \bm{\zeta}\big) - L_{n}\big(\hat{\bm{\eta}}_{n}, \hat{\bm{\zeta}}_{n}\big)\big\}\pi(\bm{\eta}|\bm{\zeta}).\]
Introducing $\bm{t}=\sqrt{n}\big(\bm{\eta}-\hat{\bm{\eta}}_{n}\big)$
and $\bm{r}=\sqrt{n}\big(\bm{\zeta}-\hat{\bm{\zeta}}_{n}\big)$,
the conditional posterior density of $\bm{t}$ given $\bm{r}$ can be
written as
\begin{align*}
\pi^{*}_{n,2}(\bm{t}|\bm{r})=&a_{n}^{-1}(\bm{r})\exp\big\{L_{n}\big(\hat{\bm{\eta}}_{n}+\bm{t}/\sqrt{n}, \hat{\bm{\zeta}}_{n}+\bm{r}/\sqrt{n}\big) - L_{n}\big(\hat{\bm{\eta}}_{n}, \hat{\bm{\zeta}}_{n}\big)\big\}
\\
& \qquad \times\pi\big(\hat{\bm{\eta}}_{n}+\bm{t}/\sqrt{n}\big|\hat{\bm{\zeta}}_{n}+\bm{r}/\sqrt{n}\big)
\end{align*}
with $a_{n}(\bm{r})$ being the normalizing constant. We
define
\begin{align*}
\bm{I}^{11}_{0} = -\mathbb{E}_{\bm{\theta}_{0}}\triangledown_{\bm{\eta}}^{2}  \ell_{2}(\bm{\eta}, \bm{\zeta},
\bm{y})
    |_{\bm{\theta}=\bm{\theta}_{0}}, &\quad \bm{I}^{12}_{0} = -\mathbb{E}_{\bm{\theta}_{0}}\triangledown_{\bm{\zeta}} \triangledown_{\bm{\eta}} \ell_{2}(\bm{\eta}, \bm{\zeta},
\bm{y})
    |_{\bm{\theta}=\bm{\theta}_{0}}, \\
\bm{I}^{21}_{0} = -\mathbb{E}_{\bm{\theta}_{0}}\triangledown_{\bm{\eta}} \triangledown_{\bm{\zeta}} \ell_{2}(\bm{\eta}, \bm{\zeta},
\bm{y})
    |_{\bm{\theta}=\bm{\theta}_{0}}, &\quad \bm{I}^{22}_{0} = -\mathbb{E}_{\bm{\theta}_{0}}\triangledown_{\bm{\zeta}}^{2}  \ell_{2}(\bm{\eta}, \bm{\zeta},
\bm{y})
    |_{\bm{\theta}=\bm{\theta}_{0}}.
\end{align*}
It is easily seen that
$\bm{I}_{0}=\begin{brsm}
\bm{I}^{11}_{0} & \bm{I}^{12}_{0} \\
\bm{I}^{21}_{0} & \bm{I}^{22}_{0}
\end{brsm}$.

\begin{theorem} \label{thm:joint-normality}
Suppose that the conditions for Lemma~\ref{lem:knot-normality}
hold, conditions~\ref{cond:1-1}-\ref{cond:1-5} hold for
$f_{2}(\bm{y}|\bm{\eta},\bm{\zeta})$ and that the
prior density $\pi(\bm{\eta}|\bm{\zeta})$ is continuous and positive
at $\begin{brsm}
\bm{\eta}_{0} \\
    \bm{\zeta}_{0}
\end{brsm}$,
then with $P_{\bm{\theta}_{0}}$-probability one
\begin{align}
\lim_{n\to\infty}\int \int \left|\pi_{n}^{*}(\bm{t},\bm{r})-\phi\left(\bm{t}\middle|-\big(\bm{I}^{11}_{0}\big)^{-1}\bm{I}^{12}_{0}\bm{r},\big(\bm{I}^{11}_{0}\big)^{-1}\right) \phi\left(\bm{r}\middle|\bm{\mu}_{n},
  \tilde{\bm{I}}_{0}^{-1}\right)\right|\intd \bm{r} \intd\bm{t}= 0, \label{eq:joint-normality}
\end{align}
where $\pi_{n}^{*}(\bm{t},\bm{r})=\pi_{n,2}^{*}(\bm{t}|\bm{r}) \pi_{n,1}^{*}(\bm{r})$ is the joint
posterior density of $\bm{t}$ and $\bm{r}$ and 
$\bm{\mu}_{n} =
\sqrt{n}\big(\tilde{\bm{\zeta}}_{n}-\hat{\bm{\zeta}}_{n}\big)$.
\end{theorem}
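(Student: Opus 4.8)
The plan is to split the joint $L^1$ discrepancy in (\ref{eq:joint-normality}) into a \emph{marginal} part, which is exactly the content of Lemma~\ref{lem:knot-normality} applied to $f_1$ after a translation, and a \emph{conditional} part, which is a Bernstein--von Mises statement for the tile pseudo-likelihood $f_2$ with the knot $\bm{\zeta}$ held near $\bm{\zeta}_0$; the two are then glued together using that the knot marginal concentrates. Throughout, restrict to the $P_{\bm{\theta}_0}$-probability-one event on which the conclusion of Lemma~\ref{lem:knot-normality} holds for $f_1$, on which $\tilde{\bm{\zeta}}_n,\hat{\bm{\eta}}_n,\hat{\bm{\zeta}}_n$ are consistent (Condition~\ref{cond:1-5}), on which the strong laws $\tfrac1n\triangledown^2_{\bm{\eta},\bm{\zeta}}L_n\to-\bm{I}_0$ and $\tfrac1n\sum_i M_{\bm{\alpha}}(\bm{y}_i)\to\mathbb{E}_{\bm{\theta}_0}M_{\bm{\alpha}}(\bm{y})<\infty$ hold uniformly on a neighbourhood of $(\bm{\eta}_0,\bm{\zeta}_0)$ (Condition~\ref{cond:1-2}), and on which the separation estimate of Condition~\ref{cond:1-4} holds for $f_2$. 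Note $\bm{I}_0\succ0$ by Condition~\ref{cond:1-3}, hence its principal block $\bm{I}_0^{11}\succ0$, so all Gaussians below are non-degenerate. Since $\bm{r}=\sqrt{n}(\bm{\zeta}-\hat{\bm{\zeta}}_n)=\bm{\omega}+\bm{\mu}_n$ is a deterministic translate of $\bm{\omega}=\sqrt{n}(\bm{\zeta}-\tilde{\bm{\zeta}}_n)$, the unit-Jacobian change of variables gives
\[\int\big|\pi^{*}_{n,1}(\bm{r})-\phi\big(\bm{r}\big|\bm{\mu}_n,\tilde{\bm{I}}_0^{-1}\big)\big|\intd\bm{r}=\int\big|\pi^{*}_{n,1}(\bm{\omega})-\phi\big(\bm{\omega}\big|\bm{0},\tilde{\bm{I}}_0^{-1}\big)\big|\intd\bm{\omega}\longrightarrow0\]
by Lemma~\ref{lem:knot-normality}, where we abuse notation by writing $\pi^{*}_{n,1}$ for the density of $\bm{r}$ on the left and for that of $\bm{\omega}$ on the right.

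Write $g(\bm{t}\mid\bm{r})=\phi\big(\bm{t}\big|-(\bm{I}^{11}_{0})^{-1}\bm{I}^{12}_{0}\bm{r},(\bm{I}^{11}_{0})^{-1}\big)$ and $h_n(\bm{r})=\int\big|\pi^{*}_{n,2}(\bm{t}\mid\bm{r})-g(\bm{t}\mid\bm{r})\big|\intd\bm{t}\le2$. Using $\int g(\bm{t}\mid\bm{r})\intd\bm{t}=1$ and the triangle inequality, the double integral in (\ref{eq:joint-normality}) is bounded by $\int h_n(\bm{r})\pi^{*}_{n,1}(\bm{r})\intd\bm{r}$ plus $\int|\pi^{*}_{n,1}(\bm{r})-\phi(\bm{r}\mid\bm{\mu}_n,\tilde{\bm{I}}_0^{-1})|\intd\bm{r}$, the latter vanishing by the previous display. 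Substituting $\bm{r}=\bm{\omega}+\bm{\mu}_n$ and again comparing $\pi^{*}_{n,1}$ with its Gaussian limit,
\[\int h_n(\bm{r})\pi^{*}_{n,1}(\bm{r})\intd\bm{r}\le\int h_n(\bm{\omega}+\bm{\mu}_n)\,\phi\big(\bm{\omega}\big|\bm{0},\tilde{\bm{I}}_0^{-1}\big)\intd\bm{\omega}+2\int\big|\pi^{*}_{n,1}(\bm{\omega})-\phi\big(\bm{\omega}\big|\bm{0},\tilde{\bm{I}}_0^{-1}\big)\big|\intd\bm{\omega},\]
so by dominated convergence (the bound $h_n\le2$ is integrable against a Gaussian) it suffices to prove, for each fixed $\bm{\omega}$, that $h_n(\bm{\omega}+\bm{\mu}_n)\to0$.

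For the remaining pointwise claim, fix $\bm{\omega}$, set $\bm{r}=\bm{\omega}+\bm{\mu}_n$ and $\bm{\zeta}^{*}=\hat{\bm{\zeta}}_n+\bm{r}/\sqrt{n}=\tilde{\bm{\zeta}}_n+\bm{\omega}/\sqrt{n}\to\bm{\zeta}_0$. Because $\triangledown_{\bm{\eta},\bm{\zeta}}L_n$ vanishes at $(\hat{\bm{\eta}}_n,\hat{\bm{\zeta}}_n)$, a third-order Taylor expansion gives, with $\bm{\eta}=\hat{\bm{\eta}}_n+\bm{t}/\sqrt{n}$ and $\bm{\xi}_n$ on the connecting segment,
\[L_n\big(\hat{\bm{\eta}}_n+\bm{t}/\sqrt{n},\bm{\zeta}^{*}\big)-L_n\big(\hat{\bm{\eta}}_n,\hat{\bm{\zeta}}_n\big)=\tfrac{1}{2n}\big(\bm{t}^{\top},\bm{r}^{\top}\big)\,\triangledown^{2}_{\bm{\eta},\bm{\zeta}}L_n(\bm{\xi}_n)\,\big(\bm{t}^{\top},\bm{r}^{\top}\big)^{\top}+R_n(\bm{t},\bm{r}).\]
By Condition~\ref{cond:1-2} and the strong law, $\tfrac1n\triangledown^2_{\bm{\eta},\bm{\zeta}}L_n\to-\bm{I}_0$ uniformly near $(\bm{\eta}_0,\bm{\zeta}_0)$, and the cubic remainder obeys $|R_n(\bm{t},\bm{r})|\le Cn^{-1/2}$ times a polynomial in $(\|\bm{t}\|,\|\bm{r}\|)$; since $\|\bm{r}\|=\|\bm{\omega}+\bm{\mu}_n\|$ grows at most like $\sqrt{\log\log n}$ on a.e.\ path, $R_n(\bm{t},\bm{r})\to0$ for every $\bm{t}$. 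Hence the exponent converges on $\bm{t}$-compacta to $-\tfrac12\bm{t}^{\top}\bm{I}^{11}_{0}\bm{t}-\bm{t}^{\top}\bm{I}^{12}_{0}\bm{r}$ up to a $\bm{t}$-free term, and the prior factor $\pi(\hat{\bm{\eta}}_n+\bm{t}/\sqrt{n}\mid\bm{\zeta}^{*})\to\pi(\bm{\eta}_0\mid\bm{\zeta}_0)>0$; completing the square in $\bm{t}$ identifies the pointwise limit of the (unnormalized) integrand with $\exp\{-\tfrac12(\bm{t}+(\bm{I}^{11}_{0})^{-1}\bm{I}^{12}_{0}\bm{r})^{\top}\bm{I}^{11}_{0}(\bm{t}+(\bm{I}^{11}_{0})^{-1}\bm{I}^{12}_{0}\bm{r})\}$, matching $g(\bm{t}\mid\bm{r})$ after normalization. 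To convert this into convergence of $h_n(\bm{r})$, invoke Condition~\ref{cond:1-4} for $f_2$: for any $\delta>0$ there is $\epsilon>0$ with $\tfrac1n[L_n(\bm{\eta},\bm{\zeta})-L_n(\hat{\bm{\eta}}_n,\hat{\bm{\zeta}}_n)]<-\epsilon$ uniformly over $\|(\bm{\eta},\bm{\zeta})-(\bm{\eta}_0,\bm{\zeta}_0)\|\ge\delta$ for large $n$; restricted to $\bm{\zeta}=\bm{\zeta}^{*}$ this makes the mass of $\tau_{n}(\cdot\mid\bm{\zeta}^{*})$ on $\|\bm{\eta}-\bm{\eta}_0\|\ge\delta$ exponentially small, so the part of the $h_n$-integral over $\|\bm{t}\|\ge\delta\sqrt{n}$ is negligible, while on $\|\bm{t}\|\le\delta\sqrt{n}$ (with $\delta$ small, so the quadratic term dominates) the integrand is bounded by an integrable Gaussian envelope. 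A Scheffé-type argument (pointwise convergence of densities together with convergence of the normalizing constant $a_n(\bm{r})$ obtained from the same expansion) then yields $h_n(\bm{\omega}+\bm{\mu}_n)\to0$, completing the proof.

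\textbf{Main obstacle.} The delicate point is uniformity in the conditioning value: both the target $g(\bm{t}\mid\bm{r})$ and the true conditional posterior mode $\sqrt{n}(\hat{\bm{\eta}}_n(\bm{\zeta}^{*})-\hat{\bm{\eta}}_n)$ are moving targets that track $\bm{\mu}_n$, which along a.e.\ sample path does not remain bounded. Two things must be reconciled. First, the quadratic expansion must be accurate enough that $\sqrt{n}(\hat{\bm{\eta}}_n(\bm{\zeta}^{*})-\hat{\bm{\eta}}_n)+(\bm{I}^{11}_{0})^{-1}\bm{I}^{12}_{0}\bm{r}\to0$ even while $\|\bm{r}\|\to\infty$ slowly; this is exactly where the $\|\bm{r}\|^{2}/\sqrt{n}\to0$ bound (hence the law-of-the-iterated-logarithm control on $\bm{\mu}_n$) is needed, and is the reason the cubic remainder cannot simply be ignored. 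Second, Condition~\ref{cond:1-4}, stated for a single-argument pseudo-likelihood, must be transported into a tail estimate for the conditional $\tau_{n}(\cdot\mid\bm{\zeta}^{*})$ that is uniform for $\bm{\zeta}^{*}$ in a shrinking neighbourhood of $\bm{\zeta}_0$, so that plugging in the random $\bm{\zeta}^{*}$ from the first (knot) sampling stage does not introduce an error that fails to vanish. Both steps are routine in spirit but must be carried out carefully so the two-stage ``knots then tiles'' structure does not leak non-vanishing error.
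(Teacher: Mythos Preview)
Your decomposition into a marginal piece (Lemma~\ref{lem:knot-normality} after translation) and a conditional piece $\int h_n(\bm{r})\,\pi^{*}_{n,1}(\bm{r})\,\intd\bm{r}$ matches the paper's Step~1 exactly. Where you diverge is in handling the conditional piece. After the same reduction, the paper integrates $h_n(\bm{r})$ against the limiting Gaussian $\phi(\bm{r}\mid\bm{\mu}_n,\tilde{\bm{I}}_0^{-1})$ and then splits $\bm{r}$-space at the \emph{expanding} threshold $\|\bm{r}\|=\sqrt{n}\delta_r$: the outer region contributes at most $2\,\Phi(\|\bm{\zeta}-\hat{\bm{\zeta}}_n\|>\delta_r\mid\tilde{\bm{\zeta}}_n,\tilde{\bm{I}}_0^{-1}/n)\to0$ using only consistency of $\tilde{\bm{\zeta}}_n,\hat{\bm{\zeta}}_n$; on the inner region an auxiliary lemma (Lemma~\ref{lem:bi-dist-dct}) supplies $\pi^{*}_{n,2}(\bm{t}\mid\bm{r})\to g(\bm{t}\mid\bm{r})$ for each \emph{fixed} pair $(\bm{t},\bm{r})$ together with a uniform exponential tail bound for $\|\bm{t}\|\ge\sqrt{n}\delta_t$. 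The paper never needs any growth rate for $\bm{\mu}_n$---only $\bm{\mu}_n/\sqrt{n}\to0$ a.s., which is immediate from Condition~\ref{cond:1-5}.

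Your recentring $\bm{r}=\bm{\omega}+\bm{\mu}_n$ buys a fixed reference measure $\phi(\bm{\omega}\mid\bm0,\tilde{\bm{I}}_0^{-1})$ and hence a clean DCT, but forces the pointwise claim $h_n(\bm{\omega}+\bm{\mu}_n)\to0$ with $\bm{r}_n=\bm{\omega}+\bm{\mu}_n$ \emph{drifting}. As you note, killing the cubic remainder then requires $\|\bm{\mu}_n\|^3/\sqrt{n}\to0$ a.s., for which you invoke LIL. This is plausible (the proof of Lemma~\ref{lem:mle-dif} decomposes $\bm{\mu}_n$ into a normalized centered i.i.d.\ sum plus a.s.-vanishing factors times other such sums, so LIL transfers), but it is extra work absent from the paper, and must be checked under only the second-moment information in Condition~\ref{cond:1-3}. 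A second subtlety in your route: both $\pi^{*}_{n,2}(\,\cdot\mid\bm{r}_n)$ and the target $g(\,\cdot\mid\bm{r}_n)$ have centres in $\bm{t}$ that drift with $n$, so a direct Scheff\'e step in the $\bm{t}$-variable is not available; you would need a further recentring $\bm{u}=\bm{t}+(\bm{I}^{11}_0)^{-1}\bm{I}^{12}_0\bm{r}_n$ before Scheff\'e applies, and this again feeds on the same growth control. None of this is fatal, but the paper's expanding-region splitting is the more economical path precisely because it sidesteps any quantitative estimate on $\bm{\mu}_n$.
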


\begin{corollary} \label{coro:joint-consistency}
Under the same setup in Theorem~\ref{thm:joint-normality},
$\tau_{n}\left(\bm{\eta}|\bm{\zeta}\right) \kappa_{n}\left(\bm{\zeta}\right)$ is consistent at
$\bm{\eta}_{0}$ and
$\bm{\zeta}_{0}$.
\end{corollary}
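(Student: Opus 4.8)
The plan is to obtain consistency as a direct consequence of the total-variation (equivalently, $L^{1}$) convergence established in Theorem~\ref{thm:joint-normality}, after transporting it from the rescaled coordinates back to $(\bm{\eta},\bm{\zeta})$. Recall that $\bm{t}=\sqrt{n}(\bm{\eta}-\hat{\bm{\eta}}_{n})$, $\bm{r}=\sqrt{n}(\bm{\zeta}-\hat{\bm{\zeta}}_{n})$, and that $\tau_{n}(\bm{\eta}|\bm{\zeta})\kappa_{n}(\bm{\zeta})$ is precisely the joint law of $(\bm{\eta},\bm{\zeta})$ whose image under this affine change of variables has density $\pi_{n}^{*}(\bm{t},\bm{r})$. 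Since $\int\tau_{n}(\bm{\eta}|\bm{\zeta})\intd\bm{\eta}=1$ for every $\bm{\zeta}$, the $\bm{\zeta}$-marginal of this joint law is exactly $\kappa_{n}$, so it suffices to establish the single statement: for every $\delta>0$, $\int\!\int_{\|(\bm{\eta},\bm{\zeta})-(\bm{\eta}_{0},\bm{\zeta}_{0})\|\geq\delta}\tau_{n}(\bm{\eta}|\bm{\zeta})\kappa_{n}(\bm{\zeta})\intd\bm{\eta}\intd\bm{\zeta}\to0$ with $P_{\bm{\theta}_{0}}$-probability one; this yields consistency jointly, and hence at each of $\bm{\eta}_{0}$ and $\bm{\zeta}_{0}$.

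First I would fix a probability-one event on which Condition~\ref{cond:1-5} gives $(\hat{\bm{\eta}}_{n},\hat{\bm{\zeta}}_{n})\to(\bm{\eta}_{0},\bm{\zeta}_{0})$ and $\tilde{\bm{\zeta}}_{n}\to\bm{\zeta}_{0}$, on which the conclusion of Theorem~\ref{thm:joint-normality} holds, and on which (intersecting over a sequence $\delta_{k}\downarrow0$) all limits below hold simultaneously. On this event $\bm{\mu}_{n}/\sqrt{n}=\tilde{\bm{\zeta}}_{n}-\hat{\bm{\zeta}}_{n}\to\bm{0}$ — note that no convergence rate is needed here. Consequently, since $\|(\bm{\eta},\bm{\zeta})-(\bm{\eta}_{0},\bm{\zeta}_{0})\|\leq\|(\hat{\bm{\eta}}_{n},\hat{\bm{\zeta}}_{n})-(\bm{\eta}_{0},\bm{\zeta}_{0})\|+\|(\bm{t},\bm{r})\|/\sqrt{n}$ and the first term is eventually below $\delta/2$, the set $\{\|(\bm{\eta},\bm{\zeta})-(\bm{\eta}_{0},\bm{\zeta}_{0})\|\geq\delta\}$ is, for $n$ large, contained in $\{\|(\bm{t},\bm{r})\|\geq\sqrt{n}\delta/2\}$ in $(\bm{t},\bm{r})$ coordinates. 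Thus it is enough to bound $\int\!\int_{\|(\bm{t},\bm{r})\|\geq\sqrt{n}\delta/2}\pi_{n}^{*}(\bm{t},\bm{r})\intd\bm{t}\intd\bm{r}$, and by the $L^{1}$ bound of Theorem~\ref{thm:joint-normality} this differs from the corresponding integral of the Gaussian product $\phi\big(\bm{t}\big|{-}(\bm{I}^{11}_{0})^{-1}\bm{I}^{12}_{0}\bm{r},(\bm{I}^{11}_{0})^{-1}\big)\,\phi\big(\bm{r}\big|\bm{\mu}_{n},\tilde{\bm{I}}_{0}^{-1}\big)$ by a quantity that vanishes.

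The remaining task is to show this Gaussian product assigns vanishing mass to $\{\|(\bm{t},\bm{r})\|\geq\sqrt{n}\delta/2\}$. Using $\|(\bm{t},\bm{r})\|^{2}=\|\bm{t}\|^{2}+\|\bm{r}\|^{2}$, I would split this set into $\{\|\bm{r}\|\geq c\sqrt{n}\}$ and $\{\|\bm{t}\|\geq c\sqrt{n}\}$ with $c=\delta/(2\sqrt{2})$, and recall the factorization of the product into the marginal law $\bm{r}\sim N(\bm{\mu}_{n},\tilde{\bm{I}}_{0}^{-1})$ and the conditional law $\bm{t}\,|\,\bm{r}\sim N\big({-}(\bm{I}^{11}_{0})^{-1}\bm{I}^{12}_{0}\bm{r},(\bm{I}^{11}_{0})^{-1}\big)$. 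On $\{\|\bm{r}\|\geq c\sqrt{n}\}$, integrating out $\bm{t}$ leaves $\int_{\|\bm{r}\|\geq c\sqrt{n}}\phi(\bm{r}|\bm{\mu}_{n},\tilde{\bm{I}}_{0}^{-1})\intd\bm{r}$; since $\|\bm{\mu}_{n}\|=o(\sqrt{n})$ this event is eventually inside $\{\|\bm{r}-\bm{\mu}_{n}\|\geq c\sqrt{n}/2\}$, whose probability under a Gaussian with the fixed covariance $\tilde{\bm{I}}_{0}^{-1}$ tends to $0$. For $\{\|\bm{t}\|\geq c\sqrt{n}\}$ I would use $\|(\bm{I}^{11}_{0})^{-1}\bm{I}^{12}_{0}\bm{r}\|\leq\|(\bm{I}^{11}_{0})^{-1}\bm{I}^{12}_{0}\|_{\textup{op}}\,\|\bm{r}\|$ to see that this event is contained in $\{\|\bm{r}\|\geq c'\sqrt{n}\}\cup\{\|\bm{t}+(\bm{I}^{11}_{0})^{-1}\bm{I}^{12}_{0}\bm{r}\|\geq c''\sqrt{n}\}$ for suitable constants; the first piece is handled as above, and on the second, the centered variable $\bm{t}+(\bm{I}^{11}_{0})^{-1}\bm{I}^{12}_{0}\bm{r}$ is, conditionally on $\bm{r}$, Gaussian with mean $\bm{0}$ and fixed covariance $(\bm{I}^{11}_{0})^{-1}$, so its conditional tail probability is $o(1)$ uniformly in $\bm{r}$ and integrating against the $\bm{r}$-marginal preserves this. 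Combining the two pieces gives the claim.

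The main obstacle is precisely that the Gaussian limit in Theorem~\ref{thm:joint-normality} is not centered at the origin: its $\bm{r}$-block sits at the possibly diverging point $\bm{\mu}_{n}=\sqrt{n}(\tilde{\bm{\zeta}}_{n}-\hat{\bm{\zeta}}_{n})$ and its $\bm{t}$-block at the $\bm{r}$-dependent point ${-}(\bm{I}^{11}_{0})^{-1}\bm{I}^{12}_{0}\bm{r}$, so one cannot simply quote concentration of a fixed Gaussian. The observation that unlocks the argument is that after dividing by $\sqrt{n}$ both centers collapse — $\bm{\mu}_{n}/\sqrt{n}\to\bm{0}$ follows from mere consistency of $\tilde{\bm{\zeta}}_{n}$ and $\hat{\bm{\zeta}}_{n}$ with no rate, and the $\bm{r}$-dependent center is dominated by a fixed operator norm times $\|\bm{r}\|$, which is itself $O(\sqrt{n})$ only on a set of vanishing Gaussian mass. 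Beyond this, the only care required is bookkeeping of the ``$P_{\bm{\theta}_{0}}$-probability one'' qualifier, which is handled by intersecting the countably many almost-sure events invoked above.
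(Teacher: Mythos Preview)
Your proposal is correct and follows essentially the same route as the paper: transport the $L^{1}$ (total-variation) convergence of Theorem~\ref{thm:joint-normality} to the original coordinates and then argue that the approximating Gaussian product concentrates at $(\bm{\eta}_{0},\bm{\zeta}_{0})$. The paper carries out the change of variables first and then bounds $\int_{U}\tau_{n}(\bm{\eta}|\bm{\zeta})\kappa_{n}(\bm{\zeta})\intd\bm{\eta}\intd\bm{\zeta}$ below by the Gaussian mass of a ball $B\subset U$, invoking only the consistency $\hat{\bm{\eta}}_{n}\to\bm{\eta}_{0}$, $\hat{\bm{\zeta}}_{n}\to\bm{\zeta}_{0}$, $\tilde{\bm{\zeta}}_{n}\to\bm{\zeta}_{0}$ to conclude; you instead stay in $(\bm{t},\bm{r})$ coordinates and give a more explicit tail decomposition to handle the $\bm{r}$-dependent conditional mean $-(\bm{I}^{11}_{0})^{-1}\bm{I}^{12}_{0}\bm{r}$ and the possibly diverging center $\bm{\mu}_{n}$. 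Your treatment of the latter point---observing that only $\bm{\mu}_{n}/\sqrt{n}\to\bm{0}$ is needed, which follows from mere consistency with no rate---is exactly what the paper relies on implicitly, so the two arguments are the same in substance, with yours filling in the concentration step that the paper leaves to the reader.
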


We can directly apply Theorem~\ref{thm:joint-normality} and
Corollary~\ref{coro:joint-consistency} to  analyze the asymptotic
properties of {\em Bayesian mosaic}. All we need is to let
$f_{1}(\bm{y}|\bm{\zeta})$ be the multiplication of the densities of
the univariate marginal distributions,
\begin{align*}
& f_{1}(\bm{y}|\bm{\zeta}) =
                           \prod_{j=1}^{p}f_{jj}(y_{j}|\bm{\theta}_{jj}),
\end{align*}
$f_{2}(\bm{y}|\bm{\theta},\bm{\zeta})$ be the multiplication of the densities of
the bivariate marginal distributions,
\begin{align*}
&f_{2}(\bm{y}|\bm{\theta},\bm{\zeta}) = \prod_{1\leq s<t\leq
                                       p}f_{st}(y_{s},y_{t}|\bm{\theta}_{st},
                                       \bm{\theta}_{ss},
                                       \bm{\theta}_{tt}),
\end{align*}
$\pi(\bm{\zeta})$ be the multiplication of the prior densities for
{\em knots},
\begin{align*}
&\pi(\bm{\zeta}) =
                           \prod_{j=1}^{p}\pi_{jj}(\bm{\theta}_{jj}),
\end{align*}
and $\pi(\bm{\eta}|\bm{\zeta})$ be the multiplication of the
conditional prior densities for {\em tiles},
\begin{align*}
&\pi(\bm{\eta}|\bm{\zeta}) = \prod_{1\leq s<t\leq
                                       p}\pi_{st}(\bm{\theta}_{st}|\bm{\theta}_{ss},\bm{\theta}_{tt}).
\end{align*}

We immediately have
\begin{align*}
\ell_{1}(\bm{\zeta},
\bm{y}) = \sum_{j=1}^{p}\ell_{jj}(y_{j}|\bm{\theta}_{jj}), & \quad \ell_{2}(\bm{\eta}, \bm{\zeta},
\bm{y}) = \sum_{1\leq s < t \leq p} \ell_{st}(y_{s},y_{t}|\bm{\theta}_{st}, \bm{\theta}_{ss}, \bm{\theta}_{tt}).
\end{align*}

It can be shown that
\begin{align*}
\kappa_{n}\big(\bm{\zeta}\big)\propto &\exp\bigg\{\sum_{i=1}^{n}\ell_{1}(\bm{\zeta},
\bm{y}_{i})\bigg\}\pi(\bm{\zeta}) \\
= &
    \exp\bigg\{\sum_{j=1}^{p}\sum_{i=1}^{n}\ell_{jj}(y_{ij}|\bm{\theta}_{jj})\bigg\}\prod_{j=1}^{p}\pi_{jj}(\bm{\theta}_{jj})\\
= &
    \prod_{j=1}^{p}\exp\bigg\{\sum_{i=1}^{n}\ell_{jj}(y_{ij}|\bm{\theta}_{jj})\bigg\}\pi_{jj}(\bm{\theta}_{jj})\\
= & \prod_{j=1}^{p}\kappa_{n,jj}\big(\bm{\theta}_{jj}\big).
\end{align*}
Similarly we can show that
\[\tau_{n}(\bm{\eta}|\bm{\zeta}) =
  \prod_{s<t}^{p}\tau_{n,st}(\bm{\theta}_{st}|\bm{\theta}_{ss},
  \bm{\theta}_{tt}).\]
Therefore $\tau_{n}(\bm{\eta}|\bm{\zeta})
\kappa_{n}\big(\bm{\zeta}\big)$ is exactly the {\em Bayesian mosaic}, specifically,
\[\tau_{n}(\bm{\eta}|\bm{\zeta}) \kappa_{n}\big(\bm{\zeta}\big) =
  \prod_{s<t}^{p}\tau_{n,st}(\bm{\theta}_{st}|\bm{\theta}_{ss},
  \bm{\theta}_{tt})\prod_{j=1}^{p}\kappa_{n,jj}\big(\bm{\theta}_{jj}\big)
  = \tilde{\pi}_{n}(\bm{\theta}),\]
where $\bm{\theta}=\begin{brsm}
\bm{\eta} \\
    \bm{\zeta}
\end{brsm}$. Consequently, Theorem~\ref{thm:joint-normality} and
Corollary~\ref{coro:joint-consistency} can be used directly
to analyze the asymptotic properties of {\em Bayesian mosaic}. The
following lemma
provides sufficient conditions for the regularity
conditions for Theorem~\ref{thm:joint-normality} to hold. The proof of
this lemma is straightforward and hence is omitted.
\begin{lemma} \label{lem:bm-condition}
Suppose that for $j=1,\ldots,p$, $f_{jj}(\bm{y}|\bm{\eta}_{jj})  =
f_{jj}(y_{j}|\bm{\eta}_{jj})$ satisfies
conditions~\ref{cond:1-1}-\ref{cond:1-5}, and
for $1\leq s < t \leq p$, $f_{st}(\bm{y}|\bm{\eta}_{st}, \bm{\zeta}_{st})  =
f_{st}(\bm{y}|\bm{\eta}_{st}, \bm{\eta}_{ss}, \bm{\eta}_{t})$ satisfies
conditions~\ref{cond:1-1}-\ref{cond:1-5}.
Then conditions~\ref{cond:1-1}-\ref{cond:1-5} also hold for
both $f_{1}(\bm{y}|\bm{\zeta})$ and $f_{2}(\bm{y}|\bm{\eta},\bm{\zeta})$.
\end{lemma}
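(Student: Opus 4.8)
\section*{Proof proposal for Lemma~\ref{lem:bm-condition}}

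The plan is to reduce each clause of Conditions~\ref{cond:1-1}--\ref{cond:1-5} for the two composite pseudo-densities to the corresponding clause for the individual marginal densities, exploiting that the parameter sub-vectors are non-overlapping. Recall $\ell_{1}(\bm{\zeta},\bm{y})=\sum_{j=1}^{p}\ell_{jj}(y_{j}|\bm{\theta}_{jj})$, where the $j$-th summand depends on $\bm{\zeta}=(\bm{\theta}_{11}^{\top},\dots,\bm{\theta}_{pp}^{\top})^{\top}$ only through the block $\bm{\theta}_{jj}$; and $\ell_{2}((\bm{\eta},\bm{\zeta}),\bm{y})=\sum_{s<t}\ell_{st}(y_{s},y_{t}|\bm{\theta}_{st},\bm{\theta}_{ss},\bm{\theta}_{tt})$, where the $(s,t)$-summand depends on the full parameter only through the tile $\bm{\theta}_{st}$ and the two knots $\bm{\theta}_{ss},\bm{\theta}_{tt}$. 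Hence $\triangledown\ell_{1}$ and $\triangledown^{2}\ell_{1}$ are a direct sum and a block-diagonal matrix over $j$, whereas $\triangledown\ell_{2}$ and $\triangledown^{2}\ell_{2}$ are sparse with the pattern dictated by which pairs share an index. I would verify the conditions for $f_{1}$ first, where everything is block-diagonal, and then for $f_{2}$, where the shared knots force a slightly more involved argument.

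For $f_{1}$: Condition~\ref{cond:1-1} is immediate since $\{\bm{y}:f_{1}(\bm{y}|\bm{\zeta})>0\}=\prod_{j}\{y_{j}:f_{jj}(y_{j}|\bm{\theta}_{jj})>0\}$ and each factor is $\bm{\theta}_{jj}$-free. For Condition~\ref{cond:1-2}, a multi-index $\bm{\alpha}$ on $\bm{\zeta}$ with $|\bm{\alpha}|=3$ satisfies $\partial^{\bm{\alpha}}\ell_{1}=\partial^{\bm{\alpha}}\ell_{jj}$ when $\bm{\alpha}$ is supported inside block $j$ and $\partial^{\bm{\alpha}}\ell_{1}=0$ otherwise, so the dominating function is the one supplied by Condition~\ref{cond:1-2} for $f_{jj}$; finiteness of $\mathbb{E}_{\bm{\theta}_{0}}\triangledown\ell_{1}$ and $\mathbb{E}_{\bm{\theta}_{0}}\triangledown^{2}\ell_{1}$ follows by assembling the blocks. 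For Condition~\ref{cond:1-3}, $\triangledown_{\bm{\zeta}}\ell_{1}|_{\bm{\zeta}_{0}}$ stacks the $\mathbb{E}_{\bm{\theta}_{0}}$-mean-zero scores $\triangledown_{\bm{\theta}_{jj}}\ell_{jj}|_{\bm{\theta}_{jj,0}}$, so it has mean zero; $\tilde{\bm{I}}_{0}=-\mathbb{E}_{\bm{\theta}_{0}}\triangledown^{2}_{\bm{\zeta}}\ell_{1}|_{\bm{\zeta}_{0}}$ is block-diagonal with blocks $-\mathbb{E}_{\bm{\theta}_{0}}\triangledown^{2}_{\bm{\theta}_{jj}}\ell_{jj}|_{\bm{\theta}_{jj,0}}$, each positive definite, hence $\tilde{\bm{I}}_{0}$ is positive definite; and since the Hessian is block-diagonal the information identity reduces to the per-component identities on the diagonal blocks. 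Conditions~\ref{cond:1-4} and~\ref{cond:1-5} follow from separability of $Q_{n}(\bm{\zeta})=\sum_{j}Q_{n,j}(\bm{\theta}_{jj})$: $\tilde{\bm{\zeta}}_{n}$ splits into $\argmax_{\bm{\theta}_{jj}}Q_{n,j}$, each consistent; and if $\|\bm{\zeta}-\bm{\zeta}_{0}\|\geq\delta$ then some block $k$ has $\|\bm{\theta}_{kk}-\bm{\theta}_{kk,0}\|\geq\delta/\sqrt{p}$, which by Condition~\ref{cond:1-4} for $f_{kk}$ contributes a drop $-\epsilon_{k}$, while every other block contributes at most $n^{-1}[Q_{n,j}(\tilde{\bm{\theta}}_{jj,n})-Q_{n,j}(\bm{\theta}_{jj,0})]\to 0$ by the consistency in Condition~\ref{cond:1-5}; taking $\epsilon=\tfrac{1}{2}\min_{k}\epsilon_{k}$ gives the required uniform bound for large $n$.

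For $f_{2}$: Conditions~\ref{cond:1-1}--\ref{cond:1-2} are verified exactly as for $f_{1}$, the only change being that a third-order multi-index involving a knot shared by several pairs yields a \emph{finite sum} of per-component third derivatives and hence a finite sum of dominating functions. Condition~\ref{cond:1-3}: the tile-blocks of $\triangledown\ell_{2}$ are the mean-zero $\triangledown_{\bm{\theta}_{st}}\ell_{st}$, and the knot-block $\triangledown_{\bm{\theta}_{jj}}\ell_{2}=\sum_{t}\triangledown_{\bm{\theta}_{jj}}\ell_{jt}$ (sum over pairs containing index $j$) is a sum of sub-vectors of mean-zero scores, hence mean zero. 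Positive definiteness of $\bm{I}_{0}=-\mathbb{E}_{\bm{\theta}_{0}}\triangledown^{2}_{(\bm{\eta},\bm{\zeta})}\ell_{2}|_{\bm{\theta}_{0}}$ is the genuinely new point: write $\bm{I}_{0}=\sum_{s<t}\bm{I}_{0}^{(st)}$, where $\bm{I}_{0}^{(st)}$ is the zero-padded per-pair information, positive semidefinite with null space exactly the vectors vanishing on the coordinates of $(\bm{\theta}_{st},\bm{\theta}_{ss},\bm{\theta}_{tt})$; a vector in $\ker\bm{I}_{0}$ lies in $\ker\bm{I}_{0}^{(st)}$ for every pair, and since each tile lies in its own pair and (as $p\geq 2$) each knot index is paired with some other index, such a vector vanishes entirely, so $\bm{I}_{0}$ is positive definite. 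Finally, Conditions~\ref{cond:1-4}--\ref{cond:1-5} for $L_{n}$ do not separate, so I would use a standard $M$-estimation argument: $n^{-1}L_{n}(\bm{\eta},\bm{\zeta})\to\mathbb{E}_{\bm{\theta}_{0}}\ell_{2}=\sum_{s<t}\mathbb{E}_{\bm{\theta}_{0}}\ell_{st}$, each summand of which (by the population form of Condition~\ref{cond:1-4} for $f_{st}$, combined with the strict local concavity from its positive definite information) is uniquely maximized at $(\bm{\theta}_{st,0},\bm{\theta}_{ss,0},\bm{\theta}_{tt,0})$; the single assignment $\bm{\theta}=\bm{\theta}_{0}$ simultaneously maximizes all summands and is the unique such point, which yields both the uniform separation and consistency of $(\hat{\bm{\eta}}_{n},\hat{\bm{\zeta}}_{n})$.

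The bookkeeping above is routine; the main obstacle is the $f_{2}$ analysis, where the clean block-diagonal structure is lost: one needs the covering observation that every tile and every knot is ``touched'' by some bivariate margin to conclude positive definiteness of $\bm{I}_{0}$, and Conditions~\ref{cond:1-4}--\ref{cond:1-5} must be obtained from the non-separable composite objective $L_{n}$ through a uniqueness-of-maximizer argument rather than blockwise. I would also flag that of Condition~\ref{cond:1-3} it is the mean-zero scores together with positive definiteness of $\tilde{\bm{I}}_{0}$ and $\bm{I}_{0}$ that actually feed into Lemma~\ref{lem:knot-normality} and Theorem~\ref{thm:joint-normality} (the asymptotic covariances there are $\tilde{\bm{I}}_{0}^{-1}$ and $(\bm{I}_{0}^{11})^{-1}$), so it suffices to transfer those clauses; the off-diagonal form of the information identity plays no role downstream.
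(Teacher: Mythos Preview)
The paper gives no proof of this lemma at all, merely asserting that it is ``straightforward and hence is omitted.'' Your sketch is therefore considerably more than what the paper provides, and you correctly isolate the one genuinely non-routine piece: for $f_{2}$ the shared knots destroy separability, so positive definiteness of $\bm{I}_{0}$ and Conditions~\ref{cond:1-4}--\ref{cond:1-5} require real arguments. Your kernel-intersection argument for the former and unique-maximizer argument for the latter are both sound.

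There is, however, one point you half-catch but should make explicit. Your verification of Condition~\ref{cond:1-3} for $f_{1}$ says the information identity ``reduces to the per-component identities on the diagonal blocks'' because the Hessian is block-diagonal. That is not right: the left side $-\mathbb{E}_{\bm{\theta}_{0}}\triangledown^{2}_{\bm{\zeta}}\ell_{1}$ is block-diagonal, but the right side $\mathbb{E}_{\bm{\theta}_{0}}\big[\triangledown_{\bm{\zeta}}\ell_{1}\big]\big[\triangledown_{\bm{\zeta}}\ell_{1}\big]^{\top}$ has off-diagonal blocks $\mathbb{E}_{\bm{\theta}_{0}}\big[\triangledown_{\bm{\theta}_{jj}}\ell_{jj}\big]\big[\triangledown_{\bm{\theta}_{kk}}\ell_{kk}\big]^{\top}$, which need not vanish since $y_{j}$ and $y_{k}$ are dependent under $P_{\bm{\theta}_{0}}$. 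The same obstruction arises for $f_{2}$. In other words, the second Bartlett identity generically \emph{fails} for composite pseudo-likelihoods, so the lemma as literally stated is too strong. Your closing flag---that only the mean-zero score and the positive definiteness of $\tilde{\bm{I}}_{0}$ and $\bm{I}_{0}$ are actually used in Lemma~\ref{lem:knot-normality} and Theorem~\ref{thm:joint-normality}---is exactly the right repair: those clauses do transfer, and the one place the paper invokes the full identity (the proof of Lemma~\ref{lem:mle-dif}, to write $\mbox{Var}_{\bm{\theta}_{0}}\triangledown\ell_{1}=\tilde{\bm{I}}_{0}$) only needs \emph{some} finite variance for the CLT, since the specific form of $\bm{V}$ is never used downstream. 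Promote this from an afterthought to the statement of what you are actually proving.
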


Recall that $\bm{t}=\sqrt{n}\big(\bm{\eta}-\hat{\bm{\eta}}_{n}\big)$
and $\bm{r}=\sqrt{n}\big(\bm{\zeta}-\hat{\bm{\zeta}}_{n}\big)$, then
the {\em Bayesian mosaic} of $\bm{t}$ and $\bm{r}$ can be written as
$\tilde{\pi}_{n}^{*}(\bm{t},\bm{r})=\frac{1}{\sqrt{n}}\tilde{\pi}_{n}(\hat{\bm{\eta}}_{n}+\bm{t}/\sqrt{n},\hat{\bm{\zeta}}_{n}+\bm{r}/\sqrt{n})$. Applying
Lemma~\ref{lem:bm-condition} and Theorem~\ref{thm:joint-normality}, we
know that
if the requirements of Lemma~\ref{lem:bm-condition} are satisfied, with $P_{\bm{\theta}_{0}}$-probability one
\begin{align*}
\lim_{n\to\infty}\int \int \left|\pi_{n}^{*}(\bm{t},\bm{r})-\phi\left(\bm{t}\middle|-\big(\bm{I}^{11}_{0}\big)^{-1}\bm{I}^{12}_{0}\bm{r},\big(\bm{I}^{11}_{0}\big)^{-1}\right) \phi\left(\bm{r}\middle|\bm{\mu}_{n},
  \tilde{\bm{I}}_{0}^{-1}\right)\right|\intd \bm{r} \intd \bm{t}= 0,
\end{align*}
where $\tilde{\bm{I}}_{0} = -\mathbb{E}_{\bm{\theta}_{0}}\triangledown^{2}_{\bm{\zeta}}\ell_{1}(\bm{\zeta},
\bm{y})|_{\bm{\theta}=\bm{\theta}_{0}}$, $\bm{I}^{11}_{0} = -\mathbb{E}_{\bm{\theta}_{0}}\triangledown_{\bm{\eta}}^{2}  \ell_{2}(\bm{\eta}, \bm{\zeta},
\bm{y})|_{\bm{\theta}=\bm{\theta}_{0}}$ and $\bm{I}^{12}_{0} = -\mathbb{E}_{\bm{\theta}_{0}}\triangledown_{\bm{\zeta}} \triangledown_{\bm{\eta}} \ell_{2}(\bm{\eta}, \bm{\zeta},
\bm{y})|_{\bm{\theta}=\bm{\theta}_{0}}$.

To gain more insight on what the asymptotic covariance of
$\tilde{\pi}_{n}^{*}(\bm{t},\bm{r})$ is, we will look at
$\bm{I}^{11}_{0}$, $\bm{I}^{12}_{0}$ and $\tilde{\bm{I}}_{0}$ in more
detail. For
$j=1,\ldots,p$, we define
\begin{align*}
\bm{\Sigma}_{jj} = \mathbb{E}_{\bm{\theta}_{0}}\triangledown_{\bm{\theta}_{jj}}^{2}
\ell_{jj}(\bm{\theta}_{jj}, y_{j})
    |_{\bm{\theta}=\bm{\theta}_{0}}.
\end{align*} 
Since $\ell_{1}(\bm{\zeta},
\bm{y}) = \sum_{j=1}^{p}\ell_{jj}(y_{j}|\bm{\theta}_{jj})$ and
$\bm{\zeta}=\left(\bm{\theta}_{11},\ldots,\bm{\theta}_{pp}\right)^{\top}$,
it is easy to see that
\[\tilde{\bm{I}}_{0}=
\begin{bmatrix}
\bm{\Sigma}_{11} & \bm{0} & \cdots & \bm{0}\\
\bm{0} & \bm{\Sigma}_{22} & \cdots & \bm{0}\\
\vdots& \vdots & \ddots & \vdots\\
\bm{0} &\bm{0} & \cdots & \bm{\Sigma}_{pp}
\end{bmatrix}.
\]
For
$1\leq s < t \leq p$, we define
\begin{align*}
\bm{\Sigma}_{st} = \mathbb{E}_{\bm{\theta}_{0}}\triangledown_{\bm{\theta}_{st}}^{2}
\ell_{st}(\bm{\theta}_{st}, \bm{\theta}_{ss}, \bm{\theta}_{tt}, y_{s},y_{t})
    |_{\bm{\theta}=\bm{\theta}_{0}}.
\end{align*}
Since $\ell_{2}(\bm{\eta}, \bm{\zeta},
\bm{y}) = \sum_{1\leq s < t \leq p}
\ell_{st}(y_{s},y_{t}|\bm{\theta}_{st}, \bm{\theta}_{ss},
\bm{\theta}_{tt})$ and
$\bm{\eta}=\left[\bm{\theta}_{12},\ldots,\bm{\theta}_{(p-1)p}\right]^{\top}$, it is easy to see that
\[\bm{I}^{11}_{0} =
\begin{bmatrix}
\bm{\Sigma}_{12} & \bm{0} & \cdots & \bm{0}\\
\bm{0} & \bm{\Sigma}_{13} & \cdots & \bm{0}\\
\vdots& \vdots & \ddots & \vdots\\
\bm{0} &\bm{0} & \cdots & \bm{\Sigma}_{(p-1)p}
\end{bmatrix}.
\]
Note that the $\bm{\Sigma}_{st}$'s are ordered in a row-major manner
on the diagonal of $\bm{\Sigma}_{0}$. For
$1\leq s < t \leq p$ and $j=1,\ldots,p$, we define
\begin{align*}
\bm{\Sigma}_{st,j} = \mathbb{E}_{\bm{\theta}_{0}}\triangledown_{\bm{\theta}_{jj}}\triangledown_{\bm{\theta}_{st}}
\ell_{st}(\bm{\theta}_{st}, \bm{\theta}_{ss}, \bm{\theta}_{tt}, y_{s},y_{t})
    |_{\bm{\theta}=\bm{\theta}_{0}}.
\end{align*}
It can be shown that
\[\bm{I}^{12}_{0}=
\begin{bmatrix}
\bm{\Sigma}_{12,1} & \bm{\Sigma}_{12,2} & \cdots & \bm{\Sigma}_{12,p}\\
\bm{\Sigma}_{13,1} & \bm{\Sigma}_{13,2} & \cdots &  \bm{\Sigma}_{13,p}\\
\vdots& \vdots & \ddots & \vdots\\
\bm{\Sigma}_{(p-1)p,1} &\bm{\Sigma}_{(p-1)p,2}  & \cdots &\bm{\Sigma}_{(p-1)p,p}
\end{bmatrix},
\]
where $\bm{\Sigma}_{st,j}$'s are
ordered in a row-major manner within their column for
$j=1\ldots,p$. Note that $\bm{I}^{12}_{0}$ is sparse since
$\bm{\Sigma}_{st,j}=\bm{0}$ if $j\neq s$ and $j\neq t$.

$\tilde{\bm{I}}_{0}^{-1}$ is the marginal variance for
$\bm{r}$. It is block diagonal due to the posterior independence of the
{\em knots}. It can be seen that each block is the Fisher information
induced by a univariate marginal data distribution. $\left(\bm{I}^{11}_{0}\right)^{-1}$
is the conditional variance for $\bm{t}$. It is also block
diagonal due to the conditional independence of the {\em tiles} given
the {\em knots}. Each block is the Fisher information
induced by a bivariate marginal data distribution. $\bm{I}^{12}_{0}$
characterizes the connection between {\em knots} and {\em tiles}. Its
sparsity is due to the fact that $\bm{\theta}_{st}$ given $\bm{\theta}_{ss}$
and $\bm{\theta}_{tt}$ is conditionally
independent of other {\em knots}.

\subsection{Asymptotic Properties of the Posterior
  Mean} \label{ssec:asym-post-mean}
Under the same setup of Lemma~\ref{lem:knot-normality}, define $\bm{\zeta}_{n}^{*}$
as the posterior mean w.r.t. $\kappa_{n}\big(\bm{\zeta}\big)$, i.e.,
$\bm{\zeta}_{n}^{*} = \int
  \bm{\zeta}\kappa_{n}\big(\bm{\zeta}\big)\intd \bm{\zeta}$.
We can prove the following lemma.
\begin{lemma} \label{lem:knot-posterior-mean}
Suppose the conditions for Lemma~\ref{lem:knot-normality} hold and
that the prior
$\pi(\bm{\zeta})$ has a finite expectation, then
$\lim_{n\to\infty}\sqrt{n}\big(\bm{\zeta}_{n}^{*}-\tilde{\bm{\zeta}}_{n}\big)= 0$
with $P_{\bm{\theta}_{0}}$-probability one.
\end{lemma}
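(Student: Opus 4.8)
The goal is to show that the posterior mean $\bm{\zeta}_n^*$ under $\kappa_n$ differs from the MLE $\tilde{\bm{\zeta}}_n$ by $o(1/\sqrt{n})$ almost surely. The natural route is to work in the local coordinate $\bm{\omega} = \sqrt{n}(\bm{\zeta} - \tilde{\bm{\zeta}}_n)$, where by Lemma~\ref{lem:knot-normality} the density $\pi_{n,1}^*(\bm{\omega})$ converges to $\phi(\bm{\omega}\mid\bm{0},\tilde{\bm{I}}_0^{-1})$ in $L^1$. Since $\sqrt{n}(\bm{\zeta}_n^* - \tilde{\bm{\zeta}}_n) = \int \bm{\omega}\,\pi_{n,1}^*(\bm{\omega})\intd\bm{\omega}$ and the mean of the limiting Gaussian is $\bm{0}$, the claim reduces to showing $\int \bm{\omega}\,\pi_{n,1}^*(\bm{\omega})\intd\bm{\omega} \to \int \bm{\omega}\,\phi(\bm{\omega}\mid\bm{0},\tilde{\bm{I}}_0^{-1})\intd\bm{\omega} = \bm{0}$.

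The plan is as follows. First I would strengthen Lemma~\ref{lem:knot-normality}: its proof (following \citet{ghosh2007introduction}) in fact yields $L^1$ convergence of the \emph{weighted} density, i.e. $\int (1+\|\bm{\omega}\|)\,\big|\pi_{n,1}^*(\bm{\omega})-\phi(\bm{\omega}\mid\bm{0},\tilde{\bm{I}}_0^{-1})\big|\intd\bm{\omega}\to 0$ almost surely; this is a standard enhancement that requires the prior to have a finite first moment (which is exactly the extra hypothesis of this lemma) and uses the same Taylor-expansion bound from Condition~\ref{cond:1-2} together with the exponential decay away from $\tilde{\bm{\zeta}}_n$ from Condition~\ref{cond:1-4}. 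Concretely, one splits the integral over a shrinking-in-$\bm{\zeta}$ but fixed-in-$\bm{\omega}$ ball $\|\bm{\omega}\|\le \delta\sqrt{n}$ and its complement. On the ball, the quadratic Taylor expansion of $Q_n$ around $\tilde{\bm{\zeta}}_n$ (whose first-order term vanishes since $\tilde{\bm{\zeta}}_n$ is the maximizer, whose second-order term converges to $-\tilde{\bm{I}}_0$ by the SLLN and Condition~\ref{cond:1-3}, and whose cubic remainder is controlled by $M_{\bm{\alpha}}(\bm{y})$) gives pointwise convergence of $\pi_{n,1}^*$ to the Gaussian, which upgrades to weighted $L^1$ convergence by dominated convergence using an integrable Gaussian-type envelope. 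Off the ball, Condition~\ref{cond:1-4} gives $Q_n(\bm{\zeta})-Q_n(\tilde{\bm{\zeta}}_n)\le -n\epsilon$, so the unnormalized mass there is at most $e^{-n\epsilon}\int(1+\|\bm{\omega}\|)\pi(\tilde{\bm{\zeta}}_n+\bm{\omega}/\sqrt{n})\intd\bm{\omega}$, which after rescaling is $\sqrt{n}\,e^{-n\epsilon}$ times a prior first-moment term that is finite by hypothesis and hence vanishes.

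Given the weighted $L^1$ convergence, I would finish by writing
\begin{align*}
\big\|\sqrt{n}\big(\bm{\zeta}_n^*-\tilde{\bm{\zeta}}_n\big)\big\|
&= \left\|\int \bm{\omega}\,\pi_{n,1}^*(\bm{\omega})\intd\bm{\omega} - \int \bm{\omega}\,\phi\big(\bm{\omega}\big|\bm{0},\tilde{\bm{I}}_0^{-1}\big)\intd\bm{\omega}\right\|\\
&\le \int \|\bm{\omega}\|\,\big|\pi_{n,1}^*(\bm{\omega})-\phi\big(\bm{\omega}\big|\bm{0},\tilde{\bm{I}}_0^{-1}\big)\big|\intd\bm{\omega} \longrightarrow 0
\end{align*}
with $P_{\bm{\theta}_0}$-probability one, where the first equality uses that the standard Gaussian $\phi(\cdot\mid\bm{0},\tilde{\bm{I}}_0^{-1})$ has mean zero, and the normalizing constants of $\pi_{n,1}^*$ cause no trouble because they too converge to $1$ (a consequence of the unweighted $L^1$ statement already in Lemma~\ref{lem:knot-normality}).

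The main obstacle is the tail control needed to pass from $L^1$ convergence of the densities (which Lemma~\ref{lem:knot-normality} gives for free) to convergence of \emph{first moments}: one has to rule out mass escaping to $\|\bm{\omega}\|=\infty$, i.e., prove uniform integrability of $\|\bm{\omega}\|$ against $\pi_{n,1}^*(\bm{\omega})$. This is precisely where the finite-prior-expectation hypothesis enters, and where the interplay of Conditions~\ref{cond:1-2} and \ref{cond:1-4} must be used carefully to get a dominating integrable function on the central region and exponential smallness on the tails; the almost-sure (rather than in-probability) nature of the conclusion means these bounds must be made to hold simultaneously for all large $n$ on a probability-one event, which is automatic here since Conditions~\ref{cond:1-4}-\ref{cond:1-5} and the relevant SLLN statements already hold $P_{\bm{\theta}_0}$-almost surely.
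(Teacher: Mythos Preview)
Your proposal is correct and follows essentially the same route as the paper: the paper does not supply its own proof but simply identifies the lemma as the multivariate version of Theorem~4.3 in \citet{ghosh2007introduction}, whose argument proceeds exactly as you sketch --- strengthen the Bernstein--von Mises statement of Lemma~\ref{lem:knot-normality} to weighted $L^1$ convergence with weight $(1+\|\bm{\omega}\|)$, using the finite-prior-expectation hypothesis together with Conditions~\ref{cond:1-2} and~\ref{cond:1-4} for tail control, and then read off convergence of first moments.
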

Lemma~\ref{lem:knot-posterior-mean} is a multivariate version of
Theorem 4.3 in \citet{ghosh2007introduction}. It states that the
posterior mean is approximately the same as the MLE when $n$ is large.

Now we will investigate sampling from {\em tile conditionals} by
directly plugging in the posterior mean of the {\em knots}. Under the
same setup of Theorem~\ref{thm:joint-normality}, if we plug
$\bm{\zeta}_{n}^{*}$ into the conditional density
$\tau_{n}(\bm{\eta}|\bm{\zeta})$, we will get the following posterior distribution
\[\tau_{n}(\bm{\eta}|\bm{\zeta}_{n}^{*})
  \kappa_{n}\big(\bm{\zeta}\big),\]
which is different from the exact {\em Bayesian mosaic}
posterior. Recalling that
$\bm{t}=\sqrt{n}\big(\bm{\eta}-\hat{\bm{\eta}}_{n}\big)$ and letting $\pi_{n,3}^{*}(\bm{t})=\frac{1}{\sqrt{n}}\tau_{n}(\hat{\bm{\eta}}_{n}+\bm{t}/\sqrt{n}|\bm{\zeta}_{n}^{*})$, the following theorem states that
$\pi_{n,3}^{*}(\bm{t})$ is also asymptotic normal in a slightly weaker sense.
\begin{theorem} \label{thm:plugin-tile-normality}
Suppose that the conditions for Theorem~\ref{thm:joint-normality} hold
that the prior
$\pi(\bm{\zeta})$ has a finite expectation , then 
\begin{align}
\int \left|
  \pi_{n,3}^{*}(\bm{t})-\phi\left(\bm{t}\middle|-\big(\bm{I}^{11}_{0}\big)^{-1}\bm{I}^{12}_{0}\bm{\mu}_{n},\big(\bm{I}^{11}_{0}\big)^{-1}\right)
\right|d \bm{t}\overset{P_{\bm{\theta}_{0}}}{\to} 0. \label{eq:tile-normality}
\end{align}
\end{theorem}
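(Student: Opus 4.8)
The plan is to reuse the Laplace‑expansion machinery behind Theorem~\ref{thm:joint-normality}, but with the conditioning coordinate frozen at the posterior mean $\bm{\zeta}_{n}^{*}$ rather than left free. Write $\bm{r}_{n}=\sqrt{n}\big(\bm{\zeta}_{n}^{*}-\hat{\bm{\zeta}}_{n}\big)$, so that $\pi_{n,3}^{*}(\bm{t})$ is exactly $\pi_{n,2}^{*}(\bm{t}\,|\,\bm{r})$ evaluated at the random point $\bm{r}=\bm{r}_{n}$. The only genuinely new input is Lemma~\ref{lem:knot-posterior-mean}, which gives $\sqrt{n}\big(\bm{\zeta}_{n}^{*}-\tilde{\bm{\zeta}}_{n}\big)\to\bm{0}$ with $P_{\bm{\theta}_{0}}$-probability one, hence $\bm{r}_{n}=\bm{\mu}_{n}+o(1)$; and since $\tilde{\bm{\zeta}}_{n}$ and $\hat{\bm{\zeta}}_{n}$ are both $\sqrt{n}$-consistent, $\bm{\mu}_{n}$ --- and therefore $\bm{r}_{n}$ --- is bounded in $P_{\bm{\theta}_{0}}$-probability but not almost surely, which is precisely what forces the conclusion to be stated in probability rather than almost surely.

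First I would record that, by construction,
\[
\pi_{n,3}^{*}(\bm{t})\propto\exp\big\{L_{n}\big(\hat{\bm{\eta}}_{n}+\bm{t}/\sqrt{n},\ \hat{\bm{\zeta}}_{n}+\bm{r}_{n}/\sqrt{n}\big)-L_{n}\big(\hat{\bm{\eta}}_{n},\hat{\bm{\zeta}}_{n}\big)\big\}\,\pi\big(\hat{\bm{\eta}}_{n}+\bm{t}/\sqrt{n}\,\big|\,\bm{\zeta}_{n}^{*}\big),
\]
and then Taylor-expand $L_{n}$ about the joint maximizer $(\hat{\bm{\eta}}_{n},\hat{\bm{\zeta}}_{n})$. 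The first-order term vanishes since $\triangledown_{\bm{\eta},\bm{\zeta}}L_{n}$ is zero there; the Hessian term $\tfrac{1}{2n}\big(\bm{t}^{\top},\bm{r}_{n}^{\top}\big)\triangledown_{\bm{\eta},\bm{\zeta}}^{2}L_{n}\big|_{(\hat{\bm{\eta}}_{n},\hat{\bm{\zeta}}_{n})}\big(\bm{t}^{\top},\bm{r}_{n}^{\top}\big)^{\top}$ converges to $-\tfrac12\big(\bm{t}^{\top},\bm{r}_{n}^{\top}\big)\bm{I}_{0}\big(\bm{t}^{\top},\bm{r}_{n}^{\top}\big)^{\top}$ by the law of large numbers together with Condition~\ref{cond:1-2} and the consistency of $(\hat{\bm{\eta}}_{n},\hat{\bm{\zeta}}_{n})$ (Condition~\ref{cond:1-5}); and the cubic remainder is $O\big(\|(\bm{t},\bm{r}_{n})\|^{3}/\sqrt{n}\big)$ by the dominating functions $M_{\bm{\alpha}}$ of Condition~\ref{cond:1-2}, so for each fixed $\bm{t}$ it vanishes in $P_{\bm{\theta}_{0}}$-probability because $\bm{r}_{n}$ is bounded in probability. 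Completing the square in $\bm{t}$ turns the quadratic into $-\tfrac12\big(\bm{t}+(\bm{I}_{0}^{11})^{-1}\bm{I}_{0}^{12}\bm{r}_{n}\big)^{\top}\bm{I}_{0}^{11}\big(\bm{t}+(\bm{I}_{0}^{11})^{-1}\bm{I}_{0}^{12}\bm{r}_{n}\big)$ plus a term free of $\bm{t}$, while continuity of the prior together with $\hat{\bm{\eta}}_{n}\to\bm{\eta}_{0}$ and $\bm{\zeta}_{n}^{*}\to\bm{\zeta}_{0}$ gives $\pi(\hat{\bm{\eta}}_{n}+\bm{t}/\sqrt{n}\,|\,\bm{\zeta}_{n}^{*})\to\pi(\bm{\eta}_{0}|\bm{\zeta}_{0})>0$. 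After normalizing, $\pi_{n,3}^{*}(\bm{t})\to\phi\big(\bm{t}\,\big|\,-(\bm{I}_{0}^{11})^{-1}\bm{I}_{0}^{12}\bm{r}_{n},\,(\bm{I}_{0}^{11})^{-1}\big)$ pointwise in $\bm{t}$ and in $P_{\bm{\theta}_{0}}$-probability.

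To pass from pointwise to $L^{1}$ convergence I would run the same three-region argument used for Theorem~\ref{thm:joint-normality} (the multivariate analogue of Theorem~4.2 of \citet{ghosh2007introduction}): on $\{\|\bm{t}\|\le\delta\sqrt{n}\}$ the local quadratic expansion plus a fixed negative-definite quadratic envelope for $\ell_{2}$ on $\mathcal{B}_{\bm{\eta}}(\bm{\eta}_{0},\delta)$ supplies an integrable dominating function and dominated convergence applies; on the intermediate shell and on $\{\|\bm{t}\|>\delta\sqrt{n}\}$ the Wald-type separation Condition~\ref{cond:1-4} for $f_{2}$, together with consistency of the joint MLE, bounds the posterior mass there by $e^{-cn}\to0$, and the Gaussian tails are dealt with directly. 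This yields $\int\big|\pi_{n,3}^{*}(\bm{t})-\phi\big(\bm{t}\,\big|\,-(\bm{I}_{0}^{11})^{-1}\bm{I}_{0}^{12}\bm{r}_{n},(\bm{I}_{0}^{11})^{-1}\big)\big|\,\intd\bm{t}\overset{P_{\bm{\theta}_{0}}}{\to}0$. Finally, since $\bm{r}_{n}-\bm{\mu}_{n}\to\bm{0}$ with probability one and the $L^{1}$ distance between two normal densities sharing a covariance is a continuous function of the gap between their means that vanishes as that gap does, the target centered at $\bm{r}_{n}$ may be swapped for the one centered at $\bm{\mu}_{n}$; the triangle inequality then delivers (\ref{eq:tile-normality}).

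The main obstacle is the data-dependence of the shift $\bm{r}_{n}$: one cannot simply invoke Theorem~\ref{thm:joint-normality} ``at $\bm{r}=\bm{r}_{n}$'', because the joint total-variation bound there controls the product $\pi_{n,2}^{*}(\bm{t}|\bm{r})\,\pi_{n,1}^{*}(\bm{r})$ and says nothing about the conditional slice $\pi_{n,2}^{*}(\cdot\,|\,\bm{r})$ at a single random value of $\bm{r}$. The clean remedy is to establish the conditional statement uniformly over compacta --- for each fixed $R$, $\sup_{\|\bm{r}\|\le R}\int\big|\pi_{n,2}^{*}(\bm{t}|\bm{r})-\phi\big(\bm{t}\,|\,-(\bm{I}_{0}^{11})^{-1}\bm{I}_{0}^{12}\bm{r},(\bm{I}_{0}^{11})^{-1}\big)\big|\,\intd\bm{t}\overset{P_{\bm{\theta}_{0}}}{\to}0$, which the expansion above already supplies since all its error estimates are uniform in $\|\bm{r}\|\le R$ --- and then apply it with $R$ chosen so that $P_{\bm{\theta}_{0}}(\|\bm{r}_{n}\|>R)$ is as small as desired. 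Carrying the $\|\bm{r}\|$-dependence carefully through the Wald and tail estimates is the step that requires the most bookkeeping.
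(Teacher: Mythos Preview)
Your proposal is correct and tracks the paper's proof closely: both identify $\pi_{n,3}^{*}(\bm{t})=\pi_{n,2}^{*}(\bm{t}\,|\,\bm{r}_{n})$ with $\bm{r}_{n}=\sqrt{n}(\bm{\zeta}_{n}^{*}-\hat{\bm{\zeta}}_{n})$, reuse the Laplace machinery behind Theorem~\ref{thm:joint-normality} and Lemma~\ref{lem:bi-dist-dct}, and combine Lemma~\ref{lem:knot-posterior-mean} with tightness of $\bm{\mu}_{n}$ (the paper obtains this via Lemma~\ref{lem:mle-dif}) to downgrade to convergence in probability. The only difference is bookkeeping: the paper conditions on the growing event $\{\|\bm{r}_{n}\|<\sqrt{n}\delta\}$, which is precisely the region on which Corollary~\ref{coro:remainder2} applies, and shows its complement has vanishing probability, whereas you propose uniform $L^{1}$ convergence over fixed compacta $\{\|\bm{r}\|\le R\}$ and make the center swap $\bm{r}_{n}\mapsto\bm{\mu}_{n}$ an explicit separate step; your route demands slightly more uniformity in $\bm{r}$ but is arguably cleaner, while the paper's is shorter because the $\sqrt{n}\delta$ region plugs directly into the existing lemmas.
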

Note that the integral in (\ref{eq:tile-normality}) converges to zero
in probability, which is slightly weaker than the almost surely convergence in
Theorem~\ref{thm:joint-normality}. Moreover,
Theorem~\ref{thm:joint-normality} implies that
\[\lim_{n\to\infty}\int \left|\int \pi_{n}^{*}(\bm{t},\bm{r})\intd
  \bm{r}-\phi\left(\bm{t}\middle|-\big(\bm{I}^{11}_{0}\big)^{-1}\bm{I}^{12}_{0}\bm{\mu}_{n},\big(\bm{I}^{11}_{0}\big)^{-1}+\bm{\Lambda}_{0}\right)\right|\intd
  \bm{t}=0,\]
where
\[\bm{\Lambda}_{0} =
  \bm{I}^{12}_{0}\big(\bm{I}^{11}_{0}\big)^{-1}\tilde{\bm{I}}_{0}\big(\bm{I}^{11}_{0}\big)^{-1}
  \bm{I}^{21}_{0}\]
is positive semi-definite. This indicates that plugging in the
posterior mean leads to some under-estimation of uncertainty as expected.

\subsection{Asymptotic Bound on Computational Complexity} \label{ssec:ab-complexity}
We finish this section by investigating the per-iteration computational complexity of
sampling from {\em Bayesian mosaic} when the data are discrete.
We start by analyzing sampling from the {\em knot marginal}s. Recall
\begin{align} \label{eq:ccb-kappa}
\kappa_{j}(\bm{\theta}_{jj}) \propto e^{\sum_{i=1}^{n}\ell_{jj}(\bm{\theta}_{jj}, y_{ij})}\pi_{jj}(\bm{\theta}_{jj}).
\end{align}
For discrete data, we assume the cardinality of
$\left\{y_{1j},\ldots,y_{nj}\right\}$ is $K\in\setnnn$ and that
$y_{i_{1}j},\ldots,y_{i_{K}j}$ are $K$ unique values of $y_{1j},\ldots,y_{nj}$. For
$k=1,\ldots,K$, we define
$n_{k}=\sum_{i=1}^{n}\mathbbm{1}\{y_{ij}=y_{i_kj}\}$. Then (\ref{eq:ccb-kappa}) can be written as
\begin{align*}
\kappa_{j}(\bm{\theta}_{jj}) \propto e^{\sum_{k=1}^{K}n_{k}\ell_{jj}(\bm{\theta}_{jj}, y_{i_{k}j})}\pi_{jj}(\bm{\theta}_{jj}).
\end{align*}
Clearly the per-iteration computational complexity of sampling from
$\kappa_{n,j}(\bm{\theta}_{jj})$ is dominated by evaluating
$\sum_{k=1}^{K}n_{k}\ell_{jj}(\bm{\theta}_{jj}, y_{i_{k}j})$, which
scales linearly with $K$. It is easily seen that $K$ is bounded by $\max_{1\leq i \leq n} y_{ij} -
\min_{1\leq i \leq n} y_{ij}$. For simplicity, we assume that the data
only take positive values, which implies that $K$ is upper-bounded by
$\max_{1\leq i \leq n} y_{ij}$, whose asymptotic distribution is
studied in extreme value theory. Since this asymptotic distribution is model
specific, we use the
rounded multivariate Gaussian model of
\citet{canale2011bayesian} as an illustration.
This model is a special case of the multivariate
latent Gaussian model defined in
(\ref{eq:mvt-lg-pois}) with
$h_{j}(y_{j}|x_{j})=\mathbbm{1}\{x_{j}>0\}\ceil{x_{j}}$. Basically
$h_{j}(y_{j}|x_{j})$ is a rounding function that rounds $x_{j}$ to the smallest integer larger than it while
mapping all $x_{j}$ below 0 to 0.

\begin{lemma} \label{coro:knot-complexity-bound}
Consider model (\ref{eq:mvt-lg-pois}) with
$h_{j}(y_{j}|x_{j})=\mathbbm{1}\{x_{j}>0\}\ceil{x_{j}}$, for
$j=1,\ldots,p$, we have that $\forall \delta>0$, $\exists N\in\setnnn$
such that $\forall n>N$,
\begin{align} \label{eq:coro-stdn-extreme}
pr\left[\max_{1\leq i\leq
  n}y_{ij}<\sqrt{\sigma_{jj}}\left(\frac{2\delta}{\sqrt{\log n}}+\sqrt{2\log n}\right)+\mu_{j}+1\right]> e^{-\exp(-\delta/2)}.
\end{align}
\end{lemma}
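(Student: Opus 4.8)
The plan is to reduce the claim to a standard extreme-value statement for i.i.d. Gaussians and then carefully track the effect of the rounding link. For a fixed coordinate $j$, the marginal distribution of $y_{ij}$ is $f_{jj}(y_j|\mu_j,\sigma_{jj})=\int h_j(y_j|x_j)\phi(x_j|\mu_j,\sigma_{jj})\intd x_j$ with $h_j(y_j|x_j)=\mathbbm{1}\{x_j>0\}\ceil{x_j}$. The key structural observation is that under this rounding link the data generation can be coupled: $y_{ij}=\max\{0,\ceil{z_{ij}}\}$ where $z_{ij}\sim N(\mu_j,\sigma_{jj})$ are i.i.d. (this is exactly how \citet{canale2011bayesian} define the model). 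Hence $y_{ij}<\ceil{z_{ij}}+1\le z_{ij}+1$ is false only when — actually $\ceil{z}< z+1$ always and $\max\{0,\ceil{z}\}\le \ceil{z}$ when $\ceil{z}\ge 0$, so $y_{ij}\le z_{ij}+1$ deterministically. Therefore $\max_{1\le i\le n}y_{ij}\le \max_{1\le i\le n}z_{ij}+1$ almost surely, and it suffices to bound $\max_{1\le i\le n}z_{ij}$ from above with the stated probability.

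Next I would standardize: write $z_{ij}=\mu_j+\sqrt{\sigma_{jj}}\,w_{ij}$ with $w_{ij}\sim N(0,1)$ i.i.d. Then $\max_i z_{ij} = \mu_j+\sqrt{\sigma_{jj}}\max_i w_{ij}$, so the event in question becomes $\max_{1\le i\le n}w_{ij} < \frac{2\delta}{\sqrt{\log n}}+\sqrt{2\log n}$. The task is now to show
\begin{align*}
pr\!\left[\max_{1\le i\le n} w_{ij} < b_n(\delta)\right] > e^{-\exp(-\delta/2)}, \qquad b_n(\delta) = \sqrt{2\log n}+\frac{2\delta}{\sqrt{\log n}},
\end{align*}
for all large $n$. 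By independence this probability equals $\big(1-\bar\Phi(b_n(\delta))\big)^n \ge \exp\{-n\,\bar\Phi(b_n(\delta))/(1-\bar\Phi(b_n(\delta)))\}$, or more simply use $(1-x)^n\ge e^{-2nx}$ for small $x$; so it is enough to show $n\,\bar\Phi(b_n(\delta))\le \tfrac12 e^{-\delta/2}$ eventually. Using the Gaussian tail bound $\bar\Phi(t)\le \frac{1}{t\sqrt{2\pi}}e^{-t^2/2}$ and plugging in $t=b_n(\delta)$, one computes $b_n(\delta)^2/2 = \log n + 2\delta\cdot\frac{\sqrt{2\log n}}{\sqrt{\log n}} + O(1/\log n) = \log n + 2\sqrt{2}\,\delta + o(1)$. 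Hmm — this gives decay $e^{-2\sqrt2\,\delta}$, not $e^{-\delta/2}$; I would need to recheck the constant in the statement, but the structure is: $n\bar\Phi(b_n)\approx \frac{n}{b_n\sqrt{2\pi}}\cdot\frac1n e^{-c\delta} = \frac{e^{-c\delta}}{\sqrt{2\pi}\,\sqrt{2\log n}} \to 0$, and for $n$ large enough this is below $\tfrac12 e^{-\delta/2}$ regardless of the exact constant $c>0$, since the extra $1/\sqrt{\log n}$ factor dominates. That is precisely why the lemma only asserts existence of $N$ depending on $\delta$ rather than a clean limit.

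The main obstacle, then, is not the Gaussian extreme-value asymptotics — those are classical — but verifying the deterministic coupling inequality $y_{ij}\le z_{ij}+1$ cleanly, i.e. confirming that the rounding-and-truncation link really does satisfy $\max\{0,\ceil{z}\}\le z+1$ for all real $z$ (true: if $z\le 0$ then $\max\{0,\ceil z\}=0\le z+1$ fails when $z<-1$ — wait, $0\le z+1$ needs $z\ge -1$). So the inequality $y_{ij}\le z_{ij}+1$ is false for $z_{ij}<-1$, where $y_{ij}=0$ but $z_{ij}+1<0$. The fix is trivial since we only need the \emph{upper} bound on the \emph{maximum}: $\max_i y_{ij}\ge 0$ always, and on the high side $y_{ij}=\max\{0,\ceil{z_{ij}}\}\le\max\{0,z_{ij}+1\}$, so $\max_i y_{ij}\le \max\{0,\max_i z_{ij}+1\} = \max_i z_{ij}+1$ whenever $\max_i z_{ij}\ge -1$, which holds with probability tending to one and in particular for all large $n$. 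I would absorb this into the "for all $n>N$" clause. So the real work is: (i) establish the coupling and the one-sided inequality, handling the truncation edge case as above; (ii) reduce to $\max_i w_{ij}$ of i.i.d. standard normals; (iii) apply the Gaussian tail bound and the elementary inequality $(1-x)^n\ge e^{-2nx}$ to get the probability lower bound; (iv) observe that the residual $1/\sqrt{\log n}$ factor forces the target probability to be achievable for all sufficiently large $n$, giving the claimed $N$.
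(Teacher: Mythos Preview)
Your proposal is correct and close in spirit to the paper's proof, but the execution differs in a useful way. Both arguments (i) couple $y_{ij}$ to the underlying Gaussian $x_{ij}$ via the rounding link, (ii) standardize to i.i.d.\ $N(0,1)$ variables, and (iii) use the Mill's-ratio tail bound $\bar\Phi(t)\le \tfrac{1}{t\sqrt{2\pi}}e^{-t^2/2}$. The difference is in step (iii): the paper first quotes the classical Gumbel limit $\Phi(a_n x+b_n)^n\to e^{-e^{-x}}$ with the canonical normalizers $a_n=1/(n\phi(b_n))$, $b_n=\Phi^{-1}(1-1/n)$, obtains the limit $e^{-e^{-\delta}}$ at threshold $a_n\delta+b_n$, and then shows $a_n\delta+b_n<\tfrac{2\delta}{\sqrt{\log n}}+\sqrt{2\log n}$ via Mill's ratio and integration-by-parts bounds on $a_n,b_n$; the slack between $e^{-e^{-\delta}}$ and $e^{-e^{-\delta/2}}$ absorbs the passage from limit to ``for all $n>N$.'' You instead compute $n\,\bar\Phi\big(\sqrt{2\log n}+\tfrac{2\delta}{\sqrt{\log n}}\big)\to 0$ directly and conclude via $(1-x)^n\ge e^{-2nx}$, bypassing the extreme-value theorem entirely. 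Your route is a bit more elementary and makes it transparent that the probability in fact tends to $1$, not merely exceeds $e^{-e^{-\delta/2}}$; the paper's route has the advantage of exhibiting the precise limiting constant.

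On the coupling step, the paper's phrasing is slightly cleaner than your edge-case patch: it simply observes that for any $a\ge 0$, the event $\{\max_i x_{ij}<a\}$ implies $\{\max_i y_{ij}<a+1\}$, because $y_{ij}=0<a+1$ when $x_{ij}\le 0$ and $y_{ij}=\lceil x_{ij}\rceil<x_{ij}+1<a+1$ when $x_{ij}>0$. Since the threshold in the lemma is eventually positive, this covers everything without worrying about whether $\max_i z_{ij}\ge -1$. Your fix is also correct, but you could streamline it the same way.
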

Intuitively,
(\ref{eq:coro-stdn-extreme}) implies that $K$ is at most
$O(\sqrt{\log n})$ with high probability. Similarly, one can show that
the computational
complexity of evaluating the data likelihood of any bivariate marginal
distribution is at most $O(\log n)$. 

To summarize, we have shown that the per-iteration computational
complexity is linear in the cardinality of the discrete
observations. This cardinality can be bounded by the data maxima; hence its
asymptotic distribution can
be analyzed using standard extreme value theory. We have shown that
the per-iteration complexity is at most $O(\log n)$ with high
probability for the rounded multivariate Gaussian model.

\section{Experiments} \label{sec:experiment}
The performance of {\em Bayesian Mosaic} will be illustrated via two
simulation studies and a citation network application.
The first simulation study demonstrates the superiority of {\em Bayesian
  Mosaic} over DA-MCMC for imbalanced count data. The second
simulation study demonstrates that {\em Bayesian
  Mosaic} achieves similar accuracy with a provably more scalable
computational complexity for large balanced count data. {\em Bayesian mosaic} is also applied to a citation count
dataset to infer the overlapping structure of a group of researchers' interests.

All experiments are conducted in R on a machine with 12 3.50 GHz Intel(R) Xeon(R)
CPU E5-1650 v3 processors. All results are based on 100 replicate experiments.

\subsection{Multivariate log-Gaussian Mixture of Poisson} \label{ssec:sim-study-poi}
In the first simulation study, we considered a special case of multivariate
latent Gaussian models with $h_{j}(y_{j}|x_{j})$ being the density
function of a Poisson distribution whose rate parameter equals $e^{x_{j}}$.
We generated 100 datasets for each unique data dimensionality $p$ in
$\{3, 5, 7\}$. We fixed the sample size to be 10000. For each synthetic dataset we randomly generated
$\bm{\mu}$ and $\bm{\Sigma}$ from some distribution so that the
simulated dataset has an excessive amount of zeros. More specifically, for $j=1,\ldots,p$, we generated
$\mu_{j}$ from $\mbox{Unif}(-4,-3)$ and $\sigma_{jj}$ from
$\mbox{Unif}(0.5,1)$. We randomly generated a correlation matrix from
the standard LKJ distribution \cite{lewandowski2009generating} and
then combined this correlation matrix with
$\sigma_{11},\ldots,\sigma_{pp}$ into $\bm{\Sigma}$. Roughly
90\% of the simulated data entries are zeros.

We used weakly-informative priors in both simulation studies. Specifically, for $j=1,\ldots,p$,
\begin{align*}
\pi_{jj}(\mu_{j},\sigma_{jj}) \propto \sigma_{jj}^{-1/2}\mathbbm{1}\left\{\left|\mu_{j}\right|<A, 0<\sigma_{jj}<B\right\},
\end{align*}
where $A>0$ and $B>0$. For $1\leq s<t\leq p$,
\begin{align*}
\pi_{jj}(\sigma_{st}|\mu_{s},\sigma_{ss},\mu_{t},\sigma_{tt}) \propto \mathbbm{1}\left\{\left|\sigma_{st}\right|<\sqrt{\sigma_{ss}\sigma_{tt}}\right\}.
\end{align*}
The propriety of the posterior is guaranteed since the
support of the prior is compact. Moreover, for sufficiently large $A$
and $B$, the posterior becomes
insensitive to the choice of $A$ and $B$~\citep{gelman2006prior}. We
let $A=100$ and $B=10$. We used a similar prior specification in
citation count application.

\begin{table}
\centering
\caption{Mean Square Error Comparison.\protect\footnotemark }\label{tab:pois-mse-highlight}
\begin{tabular}{ l | c | c | c | c}
\hline
   \multicolumn{2}{l|}{} &$p=3$& $p=5$ & $p=7$
  \\ \cline{1-5}
   \multirow{4}{*}{{\em Bayesian Mosaic}} & $\rho$ & $6.79$ ($9.76$)
                                                   & $5.78$ ($8.14$) & $5.62$ ($7.92$)\\ \hhline{~|-|-|-|-|}
    &$s$ & $5.9$ ($9.54$) & $5.86$ ($9.13$) & $5.86$ ($8.63$)\\ \hhline{~|-|-|-|-|}
    &$\mu$& $1.74$ ($2.39$) & $1.74$ ($2.71$) & $1.7$ ($2.52$) \\\hline
\multirow{3}{*}{DA-MCMC} & $\rho$ & $8.41$ ($15.3$) & $9.93$ ($14.8$)
                                          & $10$ ($14.4$) \\\hhline{~|-|-|-|-|}
    &$s$ & $150$  ($1892$) & $303$ ($3020$) & $443$ ($3744$) \\ \hhline{~|-|-|-|-|}
    &$\mu$ & $54.9$ ($345$) & $123$ ($522$) & $126$ ($571$) \\\hline
  \end{tabular}
\end{table}

Normal independent MH sampler was implemented for sampling the {\em
knot marginals} for 200 iterations with the
first 100 as burn-in. We then
approximated {\em tile conditionals} via
Laplace approximation and drew 100 samples of the {\em tiles} from the resulting conditional
Gaussian distribution given the previous draws of the {\em knots}.
On average, a single run with the computation distributed to 11
parallel workers took 90 seconds for $p=3$, 126 seconds for $p=5$ and
227 seconds for $p=7$.
As a comparison,
we ran DA-MCMC sampler for the 5 times the amount of
time with the computation tasks within each iteration distributed to
11 parallel workers as well. In both simulation studies, we
gave DA-MCMC an unfair advantage by 
initializing the parameter values at the true values.
\footnotetext{All numbers have been multiplied by 100.}

We first compared 
accuracies of estimating the model parameters w.r.t. square error
loss.
Average MSE within each group are presented in
Table~\ref{tab:pois-mse-highlight}, where the number in the
parenthesis is the standard error. It can be clearly seen that the
estimates based on {\em Bayesian mosaic} outperforms
those based on DA-MCMC samples in terms of square error loss.

\begin{table}
\centering
\caption{Empirical Coverage Comparison}\label{tab:pois-converage-highlight}
\begin{tabular}{ l | c | c | c | c}
\hline
   \multicolumn{2}{l|}{} &$p=3$& $p=5$ & $p=7$
  \\ \cline{1-5}
   \multirow{3}{*}{{\em Bayesian Mosaic}} & $\rho$
                               & 95\% & 93.9\% & 93\%\\ \hhline{~|-|-|-|-|}
    &$s$ & 93.7\% & 93.4\% & 94.1\% \\ \hhline{~|-|-|-|-|}
    &$\mu$& 93\% & 92.6\% & 93.7\%\\\hline
\multirow{3}{*}{DA-MCMC} & $\rho$ & 64\% & 56.8\% & 59.3\% \\\hhline{~|-|-|-|-|}
    &$s$ & 41.7\% & 32.4\% & 31.1\% \\ \hhline{~|-|-|-|-|}
    &$\mu$ & 37\% & 24.8\% & 24.9\% \\\hline
  \end{tabular}
\end{table}

We evaluated {\em Bayesian mosaic}'s performance in quantifying
the uncertainty through the empirical coverage (EC) of credible intervals. Average EC's are
presented in Table~\ref{tab:pois-converage-highlight}. The empirical
coverages of {\em Bayesian mosaic} are close to 95\%, indicating good
uncertainty quantification, whereas the empirical coverages based on
DA-MCMC are terribly off.

\subsection{Rounded Multivariate Gaussian}
In the second simulation study, we considered 
the rounded multivariate Gaussian model~\citep{canale2011bayesian}
given in \S\ref{ssec:ab-complexity}. We fixed the sample size to be 10000, data dimensionality $p=4$ and generated 100 datasets. For each synthetic dataset we randomly generated
$\bm{\mu}$ and $\bm{\Sigma}$ from some distribution so that the
simulated data are well balanced (majority of the data entries are non-zero). More
specifically, for $j=1,\ldots,p$, we generated
$\mu_{j}$ from $\mbox{Unif}(4,5)$ and $\sigma_{jj}$ from
$\mbox{Unif}(1,1.5)$. The analysis
was done exactly as in \S\ref{ssec:sim-study-poi}.

The average MSE and EC are
summarized in Table~\ref{tab:rn-performance-highlight}. DA-MCMC seems to do slightly better than
{\em Bayesian mosaic}. But the difference in performance is
marginal. Due to the limitation of computation power for DA-MCMC, we did not do
experiments with larger sample size $n$. According to our discussion in
\S\ref{ssec:ab-complexity}, the per-iteration computational complexity of {\em
  Bayesian mosaic} is roughly $O(\log n)$ while that of DA-MCMC is
$O(n)$. This implies that {\em Bayesian mosaic} should be favored in
large sample size applications even if data are well balanced.

\begin{table}
\centering
\caption{Performance Comparison}\label{tab:rn-performance-highlight}
\begin{tabular}{ l | c | c | c }
\hline
   \multicolumn{2}{l|}{} & MSE\protect\footnotemark & EC\\ \cline{1-4}
  \multirow{3}{*}{{\em Bayesian Mosaic}} & $\rho$ & $1.07$ ($1.59$) &90.2\% \\ \hhline{~|-|-|-|}
    &$s$ & $3.23$ ($4.33$) & 94\% \\ \hhline{~|-|-|-|}
    &$\mu$& $1.45$ ($1.89$) & 91.5\% \\\hline
\multirow{3}{*}{DA-MCMC} & $\rho$ & $0.91$ ($1.37$) & 94.2\% \\\hhline{~|-|-|-|}
    &$s$ & $3.29$ ($4.34$) & 95.3\% \\ \hhline{~|-|-|-|}
    &$\mu$ & $1.45$ ($1.91$) & 91.3\% \\\hline
  \end{tabular}
\end{table}

\footnotetext{All numbers have been multiplied by $10^{4}$.}

\subsection{Citation Network Application} \label{ssec:citation}
In this study,
we considered a real-world citation network
dataset~\citep{tang2008arnetminer} that contains
  papers and citation relationships from a computer
  science bibliography website called DBLP.
Our goal is to study the overlapping structure of a group of
researchers' interests. Intuitively, two researchers who have
many research interests in common tend to be cited together more
frequently. Meanwhile, we also want to see how is the research impact
of these researchers varying in time. We hand-picked 11 active
researchers\footnote{Michael Jordan, Robert Brunner, Yann LeCun,
  Andrew McCallum, Chih-Jen Lin, Christopher Bishop, Yoshua Bengio,
  David Blei, Padhraic Smyth, Richard Sutton, Guillermo Sapiro} 
in the machine learning community.

\begin{figure} 
\centering
     \includegraphics[width=0.6\textwidth]{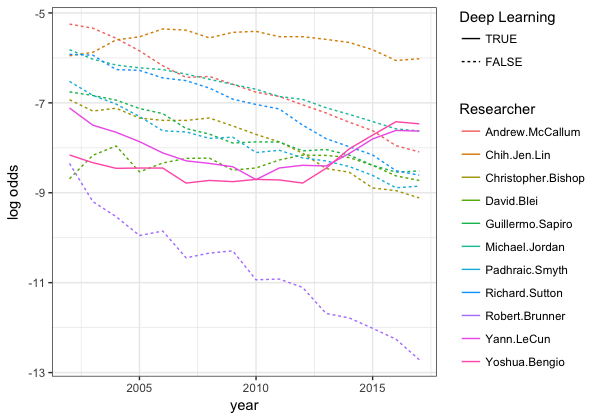}
\caption{Visualizing the posterior mean of $\bm{\mu}_{t}$'s by
  researchers.}
\label{fig:mut}
\end{figure}

In processing the database, we focused on the machine learning
literature and removed irrelevant papers. When counting the number of
citations, we ignored papers co-authored by multiple researchers
in our hand-picked group. The final dataset contains roughly $80000$
$11$-dimensional observations with each one being the number of
citations of a certain paper go to each of the 11 researchers. We used $i$ as the index for papers and
 $j$ as the index for the researchers. Letting $t_{i}$ be the year
 paper $i$ was published and $n_{jt}$ be the total number of
  publications of researcher $j$ up to year $t$, we used the
  following model
\begin{align*}
y_{ij}\overset{ind}{\sim} & \mbox{Binomial}\left(n_{jt_i},
                            \mbox{logit}^{-1}(x_{ij})\right)\mbox{ for }j=1,\ldots,p,
  \\
\bm{x}_{i} \overset{ind}{\sim} & N(\bm{\mu}_{t_{i}},\bm{\Sigma}),\quad
                                 \bm{\mu}_{t} \overset{iid}{\sim} N(\bm{\mu}_{0},\bm{D}),
\end{align*}
where $\bm{x}_{i}=\big(x_{i1},\ldots,x_{ip}\big)^{\top}$,
$p=11$ and $\bm{D}$ is a diagonal matrix with the diagonal elements
being positive. The model parameters are $\bm{\mu}_{0}$,
$\bm{\Sigma}$ and $\bm{D}$ whereas $\bm{\mu}_{t}$'s are
random effects.

\begin{figure} 
\centering
     \includegraphics[width=0.6\textwidth]{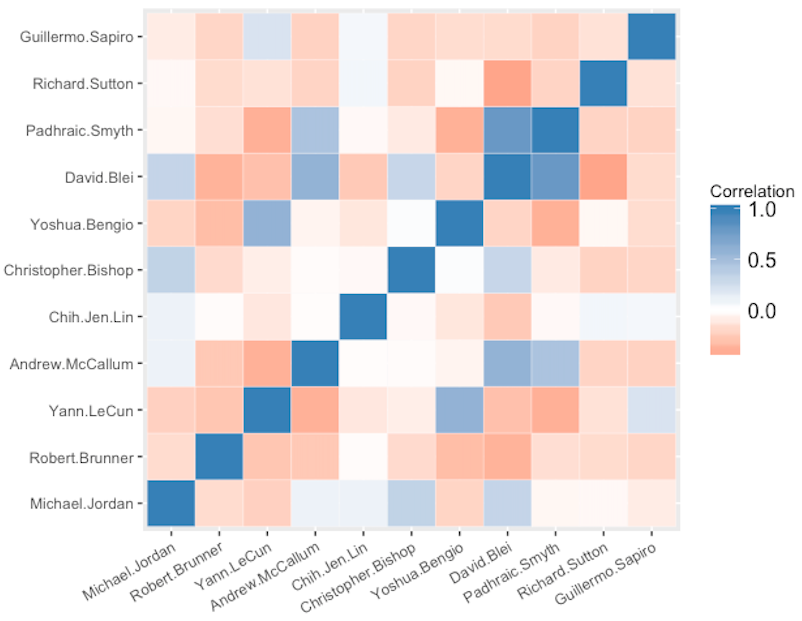}
\caption{Visualizing the correlation matrix induced by $\bm{\Sigma}$.}
\label{fig:corr}
\end{figure}

After integrating out $\bm{\mu}_{t}$'s and $\bm{x}_{i}$'s,
$\bm{y}_{i}$'s are no longer independent. It is easy to check that the
above model is {\em mosaic-type} in the generalized {\em Bayesian
  mosaic} framework of \S\ref{ssec:generalize}. Normal random walk MH sampler was implemented for sampling the {\em
knot marginals} for 40000 iterations with the
first 20000 as burn-in and thinning the rest into 500 
samples.  In sampling the {\em tiles}, we
used the plug-in approach discussed in \S\ref{ssec:sampling} and
sampled from the resulted {\em tile conditionals} via MH. For each
{\em tile},
we ran the MH
sampler 10000 iterations with the first 5000 as burn-in and thinned
the rest into 500 final samples. The entire sampling process took
around 5 hours with the jobs distributed to 11 parallel workers.

Figure~\ref{fig:mut} visualizes the posterior mean of the random
effects $\bm{\mu}_{t}$'s by different researchers. Intuitively, 
$\bm{\mu}_{t}$ is a vector of the average log-odds of a
single paper citing these researchers. Interestingly, while most of
the researchers' log-odds of being cited is decreasing, the only two exceptions are both working on deep learning.

We also computed the posterior mean of $\bm{\Sigma}$ (after
correction). The induced correlation matrix is visualized via a
heatmap in Figure~\ref{fig:corr}. Clearly, some researchers are
more likely to be cited together compared to the others,
indicating their strong overlapping research interests. For instance, Yann
Lecun and Yoshua Bengio have a stronger correlation since they
are both studying deep learning. There are researchers whose research
interests seem to overlap with many others, e.g., David Blei. Also,
there are researchers whose research interests seem to be unique in
this selected group,
e.g., Richard Sutton.

\appendix

\section{Proofs}
Whenever we write $\lim_{n\to\infty}a_{n}=a_{0}$, we mean the limit
holds with $P_{\bm{\theta}_{0}}$-probability one. We will omit the
phrase ``with $P_{\bm{\theta}_{0}}$-probability one'' for
succinctness.

\subsection{Proof of Lemma~\ref{lem:mosaic-rich}}
Since $P_{\bm{\psi}}$ is {\em
  mosaic-type}, from Definition~\ref{def:mosaic-type}, we have $\bm{\psi}_{st}$, $1\leq t\leq s \leq
p$ such that 
\[\bm{\psi}=\big[\bm{\psi}_{12}^{\top},\ldots,
  \bm{\psi}_{(p-1)p}^{\top}, \bm{\psi}_{11}^{\top},\ldots,
  \bm{\psi}_{pp}^{\top}\big]^{\top}.\]
For $j=1,\ldots,p$, the density of the univariate marginal
  distribution of $P_{\bm{\psi}}$ is
$f_{0,jj}(x_{j}|\bm{\psi}_{jj})$.
And for $1 \leq t < s \leq p$, the density of the bivariate
  marginal data distribution of $P_{\bm{\psi}}$ is
$f_{0,st}(x_{s},x_{t}|\bm{\psi}_{st}, \bm{\psi}_{ss}, \bm{\psi}_{tt})$.
Introduce
\[\bm{\theta}=\big[\bm{\psi}_{12}^{\top},\ldots,
  \bm{\psi}_{(p-1)p}^{\top}, \bm{\psi}_{11}^{\top}, \bm{\mu}_{1}^{\top},\ldots,
  \bm{\psi}_{pp}^{\top}, \bm{\mu}_{p}^{\top}\big]^{\top},\]
and let $\bm{\theta}_{st}=\bm{\psi}_{st}$ for $1\leq s < t \leq p$ and $\bm{\theta}_{jj}=\left(\bm{\psi}_{jj}^{\top}, \bm{\mu}_{j}^{\top}\right)^{\top}$ for $j=1,\ldots,p$.
From (\ref{lem:mosaic-rich}), it can be shown that
\begin{align*}
& \int\cdots\int f(\bm{y}|\bm{\mu},\bm{\psi})\intd y_{2}\cdots\intd
  y_{p} \\
= & \int
  f_{0}(\bm{x}|\bm{\psi}) g_{1}(y_{1}|x_{1},\bm{\mu}_{1})
    \prod_{j=2}^{p}\bigg[\int g_{j}(y_{j}|x_{j},\bm{\mu}_{j}) \intd
    y_{j}\bigg]\intd \bm{x} \\
= & \int
  f_{0,11}(x_{1}|\bm{\psi}_{11}) g_{1}(y_{1}|x_{1},\bm{\mu}_{1}) \intd
    x_{1} \\
= & f_{11}(y_{1}|\bm{\theta}_{11}).
\end{align*}
Similarly, one can show that there exists a collection of density
functions $\left\{f_{jj}(y_{j}|\bm{\theta}_{jj})\right\}$ such that
for $j=2,\ldots,p$, the density of the univariate marginal distribution of $P_{\bm{\mu},\bm{\psi}}$ is
$f_{jj}(y_{j}|\bm{\theta}_{jj})$. And that there exists a collection of density
functions $\left\{f_{st}(y_{s},y_{t}|\bm{\theta}_{st}, \bm{\theta}_{ss},
  \bm{\theta}_{tt})\right\}$ such that
for $1 \leq t < s \leq p$, the density of the bivariate
  marginal data distribution of $P_{\bm{\mu},\bm{\psi}}$ is
$f_{st}(y_{s},y_{t}|\bm{\theta}_{st}, \bm{\theta}_{ss},
  \bm{\theta}_{tt})$.
Hence $P_{\bm{\mu},\bm{\psi}}$ is also {\em Mosaic-type}.

\subsection{Taylor Expansions \& Upper Bounds}
We will find
the limit and derive an upper
bound for the Taylor expansion of  $L_{n}(\bm{\eta},\bm{\zeta})$, which will be
used in later proofs.

\begin{lemma} 
\label{lem:quad}
Letting $\bm{\alpha}$ be a
$d$-dimensional {\em multi-index} for $\bm{x}$, consider
$M_{\bm{\alpha}}<\infty$ for all $\bm{\alpha}$ such that
$|\bm{\alpha}|=3$. Then for any positive definite matrix
$\bm{\Lambda}$, we can find $\delta>0$ such that when 
$|\bm{x}|<\sqrt{n}\delta$,
\begin{align*}
\bigg|\sum_{|\bm{\alpha}|=3}M_{\bm{\alpha}}\frac{\bm{x}^{\bm{\alpha}}}{\sqrt{n}}\bigg| < \frac{1}{2}\bm{x}^{\top}\bm{\Lambda}\bm{x}.
\end{align*}
\end{lemma}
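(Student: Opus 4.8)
The plan is to dominate the homogeneous cubic polynomial $\sum_{|\bm{\alpha}|=3}M_{\bm{\alpha}}\bm{x}^{\bm{\alpha}}$ by a constant multiple of $\|\bm{x}\|^{3}$, with a constant that does not depend on $n$ or $\bm{x}$; to bound the quadratic form $\bm{x}^{\top}\bm{\Lambda}\bm{x}$ from below by $\lambda\|\bm{x}\|^{2}$ with $\lambda=\lambda_{\min}(\bm{\Lambda})>0$; and then to use the hypothesis $|\bm{x}|<\sqrt{n}\delta$ to trade the surplus power $\|\bm{x}\|$ against the factor $1/\sqrt{n}$.

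Concretely, set $C=\sum_{|\bm{\alpha}|=3}|M_{\bm{\alpha}}|$, a finite number since it is a sum over the $\binom{d+2}{3}$ multi-indices of order three, each summand finite by hypothesis. Writing $\|\cdot\|$ for the Euclidean norm (the choice of norm in $|\bm{x}|$ is immaterial, affecting $C$ only by a dimension-dependent factor), for any $\bm{\alpha}$ with $|\bm{\alpha}|=3$ one has $|\bm{x}^{\bm{\alpha}}|=\prod_{j}|x_{j}|^{\alpha_{j}}\leq\prod_{j}\|\bm{x}\|^{\alpha_{j}}=\|\bm{x}\|^{|\bm{\alpha}|}=\|\bm{x}\|^{3}$, hence $\bigl|\sum_{|\bm{\alpha}|=3}M_{\bm{\alpha}}\bm{x}^{\bm{\alpha}}\bigr|\leq C\|\bm{x}\|^{3}$. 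Since $\bm{\Lambda}$ is positive definite, $\bm{x}^{\top}\bm{\Lambda}\bm{x}\geq\lambda\|\bm{x}\|^{2}$ with $\lambda>0$ its smallest eigenvalue. Choose $\delta=\lambda/(2C)$ when $C>0$ (and any $\delta>0$ when $C=0$, in which case the left-hand side is identically zero). Then for every $\bm{x}\neq\bm{0}$ with $|\bm{x}|<\sqrt{n}\delta$,
\[
\biggl|\sum_{|\bm{\alpha}|=3}M_{\bm{\alpha}}\frac{\bm{x}^{\bm{\alpha}}}{\sqrt{n}}\biggr|\leq\frac{C\|\bm{x}\|^{3}}{\sqrt{n}}=\frac{C\|\bm{x}\|}{\sqrt{n}}\,\|\bm{x}\|^{2}<C\delta\,\|\bm{x}\|^{2}=\frac{\lambda}{2}\|\bm{x}\|^{2}\leq\frac{1}{2}\bm{x}^{\top}\bm{\Lambda}\bm{x},
\]
which is the claim; at $\bm{x}=\bm{0}$ both sides vanish, and only $\bm{x}\neq\bm{0}$ is needed in the applications.

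The argument is elementary and I do not foresee a substantive obstacle. The one point that must be handled with care is that $\delta=\lambda/(2C)$ be chosen free of $n$, so that the conclusion holds uniformly over $n$; this is exactly why the hypothesis constrains $|\bm{x}|$ by $\sqrt{n}\delta$ rather than by a constant, since it is the ratio $\|\bm{x}\|/\sqrt{n}\leq\delta$ that converts the cubic bound into a quadratic bound with an $n$-independent coefficient.
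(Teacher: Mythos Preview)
Your argument is correct. Both your proof and the paper's proof exploit the same structural fact: a homogeneous cubic in $\bm{x}$, divided by $\sqrt{n}$, is bounded by $(\|\bm{x}\|/\sqrt{n})$ times a quadratic in $\bm{x}$, and the hypothesis $|\bm{x}|<\sqrt{n}\delta$ makes the prefactor at most $\delta$. The execution differs. You bound every monomial crudely by $|\bm{x}^{\bm{\alpha}}|\leq\|\bm{x}\|^{3}$, sum to get $C\|\bm{x}\|^{3}$, and compare against $\lambda_{\min}(\bm{\Lambda})\|\bm{x}\|^{2}$, yielding the explicit choice $\delta=\lambda_{\min}(\bm{\Lambda})/(2C)$. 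The paper instead handles each monomial $\bm{x}^{\bm{\alpha}}$ by peeling off one coordinate (bounded by $\sqrt{n}\delta$) and applying the AM--GM inequality $|x_{j_{2}}x_{j_{3}}|\leq\tfrac{1}{2}(x_{j_{2}}^{2}+x_{j_{3}}^{2})$ to the remaining two, so that the cubic sum is dominated by $\tfrac{\delta}{2}\bm{x}^{\top}\bm{\Psi}\bm{x}$ for some diagonal $\bm{\Psi}$ with positive entries; one then chooses $\delta$ small enough that $\bm{\Lambda}-\delta\bm{\Psi}$ remains positive definite. Your route is shorter and gives an explicit $\delta$; the paper's decomposition is slightly sharper in that the resulting quadratic $\bm{\Psi}$ tracks which coordinates actually appear in the cubic, but this refinement is not used anywhere downstream.
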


\begin{proof}
It is easily seen that
\begin{align*}
\bigg|\sum_{|\bm{\alpha}|=3}M_{\bm{\alpha}}\frac{\bm{x}^{\bm{\alpha}}}{\sqrt{n}}\bigg|
  < \sum_{|\bm{\alpha}|=3}M_{\bm{\alpha}}\bigg|\frac{\bm{x}^{\bm{\alpha}}}{\sqrt{n}}\bigg|.
\end{align*}
Consider $\bm{\alpha}=(\alpha_1,\ldots,\alpha_d)$ where $|\bm{\alpha}|=3$, and suppose $\alpha_{j_1}$
is the first non-zero index. Since
$|\bm{x}|<\sqrt{n}\delta$, we have
$\frac{|x_{j_1}|}{\sqrt{n}}<\delta$. 
Assume that $\alpha_{j_2}$ and $\alpha_{j_3}$ are the other two
non-zero indices, note that we allow $j_{2}=j_{3}$. We have
\begin{align*}
\bigg|\frac{\bm{x}^{\bm{\alpha}}}{\sqrt{n}}\bigg| <
  \delta|x_{j_2}x_{j_3}| \leq \frac{\delta}{2}\big(x^2_{j_2}+x^2_{j_3}\big).
\end{align*}
Doing this for all $\bm{\alpha}$ such that $|\bm{\alpha}|=3$, it can be shown that
\begin{align*}
\sum_{|\bm{\alpha}|=3}M_{\bm{\alpha}}\bigg|\frac{\bm{x}^{\bm{\alpha}}}{\sqrt{n}}\bigg|
  < \frac{\delta}{2}\bm{x}^{\top}\bm{\Psi}\bm{x},
\end{align*}
where $\bm{\Psi}$ is some diagonal matrix with all diagonal elements
being positive.
Since $\bm{\Lambda}$ is positive definite, we can always find 
$\delta>0$ so that $\bm{\Lambda}-\delta\bm{\Psi}$ is also
positive definite, which implies that for any $\bm{x}\in\setr^{d}$,
\begin{align*}
\frac{1}{2}\bm{x}^{\top}\bm{\Lambda}\bm{x} -
  \frac{\delta}{2}\bm{x}^{\top}\bm{\Psi}\bm{x} =
  \frac{1}{2}\bm{x}^{\top}\big(\bm{\Lambda}-\delta
  \bm{\Psi}\big)\bm{x} > 0.
\end{align*}
\end{proof}

Recall that $\bm{t}=\sqrt{n}\big(\bm{\eta}-\hat{\bm{\eta}}_{n}\big)$
and $\bm{r}=\sqrt{n}\big(\bm{\zeta}-\hat{\bm{\zeta}}_{n}\big)$. Letting
$\bm{\alpha}$ be a $d_{\eta}$-dimensional {\em multi-index} for 
$\bm{t}$ and $\bm{\beta}$ be a $d_{\zeta}$-dimensional {\em multi-index} for
$\bm{r}$, we expand the Taylor series for $L_{n}(\hat{\bm{\eta}}_{n}+\bm{t}/\sqrt{n},
                                              \hat{\bm{\zeta}}_{n}+\bm{r}/\sqrt{n})
                                              -
                                              L_{n}(\hat{\bm{\eta}}_{n},
                                              \hat{\bm{\zeta}}_{n})$ and get
\begin{align}
&L_{n}(\hat{\bm{\eta}}_{n}+\bm{t}/\sqrt{n},
                                              \hat{\bm{\zeta}}_{n}+\bm{r}/\sqrt{n})
                                              -
                                              L_{n}(\hat{\bm{\eta}}_{n},
                                              \hat{\bm{\zeta}}_{n})
  \label{eq:expansion-theta1}\\
= & \frac{1}{2}\begin{brsm}
    \bm{t}\\
    \bm{r}
\end{brsm}^{\top}\frac{1}{n}L_{n}^{(2)}(\hat{\bm{\eta}}_{n},
                                              \hat{\bm{\zeta}}_{n}) \begin{brsm}
    \bm{t}\\
    \bm{r}
\end{brsm} +R_{n}\left(\bm{t},\bm{r}\right), \nonumber
\end{align}
where $\bm{\eta}^{'}_{n}$ is between
$\hat{\bm{\eta}}_{n}+\bm{t}/\sqrt{n}$ and
$\hat{\bm{\eta}}_{n}$, $\bm{\zeta}^{'}_{n}$ is between
$\hat{\bm{\zeta}}_{n}+\bm{r}/\sqrt{n}$ and $\hat{\bm{\zeta}}_{n}$ and
\begin{align*}
R_{n}\left(\bm{t},\bm{r}\right)=\sum_{|\bm{\alpha}|+|\bm{\beta}|=3}\frac{1}{n\bm{\alpha}!}\partial^{\bm{\alpha}}\partial^{\bm{\beta}}L_{n}(\bm{\eta}^{'}_{n},\bm{\zeta}^{'}_{n})\frac{\bm{t}^{\bm{\alpha}}\bm{r}^{\bm{\beta}}}{\sqrt{n}}.
\end{align*}
Letting $\hat{\bm{I}}_n = -\frac{1}{n}L_{n}^{(2)}(\hat{\bm{\eta}}_{n},
                                              \hat{\bm{\lambda}}_n)$
and $\hat{\bm{I}}_n = 
\begin{brsm}
   \hat{\bm{I}}^{11}_n & \hat{\bm{I}}^{12}_n\\
   \hat{\bm{I}}^{21}_n & \hat{\bm{I}}^{22}_n
\end{brsm}$, (\ref{eq:expansion-theta1}) can be written as
\begin{align}
\label{eq:expansion-theta2}
\begin{split}
&L_{n}(\hat{\bm{\eta}}_{n}+\bm{t}/\sqrt{n},
                                              \hat{\bm{\zeta}}_{n}+\bm{r}/\sqrt{n})
                                              -
                                              L_{n}(\hat{\bm{\eta}}_{n},
                                              \hat{\bm{\zeta}}_{n})\\
= & 
-\frac{1}{2}\bm{t}^{\top}\hat{\bm{I}}^{11}_n\bm{t}-\frac{1}{2}\bm{r}^{\top}\hat{\bm{I}}^{22}_n\bm{r}-\bm{t}^{\top}\hat{\bm{I}}^{12}_n\bm{r}+
R_{n}(\bm{t},\bm{r}).
\end{split}
\end{align}

\begin{corollary}
\label{coro:remainder2}
If conditions~\ref{cond:1-1}-\ref{cond:1-5} hold for $f_{2}(\bm{y}|\bm{\eta},\bm{\zeta})$,
then $\exists \delta>0$ such that for all $\bm{t}$ and $\bm{r}$ satisfying
$\left\|\begin{brsm}
    \bm{t}/\sqrt{n}\\
    \bm{r}/\sqrt{n}
\end{brsm}\right\|<\delta$,
the followings are true:
\begin{itemize}
\item[i)] For any fixed $\bm{t}$ and $\bm{r}$,
$\lim_{n\to\infty}R_{n}(\bm{t},\bm{r})=0$.
\item[ii)] For any positive definite matrix $\bm{\Lambda}$,
$\exists N\in\setnnn$ such that
$\forall n>N$,
\begin{align}
\label{eq: remainder1-1}
\big|R_{n}(\bm{t},\bm{r})\big|
  < \begin{brsm}
    \bm{t}\\
    \bm{r}
\end{brsm}^{\top}\bm{\Lambda}\begin{brsm}
    \bm{t}\\
    \bm{r}
\end{brsm}.
\end{align}
\end{itemize}
\end{corollary}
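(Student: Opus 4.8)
The plan is to bound $R_n(\bm{t},\bm{r})$ (the remainder defined right after (\ref{eq:expansion-theta1})) term by term, replacing each normalized third derivative of $L_n$ by the dominating function that Condition~\ref{cond:1-2} supplies, then let $n\to\infty$ using the strong law of large numbers, and finally turn the resulting cubic bound into a quadratic one through Lemma~\ref{lem:quad}. Throughout write $\bm{x}=(\bm{t}^{\top},\bm{r}^{\top})^{\top}$, so that the constraint in the statement reads $\|\bm{x}\|<\sqrt{n}\,\delta$ and $\bm{t}^{\bm{\alpha}}\bm{r}^{\bm{\beta}}$ is a degree-$3$ monomial in $\bm{x}$.

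First I would fix $\delta$. Condition~\ref{cond:1-2}, applied to $f_2$, yields a ball $\mathcal{B}=\mathcal{B}_{(\bm{\eta},\bm{\zeta})}\big((\bm{\eta}_{0},\bm{\zeta}_{0}),\delta_{0}\big)$ on which $\ell_2$ is thrice differentiable and $\big|\partial^{\bm{\alpha}}\partial^{\bm{\beta}}\ell_2(\bm{\eta},\bm{\zeta},\bm{y})\big|\le M_{\bm{\alpha}\bm{\beta}}(\bm{y})$ for every $(\bm{\alpha},\bm{\beta})$ with $|\bm{\alpha}|+|\bm{\beta}|=3$, with $\mathbb{E}_{\bm{\theta}_{0}}M_{\bm{\alpha}\bm{\beta}}(\bm{y})<\infty$. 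Take $\delta\le\delta_{0}/3$ (possibly shrunk further, see below). When $\|\bm{x}\|<\sqrt{n}\,\delta$, i.e.\ $\|(\bm{t}/\sqrt{n},\bm{r}/\sqrt{n})\|<\delta$, the Lagrange point $(\bm{\eta}'_{n},\bm{\zeta}'_{n})$ --- which lies on the segment from $(\hat{\bm{\eta}}_{n},\hat{\bm{\zeta}}_{n})$ to $(\hat{\bm{\eta}}_{n}+\bm{t}/\sqrt{n},\hat{\bm{\zeta}}_{n}+\bm{r}/\sqrt{n})$ --- is within $\delta$ of $(\hat{\bm{\eta}}_{n},\hat{\bm{\zeta}}_{n})$; and Condition~\ref{cond:1-5} for $f_2$ gives $(\hat{\bm{\eta}}_{n},\hat{\bm{\zeta}}_{n})\to(\bm{\eta}_{0},\bm{\zeta}_{0})$, so for all $n$ large it is within $2\delta\le\delta_{0}$ of $(\bm{\eta}_{0},\bm{\zeta}_{0})$ and hence in $\mathcal{B}$. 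This holds uniformly over all admissible $(\bm{t},\bm{r})$, because neither the bound $\delta$ on the first distance nor the maximum-likelihood error involves $(\bm{t},\bm{r})$.

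On $\mathcal{B}$ we have $\frac1n\big|\partial^{\bm{\alpha}}\partial^{\bm{\beta}}L_n(\bm{\eta}'_{n},\bm{\zeta}'_{n})\big|\le\frac1n\sum_{i=1}^n M_{\bm{\alpha}\bm{\beta}}(\bm{y}_i)$, whose right side is free of $(\bm{t},\bm{r})$ and converges, by the strong law of large numbers, to $\mathbb{E}_{\bm{\theta}_{0}}M_{\bm{\alpha}\bm{\beta}}(\bm{y})<\infty$; hence there is $N$ (taken large enough that the previous paragraph's conclusion also holds) with $\frac1n\sum_i M_{\bm{\alpha}\bm{\beta}}(\bm{y}_i)\le C_{\bm{\alpha}\bm{\beta}}:=\mathbb{E}_{\bm{\theta}_{0}}M_{\bm{\alpha}\bm{\beta}}(\bm{y})+1$ for all $n>N$. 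Substituting into the definition of $R_n$ gives, for $n>N$ and all admissible $(\bm{t},\bm{r})$,
\[
\big|R_n(\bm{t},\bm{r})\big|\ \le\ \sum_{|\bm{\alpha}|+|\bm{\beta}|=3}\frac{C_{\bm{\alpha}\bm{\beta}}}{\bm{\alpha}!}\,\frac{\big|\bm{t}^{\bm{\alpha}}\bm{r}^{\bm{\beta}}\big|}{\sqrt{n}}.
\]
Part (i) is then immediate: with $\bm{t},\bm{r}$ held fixed the right side is a constant times $n^{-1/2}\to0$. For part (ii), given a positive definite $\bm{\Lambda}$, the computation used to prove Lemma~\ref{lem:quad} (whose dominating matrix is diagonal, hence applies verbatim to the above sum of absolute values with $\bm{x}$ replaced by its vector of componentwise absolute values) bounds this sum by $\tfrac{\delta}{2}\bm{x}^{\top}\bm{\Psi}\bm{x}$ on $\{\|\bm{x}\|<\sqrt{n}\,\delta\}$ for a fixed positive diagonal $\bm{\Psi}$ depending only on the $C_{\bm{\alpha}\bm{\beta}}$'s; shrinking $\delta$ so that $\bm{\Lambda}-\tfrac{\delta}{2}\bm{\Psi}$ is positive semidefinite then yields $\big|R_n(\bm{t},\bm{r})\big|<\bm{x}^{\top}\bm{\Lambda}\bm{x}$ for all $n>N$.

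There is no genuinely hard step; the work is bookkeeping. The two things to get right are that the Lagrange intermediate points stay inside the domination neighborhood $\mathcal{B}$ \emph{uniformly} over the admissible range of $(\bm{t},\bm{r})$ --- which is exactly why $\delta$ must be small and why the almost-sure consistency of $(\hat{\bm{\eta}}_{n},\hat{\bm{\zeta}}_{n})$ (Condition~\ref{cond:1-5}) is invoked --- and that the resulting bound on the normalized third derivatives is uniform in $(\bm{t},\bm{r})$, which holds because $\frac1n\sum_i M_{\bm{\alpha}\bm{\beta}}(\bm{y}_i)$ simply does not involve those variables. One should also flag the mild interplay of quantifiers in (ii): the radius $\delta$ may need to be further shrunk as a function of $\bm{\Lambda}$ when applying Lemma~\ref{lem:quad}, which is harmless in every later use of the corollary.
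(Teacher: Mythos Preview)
Your proposal is correct and follows essentially the same route as the paper: bound each normalized third derivative by the dominating functions supplied by Condition~\ref{cond:1-2}, use the strong law of large numbers to control the sample averages, deduce part~(i) from the explicit $n^{-1/2}$ decay, and then invoke Lemma~\ref{lem:quad} to convert the cubic bound into the quadratic form required for part~(ii). Your write-up is in fact slightly more careful than the paper's in two places---you explicitly argue via Condition~\ref{cond:1-5} that the Lagrange point $(\bm{\eta}'_n,\bm{\zeta}'_n)$ remains in the domination neighborhood uniformly over admissible $(\bm{t},\bm{r})$, and you correctly flag that the radius $\delta$ may need to depend on $\bm{\Lambda}$---but these are refinements of the same argument, not a different approach.
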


\begin{proof}
From condition~\ref{cond:1-2},  for some $\delta^{'}>0$ we have
\[\sup_{\bm{z}\in \mathcal{B}(\bm{z}_0, \delta^{'})} \left|\partial^{\bm{\alpha}}\ell_{2}(\bm{\eta},\bm{\zeta},
\bm{y})\right|\leq
  M_{\bm{\alpha},2}(\bm{y}),\]
and $\mathbb{E}_{\bm{\theta}_{0}}M_{\bm{\alpha},2}(\bm{y})<\infty$. Letting $\delta<\delta^{'}$, we
have 
\begin{align*}
\left|\sum_{|\bm{\alpha}|+|\bm{\beta}|=3}\frac{1}{n\bm{\alpha}!}\partial^{\bm{\alpha}}\partial^{\bm{\beta}}L_{n}(\bm{\eta}^{'}_{n},\bm{\zeta}^{'}_{n})\frac{\bm{t}^{\bm{\alpha}}\bm{r}^{\bm{\beta}}}{\sqrt{n}}\right|< \sum_{|\bm{\alpha}|+|\bm{\beta}|=3}\frac{1}{n\bm{\alpha}!}\sum_{i=1}^{n}M_{\bm{\alpha},2}(\bm{y}_{i}) \frac{\bm{t}^{\bm{\alpha}}\bm{r}^{\bm{\beta}}}{\sqrt{n}}.
\end{align*}
From strong law of large numbers (SLLN) we know that $\lim_{n\to\infty}\frac{1}{n}\sum_{i=1}^{n}M_{\bm{\alpha},2}(\bm{y}_{i}) =\mathbb{E}_{\bm{\theta}_{0}}M_{\bm{\alpha},2}(\bm{y})$,
hence $\exists N_1\in\setnnn$ such that $\forall n>N_{1}$,
\begin{align*}
\left|\sum_{|\bm{\alpha}|+|\bm{\beta}|=3}\frac{1}{n\bm{\alpha}!}\partial^{\bm{\alpha}}\partial^{\bm{\beta}}L_{n}(\bm{\eta}^{'}_{n},\bm{\zeta}^{'}_{n})\frac{\bm{t}^{\bm{\alpha}}\bm{r}^{\bm{\beta}}}{\sqrt{n}}\right|< 2\sum_{|\bm{\alpha}|+|\bm{\beta}|=3}\frac{1}{\bm{\alpha}!}\mathbb{E}_{\bm{\theta}_{0}}M_{\bm{\alpha},2}(\bm{y}) \frac{\bm{t}^{\bm{\alpha}}\bm{r}^{\bm{\beta}}}{\sqrt{n}}.
\end{align*}
Apparently, for fixed $\bm{t}$ and $\bm{r}$,
\[\lim_{n\to\infty}\left|R_{n}\left(\bm{t},\bm{r}\right)\right|<\lim_{n\to\infty}2\sum_{|\bm{\alpha}|+|\bm{\beta}|=3}\frac{1}{\bm{\alpha}!}\mathbb{E}_{\bm{\theta}_{0}}M_{\bm{\alpha},2}(\bm{y}) \frac{\bm{t}^{\bm{\alpha}}\bm{r}^{\bm{\beta}}}{\sqrt{n}}=0.\]
Therefore $\lim_{n\to\infty}\left|R_{n}\left(\bm{t},\bm{r}\right)\right|=0$.

Applying Lemma~\ref{lem:quad}, we could find
$\delta<\delta^{'}$ and $N_{2}\in\setnnn$ such that $\forall
n>\max\left\{N_{1},N_{2}\right\}$,
\begin{align*}
\left|2\sum_{|\bm{\alpha}|+|\bm{\beta}|=3}\frac{1}{\bm{\alpha}!}\mathbb{E}_{\bm{\theta}_{0}}M_{\bm{\alpha},2}(\bm{y}) \frac{\bm{t}^{\bm{\alpha}}\bm{r}^{\bm{\beta}}}{\sqrt{n}}\right|< \begin{brsm}
    \bm{t}\\
    \bm{r}
\end{brsm}^{\top}\bm{\Lambda}\begin{brsm}
    \bm{t}\\
    \bm{r}
\end{brsm}.
\end{align*}
Letting $N=\max\left\{N_{1},N_{2}\right\}$ finishes the proof.
\end{proof}

\subsection{Proof of Lemma~\ref{lem:bi-dist-dct}}
For $\delta>0$ , we define
$A_{n,1}(\delta)=\{\bm{r}:\|\bm{r}\|<\sqrt{n}\delta\}$,
$A_{n,2}(\delta)=\{\bm{r}:\|\bm{r}\|>\sqrt{n}\delta\}$,
$B_{n,1}(\delta)=\{\bm{t}:\|\bm{t}\|<\sqrt{n}\delta\}$ and
$B_{n,2}(\delta)=\{\bm{t}:\|\bm{t}\|>\sqrt{n}\delta\}$. We first provide
a lemma that will be used in our later proof of the main
result.
\begin{lemma} \label{lem:bi-dist-dct}
Suppose that conditions~\ref{cond:1-1}-\ref{cond:1-5} hold for $f_{2}(\bm{y}|\bm{\eta},\bm{\zeta})$ and the
prior density $\pi(\bm{\eta}|\bm{\zeta})$ is continuous and positive
at $\begin{brsm}
\bm{\eta}_{0} \\
    \bm{\zeta}_{0}
\end{brsm}$, then $\exists \delta_{t}>0$, $\delta_{r}>0$ and $N\in\setnnn$ such
that the followings
are true:
\begin{itemize}
\item For any fixed $\bm{t}$ and $\bm{r}$, with $P_{\bm{\theta}_{0}}$-probability one
\begin{align} \label{eq: lem-bi-dist-dct-limit}
\begin{split}
& \lim_{n=\infty}\pi_{n,2}^{*}(\bm{t}|\bm{r})\mathbbm{1}\{\bm{r}\in
  A_{n,1}(\delta_{r}),\bm{t}\in B_{n,1}(\delta_{t})\}\\
=&\phi\left(\bm{t}\middle|-\big(\bm{I}^{11}_{0}\big)^{-1}\bm{I}^{12}_{0}\bm{r},\big(\bm{I}^{11}_{0}\big)^{-1}\right).
\end{split}
\end{align}

\item $\exists \epsilon>0$ and $c(\delta_{r},\delta_{t})>0$ such that
\begin{align} \label{eq: lem-bi-dist-dct-tozero}
\pi^{*}_{n,2}(\bm{t}|\bm{r}) \mathbbm{1}\{\bm{r}\in
  A_{n,1}(\delta_{r}),\bm{t}\in B_{n,2}(\delta_{t})\}< & \frac{\left|4\pi\bm{I}_{0}^{11}\right|^{-1/2}}{c(\delta_{r},\delta_{t})}\exp\{-n\epsilon\}.
\end{align}
\end{itemize}
\end{lemma}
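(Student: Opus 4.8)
The plan is to write $\pi_{n,2}^{*}(\bm{t}\mid\bm{r})=g_{n}(\bm{t},\bm{r})/a_{n}(\bm{r})$, where $g_{n}(\bm{t},\bm{r})=\exp\bigl\{L_{n}(\hat{\bm{\eta}}_{n}+\bm{t}/\sqrt{n},\hat{\bm{\zeta}}_{n}+\bm{r}/\sqrt{n})-L_{n}(\hat{\bm{\eta}}_{n},\hat{\bm{\zeta}}_{n})\bigr\}\,\pi(\hat{\bm{\eta}}_{n}+\bm{t}/\sqrt{n}\mid\hat{\bm{\zeta}}_{n}+\bm{r}/\sqrt{n})$ and $a_{n}(\bm{r})=\int g_{n}(\bm{t},\bm{r})\intd\bm{t}$, and to run a Laplace-type analysis of this ratio separately on $B_{n,1}(\delta_{t})$ and $B_{n,2}(\delta_{t})$, reusing the Taylor expansion~(\ref{eq:expansion-theta2}) and Corollary~\ref{coro:remainder2}. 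Throughout I would take $\delta_{t}$ and $\delta_{r}$ no larger than the $\delta$ furnished by Corollary~\ref{coro:remainder2}, and shrink $\delta_{r}$ once more at the end.

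For the pointwise statement~(\ref{eq: lem-bi-dist-dct-limit}), fix $\bm{t}$ and $\bm{r}$; since the indicator eventually equals one it suffices to show $\pi_{n,2}^{*}(\bm{t}\mid\bm{r})\to\phi\bigl(\bm{t}\mid-(\bm{I}_{0}^{11})^{-1}\bm{I}_{0}^{12}\bm{r},(\bm{I}_{0}^{11})^{-1}\bigr)$. By~(\ref{eq:expansion-theta2}), the strong law, $\hat{\bm{I}}_{n}\to\bm{I}_{0}$ (from the consistency of $(\hat{\bm{\eta}}_{n},\hat{\bm{\zeta}}_{n})$ in condition~\ref{cond:1-5} for $f_{2}$), $R_{n}(\bm{t},\bm{r})\to0$ by Corollary~\ref{coro:remainder2}(i), and continuity and positivity of the prior, $g_{n}(\bm{t},\bm{r})\to h(\bm{t},\bm{r}):=\pi(\bm{\eta}_{0}\mid\bm{\zeta}_{0})\exp\bigl\{-\tfrac{1}{2}\bm{t}^{\top}\bm{I}_{0}^{11}\bm{t}-\bm{t}^{\top}\bm{I}_{0}^{12}\bm{r}-\tfrac{1}{2}\bm{r}^{\top}\bm{I}_{0}^{22}\bm{r}\bigr\}$. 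Completing the square in $\bm{t}$---legitimate because $\bm{I}_{0}^{11}$, a principal block of the positive definite $\bm{I}_{0}$, is positive definite---gives $h(\bm{t},\bm{r})/\int h(\bm{t}',\bm{r})\intd\bm{t}'=\phi\bigl(\bm{t}\mid-(\bm{I}_{0}^{11})^{-1}\bm{I}_{0}^{12}\bm{r},(\bm{I}_{0}^{11})^{-1}\bigr)$, so it remains to prove $a_{n}(\bm{r})\to\int h(\bm{t}',\bm{r})\intd\bm{t}'$. I would do this by dominated convergence: on $B_{n,1}(\delta_{t})$, Corollary~\ref{coro:remainder2}(ii) with a small enough positive definite $\bm{\Lambda}$, together with $\hat{\bm{I}}_{n}\to\bm{I}_{0}$ and local boundedness of $\pi$ near $(\bm{\eta}_{0},\bm{\zeta}_{0})$, yields for $n$ large the $n$-free integrable majorant $g_{n}(\bm{t},\bm{r})\mathbbm{1}\{\bm{t}\in B_{n,1}(\delta_{t})\}\le C(\bm{r})\exp\{-c\,\bm{t}^{\top}\bm{I}_{0}^{11}\bm{t}\}$, whence $\int_{B_{n,1}(\delta_{t})}g_{n}\to\int h$; the $B_{n,2}(\delta_{t})$ part vanishes by the tail estimate next.

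For~(\ref{eq: lem-bi-dist-dct-tozero}), suppose $\bm{t}\in B_{n,2}(\delta_{t})$ and $n$ is large enough that $\|\hat{\bm{\eta}}_{n}-\bm{\eta}_{0}\|<\delta_{t}/2$; then $\hat{\bm{\eta}}_{n}+\bm{t}/\sqrt{n}$ is at distance $>\delta_{t}/2$ from $\bm{\eta}_{0}$, so $(\hat{\bm{\eta}}_{n}+\bm{t}/\sqrt{n},\hat{\bm{\zeta}}_{n}+\bm{r}/\sqrt{n})$ lies outside $\mathcal{B}_{(\bm{\eta},\bm{\zeta})}((\bm{\eta}_{0},\bm{\zeta}_{0}),\delta_{t}/2)$, and condition~\ref{cond:1-4} for $f_{2}$ supplies an $\epsilon>0$ with $L_{n}(\hat{\bm{\eta}}_{n}+\bm{t}/\sqrt{n},\hat{\bm{\zeta}}_{n}+\bm{r}/\sqrt{n})-L_{n}(\bm{\eta}_{0},\bm{\zeta}_{0})<-n\epsilon$; since $L_{n}(\bm{\eta}_{0},\bm{\zeta}_{0})\le L_{n}(\hat{\bm{\eta}}_{n},\hat{\bm{\zeta}}_{n})$ this gives $g_{n}(\bm{t},\bm{r})\le\bar{\pi}\,e^{-n\epsilon}$, $\bar{\pi}$ denoting an upper bound of the prior density (and also $\int_{B_{n,2}(\delta_{t})}g_{n}\to0$). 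For the denominator I would lower-bound $a_{n}(\bm{r})$ by integrating $g_{n}(\cdot,\bm{r})$ over a small fixed ball around the approximate conditional mode $-(\hat{\bm{I}}_{n}^{11})^{-1}\hat{\bm{I}}_{n}^{12}\bm{r}$, where~(\ref{eq:expansion-theta2}), $\hat{\bm{I}}_{n}\to\bm{I}_{0}$, and Corollary~\ref{coro:remainder2}(ii) give $a_{n}(\bm{r})\ge c(\delta_{r},\delta_{t})\,|4\pi\bm{I}_{0}^{11}|^{1/2}$ for $n$ large and $\delta_{r}$ small enough; dividing the two bounds yields~(\ref{eq: lem-bi-dist-dct-tozero}).

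The main obstacle is uniformity over the \emph{unbounded} slice $\bm{r}\in A_{n,1}(\delta_{r})$. Completing the square throws off a pure-$\bm{r}$ factor $\exp\{-\tfrac{1}{2}\bm{r}^{\top}(\bm{I}_{0}^{22}-\bm{I}_{0}^{21}(\bm{I}_{0}^{11})^{-1}\bm{I}_{0}^{12})\bm{r}\}$ with a positive definite Schur complement in the exponent: it cancels between $g_{n}$ and $a_{n}$ in the normalized ratio but not in the cruder numerator/denominator comparison behind~(\ref{eq: lem-bi-dist-dct-tozero}), where for $\|\bm{r}\|$ near $\sqrt{n}\delta_{r}$ it is as small as $e^{-cn\delta_{r}^{2}}$ for a fixed $c>0$. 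Hence $\delta_{t}$ must be fixed first, $\epsilon=\epsilon(\delta_{t})$ extracted from condition~\ref{cond:1-4}, and only then $\delta_{r}$ taken small enough that this loss is swamped by $e^{-n\epsilon}$; the same shrinking keeps $\|\bm{t}/\sqrt{n}\|$ inside the radius of validity of Corollary~\ref{coro:remainder2} on the mode-centered ball used for the denominator bound. Keeping these dependencies consistent, and checking that the dominating function in the dominated-convergence step is genuinely $n$-independent, is the delicate part; the rest is bookkeeping on top of~(\ref{eq:expansion-theta2}), Corollary~\ref{coro:remainder2}, and conditions~\ref{cond:1-2}--\ref{cond:1-5}.
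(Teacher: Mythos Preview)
Your proposal is correct and follows essentially the same route as the paper: the same numerator/denominator decomposition $\pi_{n,2}^{*}=g_{n}/a_{n}$, the same use of the Taylor expansion~(\ref{eq:expansion-theta2}) and Corollary~\ref{coro:remainder2} on $B_{n,1}(\delta_{t})$ with dominated convergence for $a_{n}(\bm{r})$, the same appeal to condition~\ref{cond:1-4} for the $B_{n,2}(\delta_{t})$ tail, and the same lower bound on $a_{n}(\bm{r})$ carrying the Schur-complement factor that is then absorbed by choosing $\delta_{r}$ small after $\epsilon=\epsilon(\delta_{t})$ is fixed---a subtlety you identify explicitly and the paper handles via~(\ref{eq:tn-step4-b2}). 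One minor difference: where you invoke $L_{n}(\bm{\eta}_{0},\bm{\zeta}_{0})\le L_{n}(\hat{\bm{\eta}}_{n},\hat{\bm{\zeta}}_{n})$ directly from the definition of the maximizer, the paper instead uses the mean value theorem to show $\tfrac{1}{n}\bigl|L_{n}(\hat{\bm{\eta}}_{n},\hat{\bm{\zeta}}_{n})-L_{n}(\bm{\eta}_{0},\bm{\zeta}_{0})\bigr|<\epsilon$, which is a slightly more roundabout way to reach the same conclusion.
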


\begin{proof}
The proof consists of the following four steps.

\noindent{\bf Step 1}~~~~~~~~ In this step we will find the limit of the
normalizing constant. The constant is
\begin{align}
\label{eq:tn-anr-step3}
a_{n}(\bm{r}) =&\int g_{n}(\bm{t},\bm{r})\intd\bm{t},
\end{align}
where
\begin{align*}
g_{n}(\bm{t},\bm{r}) =&\exp\big\{L_{n}\big(\hat{\bm{\eta}}_{n}+\bm{t}/\sqrt{n}, \hat{\bm{\zeta}}_{n}+\bm{r}/\sqrt{n}\big) - L_{n}\big(\hat{\bm{\eta}}_{n}, \hat{\bm{\zeta}}_{n}\big)\big\}
\\
& \qquad \times\pi\big(\hat{\bm{\eta}}_{n}+\bm{t}/\sqrt{n}\big|\hat{\bm{\zeta}}_{n}+\bm{r}/\sqrt{n}\big) \mathbbm{1}\{\bm{r}\in A_{n,1}(\delta_{r})\}.
\end{align*}
We can find the limit of $a_n(\bm{r})$ by finding the limits of
$\int g_{n}(\bm{t},\bm{r}) \mathbbm{1}\{\bm{t}\in B_{n,1}(\delta_{t})\}\intd\bm{t}$
and of
$\int g_{n}(\bm{t},\bm{r}) \mathbbm{1}\{\bm{t}\in
B_{n,2}(\delta_{t})\}\intd\bm{t}$, since $a_n(\bm{r})$ is the sum of the these two integrals. We start with the first one.

The Taylor expansion of $L_{n}\big(\hat{\bm{\eta}}_{n}+\bm{t}/\sqrt{n}, \hat{\bm{\zeta}}_{n}+\bm{r}/\sqrt{n}\big) - L_{n}\big(\hat{\bm{\eta}}_{n}, \hat{\bm{\zeta}}_{n}\big)$ is given in (\ref{eq:expansion-theta2}). Applying Corollary~\ref{coro:remainder2}, and since
$\big\|\begin{brsm}
    \bm{t}\\
    \bm{r}
\end{brsm}\big\| 
\leq \|\bm{r}\| + \|\bm{t}\|$,
we could find
$\delta_t>0$ and $N_{11}\in\setnnn$ such that $\forall \bm{t}\in B_{n,1}(\delta_{t})$, $\forall \bm{r}\in A_{n,1}(\delta_{r})$ and $\forall n>
N_{11}$,
\begin{align}
\label{eq:tn-ub1}
\big|R_{n}(\bm{t},\bm{r})\big|
  < \frac{1}{8}\begin{brsm}
    \bm{t}\\
    \bm{r}
\end{brsm}^{\top}\bm{I}_{0}\begin{brsm}
    \bm{t}\\
    \bm{r}
\end{brsm}.
\end{align}
From condition~\ref{cond:1-5}, we know that
$\lim_{n\to\infty}\hat{\bm{\eta}}_{n}=\bm{\eta}_{0}$ and
$\lim_{n\to\infty}\hat{\bm{\zeta}}_{n}=\bm{\zeta}_{0}$. Applying
condition~\ref{cond:1-3}, we can show that
$\lim_{n\to\infty}\hat{\bm{I}}_{n}=\bm{I}_{0}$. Moreover, for any fixed $\bm{t}$ and
$\bm{r}$,
we know that $\lim_{n\to\infty} R_{n}(\bm{t},\bm{r})=0$
(Corollary~\ref{coro:remainder2}). Therefore,  
\begin{align}
\label{eq:tn-limit-step3}
\begin{split}
& \lim_{n\to\infty}\exp\big\{L_{n}(\hat{\bm{\eta}}_{n}+\bm{t}/\sqrt{n},
  \hat{\bm{\zeta}}_{n}+\bm{r}/\sqrt{n}) -
  L_{n}(\hat{\bm{\eta}}_{n}, \hat{\bm{\zeta}}_{n})\big\} \\
  = &\exp\bigg\{-\frac{1}{2}\begin{brsm}
    \bm{t}\\
    \bm{r}
\end{brsm}^{\top}\bm{I}_{0}\begin{brsm}
    \bm{t}\\
    \bm{r}
\end{brsm}\bigg\} = \exp\left\{-\frac{1}{2}\bm{t}^{\top}\bm{I}^{11}_{0}\bm{t}-\frac{1}{2}\bm{r}^{\top}\bm{I}^{22}_{0}\bm{r}-\bm{t}^{\top}\bm{I}^{12}_{0}\bm{r}\right\},
\end{split}
\end{align}
where $\bm{I}_{0} = 
\begin{brsm}
   \bm{I}^{11}_{0} & \bm{I}^{12}_{0}\\
   \bm{I}^{21}_{0} & \bm{I}^{22}_{0}
\end{brsm}$. 
Moreover, since $\pi(\bm{\eta}|\bm{\zeta})$ is positive and continuous at $\bm{\eta}=\bm{\eta}_{0}$ and $\bm{\zeta}=\bm{\zeta}_{0}$,
\begin{align}
\label{eq:limit-theta}
\begin{split}
&\lim_{n\to\infty} g_{n}(\bm{t},\bm{r}) \mathbbm{1}\{\bm{t}\in B_{n,1}(\delta_{t})\} 
  \\
=&  \exp\bigg\{-\frac{1}{2}\bm{t}^{\top}\bm{I}^{11}_{0}\bm{t}-\frac{1}{2}\bm{r}^{\top}\bm{I}^{22}_{0}\bm{r}-\bm{t}^{\top}\bm{I}^{12}_{0}\bm{r}\bigg\}\pi(\bm{\eta}_{0}|\bm{\zeta}_{0}).
\end{split}
\end{align}

From (\ref{eq:tn-limit-step3}), we could find $N_{12}\in\setnnn$ such that $\forall
n>N_{12}$,
\begin{align}
\label{eq:tn-ub2}
& \exp\big\{L_{n}(\hat{\bm{\eta}}_{n}+\bm{t}/\sqrt{n},
  \hat{\bm{\zeta}}_{n}+\bm{r}/\sqrt{n}) -
  L_{n}(\hat{\bm{\eta}}_{n}, \hat{\bm{\zeta}}_{n})\big\}
  < \exp\bigg\{-\frac{1}{4}\begin{brsm}
    \bm{t}\\
    \bm{r}
\end{brsm}^{\top}\bm{I}_{0}\begin{brsm}
    \bm{t}\\
    \bm{r}
\end{brsm}\bigg\}
\end{align}
Let $N_1=\max\{N_{11},N_{12}\}$. Combining (\ref{eq:tn-ub1}) and
(\ref{eq:tn-ub2}) we have, $\forall \bm{t}\in B_{n,1}(\delta_{t})$, $\forall \bm{r}\in A_{n,1}(\delta_{r})$ and $\forall n>N_1$,
\begin{align*}
& \exp\big\{L_{n}(\hat{\bm{\eta}}_{n}+\bm{t}/\sqrt{n},
  \hat{\bm{\zeta}}_{n}+\bm{r}/\sqrt{n}) -
  L_{n}(\hat{\bm{\eta}}_{n}, \hat{\bm{\zeta}}_{n})\big\}
  < \exp\bigg\{-\frac{1}{8}\begin{brsm}
    \bm{t}\\
    \bm{r}
\end{brsm}^{\top}\bm{I}_{0}\begin{brsm}
    \bm{t}\\
    \bm{r}
\end{brsm}\bigg\}
\end{align*}
Let $b(\delta_{t},\delta_{r})=\sup_{\|\bm{\eta}-\bm{\eta}_{0}\|<2\delta_{t}, \|\bm{\zeta}-\bm{\zeta}_{0}\|<2\delta_{r}}\pi(\bm{\eta}|\bm{\zeta})$. Given that $\pi(\bm{\eta}|\bm{\zeta})$ is positive and continuous at
$\bm{\eta}=\bm{\eta}_{0}$ and $\bm{\zeta}=\bm{\zeta}_{0}$, we can
choose $\delta_{r}$ and $\delta_{t}$ small enough so that $b(\delta_{t},\delta_{r})>0$. Then $g_{n}(\bm{t},\bm{r}) \mathbbm{1}\{\bm{t}\in B_{n,1}(\delta_{t})\} $ is bounded by
\begin{align*}
& b(\delta_{t},\delta_{r})\int_{B_{n,1}(\delta_{t})}\exp\bigg\{-\frac{1}{8}\begin{brsm}
    \bm{t}\\
    \bm{r}
\end{brsm}^{\top}\bm{I}_{0}\begin{brsm}
    \bm{t}\\
    \bm{r}
\end{brsm}\bigg\} \intd\bm{t},
\end{align*}
which is clearly integrable. Applying DCT,
\begin{align*}
&\lim_{n\to\infty}\int g_{n}(\bm{t},\bm{r}) \mathbbm{1}\{\bm{t}\in B_{n,1}(\delta_{t})\} 
  \intd\bm{t} \\
= &
    \int\exp\bigg\{-\frac{1}{2}\bm{t}^{\top}\bm{I}^{11}_{0}\bm{t}-\frac{1}{2}\bm{r}^{\top}\bm{I}^{22}_{0}\bm{r}-\bm{t}^{\top}\bm{I}^{12}_{0}\bm{r}\bigg\}\pi(\bm{\eta}_{0}|\bm{\zeta}_{0})\intd\bm{t}
  \\
= & \exp\bigg\{-\frac{1}{2}\bm{r}^{\top}\bigg(\bm{I}^{22}_{0}-\bm{I}^{21}_{0}\big(\bm{I}^{11}_{0}\big)^{-1}\bm{I}^{12}_{0}\bigg)\bm{r}\bigg\}\pi(\bm{\eta}_{0}|\bm{\zeta}_{0})|\bm{I}^{11}_{0}/2\pi|^{-1/2}
\end{align*}

We complete the this step by finding the limit for $\int g_{n}(\bm{t},\bm{r})
\mathbbm{1}\{\bm{t}\in B_{n,2}(\delta_{t})\}\intd\bm{t}$. Similar to Step
3 in the
proof of Theorem~\ref{thm:joint-normality}, we can find $\epsilon>0$ and
$N_{2}\in\setnnn$ such that $\forall n>N_2$, 
\begin{align*}
\frac{1}{n}\bigg[L_{n}(\hat{\bm{\eta}}_{n}+\bm{t}/\sqrt{n},
  \hat{\bm{\zeta}}_{n}+\bm{r}/\sqrt{n}) -
  L_{n}(\hat{\bm{\eta}}_{n}, \hat{\bm{\zeta}}_{n})\bigg] < -\epsilon.
\end{align*}
This implies
\begin{align*}
&\lim_{n\to\infty}\int g_{n}(\bm{t},\bm{r}) \mathbbm{1}\{\bm{t}\in B_{n,2}(\delta_{t})\}\intd\bm{t} \\
\leq & \lim_{n\to\infty}\exp\big\{-n\epsilon\big\}\int\pi(\hat{\bm{\eta}}_{n}+\bm{t}/\sqrt{n}|\hat{\bm{\zeta}}_{n}+\bm{r}/\sqrt{n})
  \mathbbm{1}\{\bm{t}\in B_{n,2}(\delta_{t})\}\intd\bm{t} \\
\leq & \lim_{n\to\infty}\exp\big\{-n\epsilon\big\}\\
= & 0.
\end{align*}
Hence we have shown that $\forall\bm{r}\in A_{n,2}(\delta_{r})$,
\begin{align}
\label{eq:limit-anr}
\lim_{n\to\infty}a_n(\bm{r})= &\exp\bigg\{-\frac{1}{2}\bm{r}^{\top}\bigg(\bm{I}^{22}_{0}-\bm{I}^{21}_{0}\big(\bm{I}^{11}_{0}\big)^{-1}\bm{I}^{12}_{0}\bigg)\bm{r}\bigg\}\pi(\bm{\eta}_{0}|\bm{\zeta}_{0})|\bm{I}^{11}_{0}/2\pi|^{-1/2}
\end{align}

\noindent{\bf Step 2}~~~~~~~~ In this step we will find the limit for
$\pi_{n}^{*}(\bm{t}|\bm{r})\mathbbm{1}\{\bm{r}\in A_{n,1}(\delta_{r})\}\mathbbm{1}\{\bm{t}\in B_{n,1}(\delta_{t})\}$. Since
\[\pi_{n}^{*}(\bm{t}|\bm{r})\mathbbm{1}\{\bm{r}\in A_{n,1}(\delta_{r})\}\mathbbm{1}\{\bm{t}\in B_{n,1}(\delta_{t})\} = a_{n}(\bm{r})^{-1}g_{n}(\bm{t},\bm{r})
  \mathbbm{1}\{\bm{t}\in B_{n,1}(\delta_{t})\},\]
combining (\ref{eq:limit-theta}) and (\ref{eq:limit-anr}) we
immediately have
\begin{align*}
 &\lim_{n\to\infty}\pi_{n}^{*}(\bm{t}|\bm{r})\mathbbm{1}\{\bm{r}\in
   A_{n,1}(\delta_{r})\}\mathbbm{1}\{\bm{t}\in B_{n,1}(\delta_{r})\}\\
= &
                                                \lim_{n\to\infty}a_n(\bm{r}) \lim_{n\to\infty}g_{n}(\bm{t},\bm{r})
  \mathbbm{1}\{\bm{t}\in B_{n,1}(\delta_{t})\} \\
= & |\bm{I}^{11}_{0}/2\pi|^{1/2}\exp\{-\frac{1}{2}\bigg[\bm{t}-\big(\bm{I}^{11}_{0}\big)^{-1}\bm{I}^{12}_{0}\bm{r}\bigg]^{\top}\bm{I}^{11}_{0}\bigg[\bm{t}-\big(\bm{I}^{11}_{0}\big)^{-1}\bm{I}^{12}_{0}\bm{r}\bigg]\}\\
= & \phi\bigg(\bm{t}\bigg|-\big(\bm{I}^{11}_{0}\big)^{-1}\bm{I}^{12}_{0}\bm{r},\big(\bm{I}^{11}_{0}\big)^{-1}\bigg).
\end{align*}

\noindent{\bf Step 3}~~~~~~~~ In this step, we complete the proof by finding a lower bound for
$a_{n}(\bm{r})$.
Applying
Corollary~\ref{coro:remainder2} one more time, by choosing
$\delta_{t}$ and $\delta_{r}$ small enough, $\exists
N_{31}\in\setnnn$ such that $\forall \bm{t}\in B_{n,1}(\delta_{t})$, $\forall \bm{r}\in A_{n,1}(\delta_{r})$ and $\forall n>
N_{21}$,
\begin{align}
\label{eq:tn-ub3}
\big|R_{n}(\bm{t},\bm{r})\big|
  < \frac{1}{4}\begin{brsm}
    \bm{t}\\
    \bm{r}
\end{brsm}^{\top}
\bm{I}_{0}
\begin{brsm}
    \bm{t}\\
    \bm{r}
\end{brsm}.
\end{align}
Also from (\ref{eq:tn-limit-step3}), $\exists N_{32}\in\setnnn$ such that $\forall
n>N_{22}$,
\begin{align*}
\begin{split}
& \exp\big\{L_{n}(\hat{\bm{\eta}}_{n}+\bm{t}/\sqrt{n},
  \hat{\bm{\zeta}}_{n}+\bm{r}/\sqrt{n}) -
  L_{n}(\hat{\bm{\eta}}_{n}, \hat{\bm{\zeta}}_{n})\big\} \\
  > & \exp\bigg\{-\frac{1}{2}\begin{brsm}
    \bm{t}\\
    \bm{r}
\end{brsm}^{\top}
\big(\bm{I}_{0}+\frac{1}{2}\bm{I}_{0}\big)
\begin{brsm}
    \bm{t}\\
    \bm{r}
\end{brsm}\bigg\}.
\end{split}
\end{align*}
Combining above and (\ref{eq:tn-ub3}), we immediately have that $\forall \bm{r}\in A_{n,1}(\delta_{r})$,
$\forall \bm{t}\in B_{n,1}(\delta_{t})$ and $\forall n>\max\{N_{31},N_{32}\}$,
\begin{align}
\label{eq:tn-step3-lb}
\begin{split}
& \exp\big\{L_{n}(\hat{\bm{\eta}}_{n}+\bm{t}/\sqrt{n},
  \hat{\bm{\zeta}}_{n}+\bm{r}/\sqrt{n}) -
  L_{n}(\hat{\bm{\eta}}_{n}, \hat{\bm{\zeta}}_{n})\big\} 
  > \exp\bigg\{-\begin{brsm}
    \bm{t}\\
    \bm{r}
\end{brsm}^{\top}\bm{I}_{0}
\begin{brsm}
    \bm{t}\\
    \bm{r}
\end{brsm}\bigg\}.
\end{split}
\end{align}
Let $c(\delta_{r},\delta_{t})=\inf_{\|\bm{\eta}-\bm{\eta}_{0}\|<2\delta_{t},
  \|\bm{\zeta}-\bm{\zeta}_{0}\|<2\delta_{r}}\pi(\bm{\eta}|\bm{\zeta})$. Given that $\pi(\bm{\eta}|\bm{\zeta})$ is positive and continuous at
$\bm{\eta}=\bm{\eta}_{0}$ and $\bm{\zeta}=\bm{\zeta}_{0}$, we can
choose $\delta_{r}$ and $\delta_{t}$ small enough so that $c(\delta_{r},\delta_{t})>0$.
Since $\lim_{n\to\infty}\hat{\bm{\eta}}_{n}= \bm{\eta}_{0}$ and
$\lim_{n\to\infty}\hat{\bm{\zeta}}_{n}= \bm{\zeta}_{0}$, there $\exists
N_{33}\in\setnnn$ such that $\forall \bm{r}\in A_{n,1}(\delta_{r})$ and $\forall \bm{t}\in
B_{n,1}(\delta_{t})$,
$\pi(\hat{\bm{\eta}}_{n}+\bm{t}/\sqrt{n}|\hat{\bm{\zeta}}_{n}+\bm{r}/\sqrt{n})>c(\delta_{r},\delta_{t})$. Letting $N_{3}=\max\{N_{31},N_{32},N_{33}\}$, together with
(\ref{eq:tn-step3-lb}), we have shown that $\forall n > N_{3}$, $a_n(\bm{r})$ is
lower-bounded by
\begin{align}
& c(\delta_{r},\delta_{t})\int\exp\bigg\{-\begin{brsm}
    \bm{t}\\
    \bm{r}
\end{brsm}^{\top}\bm{I}_{0}
\begin{brsm}
    \bm{t}\\
    \bm{r}
\end{brsm}\bigg\} \intd\bm{t} \nonumber\\
= &
    c(\delta_{r},\delta_{t})|4\pi\bm{I}_{0}^{11}|^{\frac{1}{2}}\exp\bigg\{-\bm{r}^{\top}\bigg(\bm{I}^{22}_{0}-\bm{I}^{21}_{0}\big(\bm{I}^{11}_{0}\big)^{-1}\bm{I}^{12}_{0}\bigg)\bm{r}\bigg\}. \label{eq:lem-bi-dist-dct-lb}
\end{align}

\noindent{\bf Step 4}~~~~~~~~Consider $\delta_{t}>0$. Since
$\lim_{n\to\infty}\hat{\bm{\eta}}_{n}= \bm{\eta}_{0}$
(condition~\ref{cond:1-5}), for $\forall\bm{t}\in B_{n,2}(\delta_{t})$ and $\forall\bm{r}\in A_{n,1}(\delta_{r})$, $\exists N_{1}\in\setnnn$ such that $\forall n>N_{1}$,
\begin{align*}
\big\|\begin{brsm}
    \hat{\bm{\eta}}_{n}+\bm{t}/\sqrt{n}-\bm{\eta}_{0}\\
    \hat{\bm{\zeta}}_{n}+\bm{t}/\sqrt{n}-\bm{\zeta}_{0}
\end{brsm}\big\|>\|\hat{\bm{\eta}}_{n}+\bm{t}/\sqrt{n}-\bm{\eta}_{0}\|>\frac{\delta_{t}}{2}.
\end{align*}
Applying
condition~\ref{cond:1-4}, $\exists
\epsilon>0$ and $N_{2}\geq N_{1}$ such that $\forall
\bm{t}\in B_{n,2}(\delta_{t})$, $\forall \bm{r}\in
A_{n,1}(\delta_{r})$ and $\forall n>N_{2}$,
\begin{align} \label{eq:tn-step4-b1}
\frac{1}{n}\big[L_{n}(\hat{\bm{\eta}}_{n}+\bm{t}/\sqrt{n},
  \hat{\bm{\zeta}}_{n}+\bm{r}/\sqrt{n}) -
  L_{n}(\bm{\eta}_{0}, \bm{\zeta}_{0})\big]< -3\epsilon.
\end{align}
From condition~\ref{cond:1-2}, $\exists N_{3}\in\setnnn$ such that
$\forall n>N_{3}$, $\begin{brsm}
    \hat{\bm{\eta}}_{n}\\
    \hat{\bm{\zeta}}_{n} 
\end{brsm}\in \mathcal{B}_{\bm{z}}\left(
\bm{z}_{0}
,\delta\right)$ where $\ell_{2}(\bm{\eta}, \bm{\zeta},
\bm{y})$ is thrice differentiable with respect to $\bm{\eta}$ and
$\bm{\zeta}$. Applying mean value theorem, we have
$\frac{1}{n}\big[L_{n}(\hat{\bm{\eta}}_{n},
  \hat{\bm{\zeta}}_{n}) - L_{n}(\bm{\eta}_{0}, \bm{\zeta}_{0})\big]=\frac{1}{n}L^{(1)}_{n}(\bm{\eta}^{'}_{n},
  \bm{\zeta}^{'}_{n})^{\top}
\begin{brsm}
    \hat{\bm{\eta}}_{n}-\bm{\eta}_{0}\\
    \hat{\bm{\zeta}}_{n} - \bm{\zeta}_{0}
\end{brsm}$, where $\bm{\eta}^{'}_{n}$ is between
$\hat{\bm{\eta}}_{n}$ and $\bm{\eta}_{0}$ and $\bm{\zeta}^{'}_{n}$ is
between $\hat{\bm{\zeta}}_{n}$ and $\bm{\zeta}_{0}$. Using
condition~\ref{cond:1-3} and continuous mapping theorem, it can be
shown that $\lim_{n\to\infty}\frac{1}{n}L^{(1)}_{n}(\bm{\eta}^{'}_{n},
  \bm{\zeta}^{'}_{n})=\bm{0}$. Hence, $\exists N_{4}>N_{3}$ such that
  $\forall n>N_{4}$,
$\frac{1}{n}\left| L_{n}(\hat{\bm{\eta}}_{n},
  \hat{\bm{\zeta}}_{n}) - L_{n}(\bm{\eta}_{0},
  \bm{\zeta}_{0})\right|<\epsilon$. Letting
$N=\max\{N_{1},\ldots,N_{4}\}$ and using
(\ref{eq:tn-step4-b1}), it can be shown that $\forall n>N$,
\begin{align} \label{eq:tn-step4-b3}
\frac{1}{n}\big[L_{n}(\hat{\bm{\eta}}_{n}+\bm{t}/\sqrt{n},
  \hat{\bm{\zeta}}_{n}+\bm{r}/\sqrt{n}) -
  L_{n}(\hat{\bm{\eta}}_{n},
  \hat{\bm{\zeta}}_{n})\big]< -2\epsilon.
\end{align}

Using Lemma~\ref{lem:bi-dist-dct}, we can choose $\delta_{t}$ and
$\delta_{r}$ to be small enough so that $a_{n}(\bm{r})\mathbbm{1}\{\bm{r}\in
A_{n,1}(\delta_{r})\}$ is lower bounded by
(\ref{eq:lem-bi-dist-dct-lb}). Since
$\bm{I}^{22}_{0}-\bm{I}^{21}_{0}\big(\bm{I}^{11}_{0}\big)^{-1}\bm{I}^{12}_{0}$
is positive definite, we can always choose $\delta_{r}$ small enough
so that 
\begin{align}
\label{eq:tn-step4-b2}
\exp\left\{-\bm{r}^{\top}\left(\bm{I}^{22}_{0}-\bm{I}^{21}_{0}\big(\bm{I}^{11}_{0}\big)^{-1}\bm{I}^{12}_{0}\right)\bm{r}\right\}
  < \exp\{n\epsilon\}.
\end{align}
Combining (\ref{eq:lem-bi-dist-dct-lb}), (\ref{eq:tn-step4-b3}) and
(\ref{eq:tn-step4-b2}), it can be shown that
\begin{align*}
pi^{*}_{n,2}(\bm{t}|\bm{r}) \mathbbm{1}\{\bm{r}\in
  A_{n,1}(\delta_{r}),\bm{t}\in B_{n,1}(\delta_{t})\}< & \frac{|4\pi\bm{I}_{0}^{11}|^{-1/2}}{c(\delta_{r},\delta_{t})}\exp\{-n\epsilon\}.
\end{align*}
\end{proof}

\subsection{Proof of Theorem~\ref{thm:joint-normality}}
We prove the theorem in the following four steps.

\noindent{\bf Step 1}~~~~~~~~
Consider
$\bm{r}=\sqrt{n}\big(\bm{\zeta}-\hat{\bm{\zeta}}_{n}\big)$ and its
posterior density $\pi_{n,1}^{*}(\bm{r})$. Applying
Lemma~\ref{lem:knot-normality} and a simple change of variable, we can show
\begin{align}
\lim_{n\to\infty}\int|\pi_{n,1}^{*}(\bm{r})-\phi\big(\bm{r}\big|\bm{\mu}_{n},
  \tilde{\bm{I}}_{0}^{-1}\big)\big|\intd\bm{r}= 0. \label{eq:knot-normality-coro}
\end{align}
It can show that the integral in (\ref{eq:joint-normality}) is bounded by
\begin{align*}
& \int\int\pi^{*}_{n,2}(\bm{t}|\bm{r})\big|\pi^{*}_{n,1}(\bm{r})-
  \phi\big(\bm{r}\big|\bm{\mu}_{n},
  \tilde{\bm{I}}_{0}^{-1}\big) \big|\intd\bm{r}\intd\bm{t} \\
& \quad +\int\int\left|\pi^{*}_{n,2}(\bm{t}|\bm{r})-\phi\left(\bm{t}\middle|-\big(\bm{I}^{11}_{0}\big)^{-1}\bm{I}^{12}_{0}\bm{r},\big(\bm{I}^{11}_{0}\big)^{-1}\right)\right|\phi\big(\bm{r}\big|\bm{\mu}_{n},
  \tilde{\bm{I}}_{0}^{-1}\big)\intd\bm{r}\intd\bm{t},
\end{align*}
where the first integral equals $\int\big|\pi^{*}_{n}(\bm{r})-
  \phi\big(\bm{r}\big|\bm{\mu}_{n},
  \tilde{\bm{I}}^{-1}\big) \big|\intd\bm{r}$ which goes to
zero by (\ref{eq:knot-normality-coro}). Hence showing
\begin{align}
\label{eq:tn-target-step1}
\lim_{n\to\infty}\int\int\left|\pi^{*}_{n,2}(\bm{t}|\bm{r})-\phi\left(\bm{t}\middle|-\big(\bm{I}^{11}_{0}\big)^{-1}\bm{I}^{12}_{0}\bm{r},\big(\bm{I}^{11}_{0}\big)^{-1}\right)\right|\phi\big(\bm{r}\big|\bm{\mu}_{n},
  \tilde{\bm{I}}^{-1}\big)\intd\bm{r}\intd\bm{t}=0
\end{align}
would be enough for proving (\ref{eq:joint-normality}).

\noindent{\bf Step 2}~~~~~~~~For $\delta_{r}>0$, the integral in (\ref{eq:tn-target-step1}) can be written as
\begin{align*}
&\int\int_{\bm{r}\in A_{n,1}(\delta_{r})}\left|\pi^{*}_{n,2}(\bm{t}|\bm{r})-\phi\left(\bm{t}\middle|-\big(\bm{I}^{11}_{0}\big)^{-1}\bm{I}^{12}_{0}\bm{r},\big(\bm{I}^{11}_{0}\big)^{-1}\right)\right|\phi\big(\bm{r}\big|\bm{\mu}_{n},
  \tilde{\bm{I}}^{-1}\big)\intd\bm{r}\intd\bm{t}\\
& \quad + \int\int_{\bm{r}\in A_{n,2}(\delta_{r})}\left|\pi^{*}_{n,2}(\bm{t}|\bm{r})-\phi\left(\bm{t}\middle|-\big(\bm{I}^{11}_{0}\big)^{-1}\bm{I}^{12}_{0}\bm{r},\big(\bm{I}^{11}_{0}\big)^{-1}\right)\right|\phi\big(\bm{r}\big|\bm{\mu}_{n},
  \tilde{\bm{I}}^{-1}\big)\intd\bm{r}\intd\bm{t},
\end{align*}
where the second intergal is clearly bounded by
\begin{align*}
&\int_{\bm{r}\in A_{n,2}(\delta_{r})}\int\left|\pi^{*}_{n,2}(\bm{t}|\bm{r})-\phi\left(\bm{t}\middle|-\big(\bm{I}^{11}_{0}\big)^{-1}\bm{I}^{12}_{0}\bm{r},\big(\bm{I}^{11}_{0}\big)^{-1}\right)\right|\intd\bm{t}\phi\big(\bm{r}\big|\bm{\mu}_{n},
  \tilde{\bm{I}}^{-1}\big)\intd\bm{r} \nonumber\\
\leq&2\int_{\bm{r}\in A_{n,2}(\delta_{r})}\phi\big(\bm{r}\big|\bm{\mu}_{n},
  \tilde{\bm{I}}^{-1}\big)\intd\bm{r}
\end{align*}
Transforming $\bm{r}$ back to $\bm{\zeta}$,
\begin{align*}
\int_{\bm{r}\in A_{n,2}(\delta_{r})}\phi\big(\bm{r}\big|\bm{\mu}_{n},
  \tilde{\bm{I}}^{-1}\big)\intd\bm{r}
=&\Phi\left(\|\bm{\zeta}-\bm{\zeta}_{0}\|>\delta_{r}\middle|\tilde{\bm{\zeta}}_{n}, \tilde{\bm{I}}_0/n\right).
\end{align*}
From condition~\ref{cond:1-5}, $\lim_{n\to\infty}\hat{\bm{\zeta}}_{n}= \bm{\zeta}_{0}$ and hence by continuous mapping theorem,
\begin{align*}
       &\lim_{n\to\infty}\Phi\left(\|\bm{\zeta}-\bm{\zeta}_{0}\|>\delta_{r}\middle|\tilde{\bm{\zeta}}_{n}, \tilde{\bm{I}}_0/n\right)=\lim_{n\to\infty}\Phi\left(\|\bm{\zeta}-\bm{\zeta}_{0}\|>\delta_{r}\middle|\bm{\zeta}_{0}, \tilde{\bm{I}}_0/n\right)=0.
\end{align*}
This implies that showing
\begin{align}
\label{eq:tn-target-step2}
\int\int_{\bm{r}\in A_{n,1}(\delta_{r})}\left|\pi^{*}_{n,2}(\bm{t}|\bm{r})-\phi\left(\bm{t}\middle|-\big(\bm{I}^{11}_{0}\big)^{-1}\bm{I}^{12}_{0}\bm{r},\big(\bm{I}^{11}_{0}\big)^{-1}\right)\right|\phi\big(\bm{r}\big|\bm{\mu}_{n},
  \tilde{\bm{I}}^{-1}\big)\intd\bm{r}\intd\bm{t}\to 0
\end{align}
would be enough for proving (\ref{eq:tn-target-step1}).

\noindent{\bf Step 3}~~~~~~~~
Applying (\ref{eq: lem-bi-dist-dct-tozero}) in Lemma~\ref{lem:bi-dist-dct}, we have
\begin{align*}
&\int_{\bm{t}\in B_{n,2}(\delta_{t})}\int_{\bm{r}\in A_{n,1}(\delta_{r})}\pi^{*}_{n,2}(\bm{t}|\bm{r})\phi\big(\bm{r}\big|\bm{\mu}_{n},
  \tilde{\bm{I}}_{0}^{-1}\big)\intd\bm{r}\intd\bm{t} < & \frac{|4\pi\bm{I}_{0}^{11}|^{-1/2}}{c(\delta_{r},\delta_{t})}\exp\{-n\epsilon\} \to  0.
\end{align*}
Moreover, it can be easily shown that
\begin{align*}
\int_{\bm{t}\in B_{n,2}(\delta_{t})}\int_{\bm{r}\in A_{n,1}(\delta_{r})}\phi\left(\bm{t}\middle|-\big(\bm{I}^{11}_{0}\big)^{-1}\bm{I}^{12}_{0}\bm{r},\big(\bm{I}^{11}_{0}\big)^{-1}\right)\phi\big(\bm{r}\big|\bm{\mu}_{n},
  \tilde{\bm{I}}_{0}^{-1}\big)\intd\bm{r}\intd\bm{t} \to & 0.
\end{align*}
Hence we have shown that
\begin{align*}
& \lim_{n\to\infty}\int_{\bm{t}\in B_{n,2}(\delta_{t})}\int_{\bm{r}\in
  A_{n,1}(\delta_{r})}\left|\pi^{*}_{n,2}(\bm{t}|\bm{r})-\phi\left(\bm{t}\middle|-\big(\bm{I}^{11}_{0}\big)^{-1}\bm{I}^{12}_{0}\bm{r},\big(\bm{I}^{11}_{0}\big)^{-1}\right)\right|\\
&\qquad \times\phi\big(\bm{r}\big|\bm{\mu}_{n},
  \tilde{\bm{I}}^{-1}\big)\intd\bm{r}\intd\bm{t}=0.
\end{align*}
This implies that showing
\begin{align}
\label{eq:tn-target-step3}
\begin{split}
& \lim_{n\to\infty}\int_{\bm{t}\in B_{n,1}(\delta_{t})}\int_{\bm{r}\in
  A_{n,1}(\delta_{r})}\left|\pi^{*}_{n,2}(\bm{t}|\bm{r})-\phi\left(\bm{t}\middle|-\big(\bm{I}^{11}_{0}\big)^{-1}\bm{I}^{12}_{0}\bm{r},\big(\bm{I}^{11}_{0}\big)^{-1}\right)\right|\\
&\qquad \times\phi\big(\bm{r}\big|\bm{\mu}_{n},
  \tilde{\bm{I}}_{0}^{-1}\big)\intd\bm{r}\intd\bm{t}=0
\end{split}
\end{align}
would be enough for proving (\ref{eq:tn-target-step2}).

\noindent{\bf Step 4}~~~~~~~~Since $\phi\big(\bm{r}\big|\bm{\mu}_{n},
  \tilde{\bm{I}}_{0}^{-1}\big)$ is bounded by
  $|\tilde{\bm{I}}/2\pi|^{1/2}$, applying (\ref{eq:
    lem-bi-dist-dct-limit}) in 
  Lemma~\ref{lem:bi-dist-dct} again we have
\begin{align*}
& \lim_{n\to\infty}\left|\pi^{*}_{n,2}(\bm{t}|\bm{r})-\phi\left(\bm{t}\middle|-\big(\bm{I}^{11}_{0}\big)^{-1}\bm{I}^{12}_{0}\bm{r},\big(\bm{I}^{11}_{0}\big)^{-1}\right)\right|\\
&\qquad \times\phi\big(\bm{r}\big|\bm{\mu}_{n},
  \tilde{\bm{I}}_{0}^{-1}\big) \mathbbm{1}_{\bm{t}\in B_{n,1}(\delta_{t})}\mathbbm{1}_{\bm{r}\in
  A_{n,1}(\delta_{r})}=0.
\end{align*}
Moreover, 
\begin{align*}
& \int_{\bm{t}\in B_{n,1}(\delta_{t})}\int_{\bm{r}\in
  A_{n,1}(\delta_{r})}\left|\pi^{*}_{n,2}(\bm{t}|\bm{r})-\phi\left(\bm{t}\middle|-\big(\bm{I}^{11}_{0}\big)^{-1}\bm{I}^{12}_{0}\bm{r},\big(\bm{I}^{11}_{0}\big)^{-1}\right)\right|\\
&\qquad \times\phi\big(\bm{r}\big|\bm{\mu}_{n},
  \tilde{\bm{I}}_{0}^{-1}\big)\intd\bm{r}\intd\bm{t} \\
< & \int_{\bm{t}\in B_{n,1}(\delta_{t})}\int_{\bm{r}\in
  A_{n,1}(\delta_{r})}\phi\left(\bm{t}\middle|-\big(\bm{I}^{11}_{0}\big)^{-1}\bm{I}^{12}_{0}\bm{r},\big(\bm{I}^{11}_{0}\big)^{-1}\right)\phi\big(\bm{r}\big|\bm{\mu}_{n},\tilde{\bm{I}}_{0}^{-1}\big)\intd\bm{r}\intd\bm{t}
  \\
& \qquad + \int_{\bm{t}\in B_{n,1}(\delta_{t})}\int_{\bm{r}\in
  A_{n,1}(\delta_{r})}\pi^{*}_{n,2}(\bm{t}|\bm{r})\phi\big(\bm{r}\big|\bm{\mu}_{n},
  \tilde{\bm{I}}_{0}^{-1}\big)\intd\bm{r}\intd\bm{t} \\
\leq & 2.
\end{align*}
Therefore by Scheff\'e's lemma we have shown
(\ref{eq:tn-target-step3}).

\subsection{Proof of Lemma~\ref{lem:mle-dif}}
We now provide a lemma that will be needed in our later proof of
Corollary~\ref{coro:joint-consistency}. It characterizes the asymptotic 
difference between $\tilde{\bm{\zeta}}_{n}$ and $\hat{\bm{\zeta}}_{n}$.
\begin{lemma} 
\label{lem:mle-dif}
Suppose that the conditions for Theorem~\ref{thm:joint-normality}
hold, then
\begin{align*}
\sqrt{n}\big(
\tilde{\bm{\zeta}}_{n} - 
\hat{\bm{\zeta}}_{n}
\big) \overset{d}{\to} \mathcal{N}(\bm{0},\bm{V}),
\end{align*}
where $\bm{V}$ is a positive definite matrix.
\end{lemma}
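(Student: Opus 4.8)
The plan is to treat $\tilde{\bm{\zeta}}_{n}$ and $\hat{\bm{\zeta}}_{n}$ as $M$-estimators and obtain their joint asymptotic linearization about the true values $\bm{\zeta}_{0}$ and $\bm{\eta}_{0}$, so that $\sqrt{n}(\tilde{\bm{\zeta}}_{n}-\hat{\bm{\zeta}}_{n})$ becomes an average of i.i.d.\ mean-zero vectors plus a term that is $o_{P}(1)$; the distributional conclusion then follows from the multivariate central limit theorem and Slutsky's theorem. Throughout I write $\bm{\theta}=(\bm{\eta}^{\top},\bm{\zeta}^{\top})^{\top}$, $\hat{\bm{\theta}}_{n}=(\hat{\bm{\eta}}_{n}^{\top},\hat{\bm{\zeta}}_{n}^{\top})^{\top}$, and let $\bm{P}=[\,\bm{0}\ \ \bm{I}_{d_{\zeta}}\,]$ be the projection onto the $\bm{\zeta}$-block.

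First I would linearize $\tilde{\bm{\zeta}}_{n}$. Since $\bm{\zeta}_{0}$ is interior and $\tilde{\bm{\zeta}}_{n}\to\bm{\zeta}_{0}$ by condition~\ref{cond:1-5}, for large $n$ the estimator solves $\triangledown_{\bm{\zeta}}Q_{n}(\tilde{\bm{\zeta}}_{n})=\bm{0}$, and a Taylor expansion of $\triangledown_{\bm{\zeta}}Q_{n}$ about $\bm{\zeta}_{0}$ with integral-form remainder gives
\[
\bm{0}=\tfrac{1}{\sqrt{n}}\triangledown_{\bm{\zeta}}Q_{n}(\bm{\zeta}_{0})+\Big[\tfrac{1}{n}\triangledown^{2}_{\bm{\zeta}}Q_{n}(\bm{\zeta}_{0})+\bm{E}_{n}\Big]\sqrt{n}(\tilde{\bm{\zeta}}_{n}-\bm{\zeta}_{0}),
\]
where $\bm{E}_{n}$ is the difference between the Hessian of $\frac{1}{n}Q_{n}$ averaged along the segment joining $\tilde{\bm{\zeta}}_{n}$ to $\bm{\zeta}_{0}$ and $\frac{1}{n}\triangledown^{2}_{\bm{\zeta}}Q_{n}(\bm{\zeta}_{0})$. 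By condition~\ref{cond:1-2} the third derivatives of $\ell_{1}$ are dominated by an integrable envelope on a neighbourhood of $\bm{\zeta}_{0}$, so by the strong law $\bm{E}_{n}=O_{P}(1)\,\|\tilde{\bm{\zeta}}_{n}-\bm{\zeta}_{0}\|=o_{P}(1)$; the strong law and condition~\ref{cond:1-3} give $\frac{1}{n}\triangledown^{2}_{\bm{\zeta}}Q_{n}(\bm{\zeta}_{0})\to\mathbb{E}_{\bm{\theta}_{0}}\triangledown^{2}_{\bm{\zeta}}\ell_{1}(\bm{\zeta}_{0},\bm{y})=-\tilde{\bm{I}}_{0}$. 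Since $\tilde{\bm{I}}_{0}$ is positive definite and $\bm{U}_{1}:=\frac{1}{\sqrt{n}}\sum_{i=1}^{n}\triangledown_{\bm{\zeta}}\ell_{1}(\bm{\zeta}_{0},\bm{y}_{i})=O_{P}(1)$ by condition~\ref{cond:1-3} and the CLT, inverting yields $\sqrt{n}(\tilde{\bm{\zeta}}_{n}-\bm{\zeta}_{0})=\tilde{\bm{I}}_{0}^{-1}\bm{U}_{1}+o_{P}(1)$. The identical argument applied to $L_{n}$, whose regularity is covered because conditions~\ref{cond:1-1}--\ref{cond:1-5} are assumed for $f_{2}$, gives $\sqrt{n}(\hat{\bm{\theta}}_{n}-\bm{\theta}_{0})=\bm{I}_{0}^{-1}\bm{U}_{2}+o_{P}(1)$ with $\bm{U}_{2}:=\frac{1}{\sqrt{n}}\sum_{i=1}^{n}\triangledown_{\bm{\eta},\bm{\zeta}}\ell_{2}(\bm{\eta}_{0},\bm{\zeta}_{0},\bm{y}_{i})$.

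Subtracting the $\bm{\zeta}$-block of the two linearizations, $\sqrt{n}(\tilde{\bm{\zeta}}_{n}-\hat{\bm{\zeta}}_{n})=\frac{1}{\sqrt{n}}\sum_{i=1}^{n}\bm{g}(\bm{y}_{i})+o_{P}(1)$, where $\bm{g}(\bm{y})=\tilde{\bm{I}}_{0}^{-1}\triangledown_{\bm{\zeta}}\ell_{1}(\bm{\zeta}_{0},\bm{y})-\bm{P}\bm{I}_{0}^{-1}\triangledown_{\bm{\eta},\bm{\zeta}}\ell_{2}(\bm{\eta}_{0},\bm{\zeta}_{0},\bm{y})$. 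The summands are i.i.d., have mean $\bm{0}$ by condition~\ref{cond:1-3} applied to both $f_{1}$ and $f_{2}$, and have finite covariance $\bm{V}:=\mathbb{E}_{\bm{\theta}_{0}}\bm{g}(\bm{y})\bm{g}(\bm{y})^{\top}$ (each score has finite second moment, equal by condition~\ref{cond:1-3} to the corresponding Fisher-information block). Hence the multivariate CLT and Slutsky's theorem give $\sqrt{n}(\tilde{\bm{\zeta}}_{n}-\hat{\bm{\zeta}}_{n})\overset{d}{\to}\mathcal{N}(\bm{0},\bm{V})$.

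The remaining, and main, obstacle is to show $\bm{V}$ is positive definite rather than merely positive semidefinite, i.e.\ to rule out an $\bm{a}\neq\bm{0}$ with $\bm{a}^{\top}\bm{g}(\bm{y})=0$ $P_{\bm{\theta}_{0}}$-a.s., equivalently $(\tilde{\bm{I}}_{0}^{-1}\bm{a})^{\top}\triangledown_{\bm{\zeta}}\ell_{1}(\bm{\zeta}_{0},\bm{y})=(\bm{I}_{0}^{-1}\bm{P}^{\top}\bm{a})^{\top}\triangledown_{\bm{\eta},\bm{\zeta}}\ell_{2}(\bm{\eta}_{0},\bm{\zeta}_{0},\bm{y})$ a.s. In the {\em Bayesian mosaic} instantiation the structure is used precisely here: $\ell_{1}=\sum_{j}\ell_{jj}$ makes the left-hand side an additive function whose $j$th term depends only on the coordinate $y_{j}$, whereas the tile components of $\triangledown_{\bm{\eta},\bm{\zeta}}\ell_{2}$ are scores of genuinely bivariate marginals; matching the two sides coordinatewise via an ANOVA-type decomposition and invoking positive definiteness of each univariate Fisher information $\bm{\Sigma}_{jj}$ and each bivariate block $\bm{\Sigma}_{st}$ forces the corresponding coefficients, and hence $\bm{a}$, to vanish. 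I expect the bookkeeping in this last step — tracking which blocks of $\bm{I}_{0}^{-1}$ multiply which scores and exploiting the sparsity of $\bm{I}^{12}_{0}$ — to be the most delicate part of the write-up.
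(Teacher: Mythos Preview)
Your linearization argument is essentially identical to the paper's: both Taylor-expand the estimating equations for $\tilde{\bm{\zeta}}_{n}$ and $\hat{\bm{\theta}}_{n}$ about the truth, subtract the $\bm{\zeta}$-blocks, and apply the CLT together with Slutsky's theorem to the resulting i.i.d.\ sum. Your projection $\bm{P}$ is exactly the paper's $\mbox{lower}(\bm{D}_{1})$, and your $\bm{g}(\bm{y})$ coincides with the paper's $\tilde{\bm{I}}_{0}^{-1}\triangledown_{\bm{\zeta}}\ell_{1}-\mbox{lower}(\bm{D}_{1})\bm{I}_{0}^{-1}\triangledown_{\bm{\eta},\bm{\zeta}}\ell_{2}$.

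The one substantive difference concerns positive definiteness of $\bm{V}$. You correctly flag this as the nontrivial remaining step and sketch an argument; the paper, by contrast, simply asserts it and gives no proof. Two remarks on this. First, your sketch relies on the \emph{mosaic} structure (additive $\ell_{1}$, pairwise $\ell_{2}$), whereas the lemma is stated in the abstract setup of Theorem~\ref{thm:joint-normality}; in that generality $\bm{V}$ need not be strictly positive definite (take, e.g., $d_{\eta}=0$ and $f_{1}=f_{2}$, so that $\tilde{\bm{\zeta}}_{n}=\hat{\bm{\zeta}}_{n}$ identically and $\bm{V}=\bm{0}$). So either the lemma is slightly overstated or an extra nondegeneracy assumption is being used implicitly. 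Second, for the only place the lemma is invoked (the proof of Theorem~\ref{thm:plugin-tile-normality}), all that is needed is $\bm{r}_{n}^{*}\overset{d}{\to}\mathcal{N}(\bm{0},\bm{V})$ to conclude $p(\|\bm{r}_{n}^{*}\|>\sqrt{n}\delta)\to 0$; this holds for any positive \emph{semidefinite} $\bm{V}$. So the ``main obstacle'' you identify is real as a matter of proving the lemma as stated, but is not actually needed downstream --- you could safely weaken the conclusion to ``positive semidefinite'' and skip that step entirely.
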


\begin{proof}
Expanding Taylor series for $\frac{1}{n}L_{n}^{(1)}(\hat{\bm{\eta}}_{n},
  \hat{\bm{\zeta}}_{n})$ we have
\begin{align*}
\bm{0}=\frac{1}{n}L_{n}^{(1)}(\hat{\bm{\eta}}_{n},
  \hat{\bm{\zeta}}_{n}) =
  \frac{1}{n}L_{n}^{(1)}(\bm{\eta}_{0},\bm{\zeta}_{0})
  +
  \frac{1}{n}L_{n}^{(2)}(\bm{\eta}_{n}^{'},\bm{\zeta}_{n}^{'})
\begin{brsm}
    \hat{\bm{\eta}}_{n}-\bm{\eta}_{0} \\
    \hat{\bm{\zeta}}_{n} - \bm{\zeta}_{0}
\end{brsm},
\end{align*}
where $\bm{\eta}_{n}^{'}$ is between $\bm{\eta}_{0}$ and
$\hat{\bm{\eta}}_{n}$ and $\bm{\zeta}_{n}^{'}$ is between $\bm{\zeta}_{0}$ and
$\hat{\bm{\zeta}}_{n}$. Letting
$\hat{\bm{I}}_{n}=\frac{1}{n}L_{n}^{(2)}(\bm{\eta}_{n}^{'},\bm{\zeta}_{n}^{'})$,
we have
\begin{align}
\label{eq:theta12}
\begin{brsm}
    \sqrt{n}\left(\hat{\bm{\eta}}_{n}-\bm{\eta}_{0}\right) \\
    \sqrt{n}\left(\hat{\bm{\zeta}}_{n} - \bm{\zeta}_{0}\right)
\end{brsm}
=
  -\hat{\bm{I}}_{n}^{-1}\sqrt{n}\frac{1}{n}L_{n}^{(1)}(\bm{\eta}_{0},\bm{\zeta}_{0}).
\end{align}
Since condition~\ref{cond:1-5} holds for
$f_{2}(\bm{y}|\bm{\eta},\bm{\zeta})$, we know that $\lim_{n\to\infty}\hat{\bm{\eta}}_{n}=\bm{\eta}_{0}$ and $\lim_{n\to\infty}\hat{\bm{\zeta}}_{n}=
\bm{\zeta}_{0}$. Applying continuous mapping theorem and condition~\ref{cond:1-3}, we have that
$\lim_{n\to\infty}\hat{\bm{I}}_{n}=\bm{I}_{0}$.
Letting
$\bm{D}_{1}$ denote a $(d_{\eta}+d_{\zeta})$-dimensional identity matrix, we
can rewrite (\ref{eq:theta12}) as
\begin{align*}
\begin{brsm}
    \sqrt{n}\left(\hat{\bm{\eta}}_{n}-\bm{\eta}_{0}\right) \\
    \sqrt{n}\left(\hat{\bm{\zeta}}_{n} - \bm{\zeta}_{0}\right)
\end{brsm}
= &
  -\hat{\bm{I}}_{n}^{-1}\bm{I}_{0}\sqrt{n}\frac{1}{n}\bm{I}_{0}^{-1}L_{n}^{(1)}(\bm{\eta}_{0},\bm{\zeta}_{0})
  \\
= &
    -\big(\hat{\bm{I}}_{n}^{-1}\bm{I}_{0}-\bm{D}_{1}\big)\sqrt{n}\frac{1}{n}\bm{I}_{0}^{-1}L_{n}^{(1)}(\bm{\eta}_{0},\bm{\zeta}_{0})
    - \sqrt{n}\frac{1}{n}\bm{I}_{0}^{-1}L_{n}^{(1)}(\bm{\eta}_{0},\bm{\zeta}_{0})
\end{align*}
Before proceeding, we introduce the following notation. Letting $\bm{A}$
be any  $(d_{\eta}+d_{\zeta})$-dimensional square matrix, we let
\begin{align*}
\bm{A} = 
\begin{bmatrix}
    \mbox{upper}(\bm{A})\\
    \mbox{lower}(\bm{A})
\end{bmatrix},
\end{align*}
where $\mbox{upper}(\bm{A})$ is a
$d_{\eta}\times (d_{\eta}+d_{\zeta})$-dimensional matrix and $\mbox{lower}(\bm{A})$ is a
$d_{\zeta}\times (d_{\eta}+d_{\zeta})$ matrix. Using this notation,
\begin{align}
\label{eq:theta122}
\begin{split}
\sqrt{n}\big(\hat{\bm{\zeta}}_{n} - \bm{\zeta}_{0}\big)
= &
    -\mbox{lower}\big[\big(\hat{\bm{I}}_{n}^{-1}\bm{I}_{0}-\bm{D}_{1}\big)\big]\sqrt{n}\frac{1}{n}\bm{I}_{0}^{-1}L_{n}^{(1)}(\bm{\eta}_{0},\bm{\zeta}_{0})\\
&\quad  -
\mbox{lower}\big(\bm{D}_{1}\big)\sqrt{n}\frac{1}{n}\bm{I}_{0}^{-1}L_{n}^{(1)}(\bm{\eta}_{0},\bm{\zeta}_{0})
\end{split}
\end{align}

Expanding Taylor series for $\frac{1}{n}Q_{n}^{(1)}(\tilde{\bm{\zeta}}_{n})$, we have
\begin{align*}
\bm{0}=\frac{1}{n}Q_{n}^{(1)}(\tilde{\bm{\zeta}}_{n}) =
\frac{1}{n}Q_{n}^{(1)}(\bm{\zeta}_{0})
  +
  \frac{1}{n}Q_{n}^{(2)}(\bm{\zeta}_{n}^{*})\big(\tilde{\bm{\zeta}}_{n} - \bm{\zeta}_{0}\big),
\end{align*}
where $\bm{\zeta}_{n}^{*}$ is between $\bm{\zeta}_{0}$ and
$\tilde{\bm{\zeta}}_{n}$. Letting
$\tilde{\bm{I}}_{n}=\frac{1}{n}Q_{n}^{(2)}(\bm{\zeta}_{n}^{*})$,
we have
\begin{align}
\label{eq:zeta}
\sqrt{n}\big(\tilde{\bm{\zeta}}_{n} - \bm{\zeta}_{0}\big)=
-\tilde{\bm{I}}_{n}^{-1}\sqrt{n}\frac{1}{n}Q_{n}^{(1)}(\bm{\zeta}_{0}).
\end{align}
Similarly, using the fact that
conditions~\ref{cond:1-3},~\ref{cond:1-5} hold for
$f_{1}(\bm{y}|\bm{\zeta})$ and the
continuous mapping theorem, we have that $\lim_{n\to\infty}\tilde{\bm{I}}_{n} = \tilde{\bm{I}}_{0}$.
Letting $\bm{D}_{2}$ denote the $d_{\zeta}$-dimensional identity
matrix, we can rewrite (\ref{eq:zeta}) as
\begin{align*}
\begin{split}
\sqrt{n}\big(\tilde{\bm{\zeta}}_{n} - \bm{\zeta}_{0}\big)= &
-\tilde{\bm{I}}_{n}^{-1}\tilde{\bm{I}}_0\sqrt{n}\frac{1}{n}\tilde{\bm{I}}_0^{-1}Q_{n}^{(1)}(\bm{\zeta}_{0})
  \\
 = &
     -\big(\tilde{\bm{I}}_{n}^{-1}\tilde{\bm{I}}_0-\bm{D}_{2}\big)\sqrt{n}\frac{1}{n}\tilde{\bm{I}}^{-1}Q_{n}^{(1)}(\bm{\zeta}_{0})
     -
     \sqrt{n}\frac{1}{n}\tilde{\bm{I}}_0^{-1}Q_{n}^{(1)}(\bm{\zeta}_{0})
\end{split}
\end{align*}

Combining above and (\ref{eq:theta122}), we have
\begin{align*}
 & \sqrt{n}\big(
\tilde{\bm{\zeta}}_{n} - 
\hat{\bm{\zeta}}_{n}
\big) \\
 = &
     -\big(\tilde{\bm{I}}_{n}^{-1}\tilde{\bm{I}}_0-\bm{D}_{2}\big)\sqrt{n}\frac{1}{n}\tilde{\bm{I}}_{0}^{-1}Q_{n}^{(1)}(\bm{\zeta}_{0}) +\mbox{lower}\big[\big(\hat{\bm{I}}_{n}^{-1}\bm{I}_{0}-\bm{D}_{1}\big)\big]\sqrt{n}\frac{1}{n}\bm{I}_{0}^{-1}L_{n}^{(1)}(\bm{\eta}_{0},\bm{\zeta}_{0}) \\
& \quad -
     \sqrt{n}\frac{1}{n}\bigg[\tilde{\bm{I}}_{0}^{-1}Q_{n}^{(1)}(\bm{\zeta}_{0})-\mbox{lower}\big(\bm{D}_{1}\big)\bm{I}_{0}^{-1}L_{n}^{(1)}(\bm{\eta}_{0},\bm{\zeta}_{0})\bigg].
\end{align*}

Since condition~\ref{cond:1-3} holds for
$f_{1}(\bm{y}|\bm{\zeta})$, we know that
$\mathbb{E}_{\bm{\theta}_{0}}\triangledown_{\bm{\zeta}} \ell_{1}(\bm{\zeta},
\bm{y})|_{\bm{\theta}=\bm{\theta}_{0}} = \bm{0}$,
and hence 
$\mbox{Var}_{\bm{\theta}_0} \triangledown_{\bm{\zeta}} \ell_{1}(\bm{\zeta},
\bm{y})|_{\bm{\theta}=\bm{\theta}_{0}} = \tilde{\bm{I}}_0$.
Similarly, since condition~\ref{cond:1-3} holds for
$f_{2}(\bm{y}|\bm{\eta},\bm{\zeta})$, we know that
$\mathbb{E}_{\bm{\theta}_{0}}\triangledown_{\bm{\eta},\bm{\zeta}} \ell_{2}(\bm{\eta}, \bm{\zeta},
\bm{y}) |_{\bm{\theta}=\bm{\theta}_{0}} = 0$ and
$\mbox{Var}_{\bm{\theta}_0} \triangledown_{\bm{\eta},\bm{\zeta}} \ell_{2}(\bm{\eta}, \bm{\zeta},
\bm{y}) |_{\bm{\theta}=\bm{\theta}_{0}} = \bm{I}_0$.
It is easily seen that
\begin{align*}
&
  \mathbb{E}_{\bm{\theta}_0}\tilde{\bm{I}}_{0}^{-1} \triangledown_{\bm{\zeta}} \ell_{1}(\bm{\zeta},
\bm{y})|_{\bm{\theta}=\bm{\theta}_{0}}=\bm{0}, \quad \mathbb{E}_{\bm{\theta}_0}\mbox{lower}\big(\bm{D}_{1}\big)\bm{I}_{0}^{-1}\triangledown_{\bm{\eta},\bm{\zeta}} \ell_{2}(\bm{\eta}, \bm{\zeta},
\bm{y}) |_{\bm{\theta}=\bm{\theta}_{0}}=\bm{0}
\end{align*}
and hence
\begin{align*}
&\mathbb{E}_{\bm{\theta}_0}\left[\tilde{\bm{I}}_{0}^{-1} \triangledown_{\bm{\zeta}} \ell_{1}(\bm{\zeta},
\bm{y})-\mbox{lower}\big(\bm{D}_{1}\big)\bm{I}_{0}^{-1}\triangledown_{\bm{\eta},\bm{\zeta}} \ell_{2}(\bm{\eta}, \bm{\zeta},
\bm{y})\right]\bigg|_{\bm{\theta}=\bm{\theta}_{0}}=\bm{0}.
\end{align*}
Also, it can be shown that
\begin{align*}
&\mbox{Var}_{\bm{\theta}_0}\tilde{\bm{I}}_{0}^{-1} \triangledown_{\bm{\zeta}} \ell_{1}(\bm{\zeta},
\bm{y})|_{\bm{\theta}=\bm{\theta}_{0}}=\tilde{\bm{I}}_{0}^{-1}
  \tilde{\bm{I}}_0 \tilde{\bm{I}}^{-1}_0 =\tilde{\bm{I}}^{-1}_0,\\
&\mbox{Var}_{\bm{\theta}_0}\bm{I}_{0}^{-1}\triangledown_{\bm{\eta},\bm{\zeta}} \ell_{2}(\bm{\eta}, \bm{\zeta},
\bm{y}) |_{\bm{\theta}=\bm{\theta}_{0}} =\bm{I}_{0}^{-1}\bm{I}_{0}\bm{I}_{0}^{-1}=\bm{I}_{0}^{-1}.
\end{align*}
Applying central limit theorem (CLT), we have
\begin{align}
& \sqrt{n}\frac{1}{n}\tilde{\bm{I}}^{-1}Q_{n}^{(1)}(\bm{\zeta}_{0})
  \overset{d}{\to} \mathcal{N}\big(\bm{0}, \tilde{\bm{I}}^{-1}\big) \label{eq:zeta-clt}\\
&
  \sqrt{n}\frac{1}{n}\bm{I}_{0}^{-1}L_{n}^{(1)}(\bm{\eta}_{0},\bm{\zeta}_{0}) \overset{d}{\to} \mathcal{N}\big(\bm{0},\bm{I}_{0}^{-1}\big). \label{eq:theta11-clt}
\end{align}
Introducing
\begin{align*}
\bm{V}=\mbox{Var}_{\bm{\theta}_{0}} \left[\tilde{\bm{I}}^{-1}_0 \triangledown_{\bm{\zeta}} \ell_{1}(\bm{\zeta},
\bm{y})-\mbox{lower}\big(\bm{D}_{1}\big)\bm{I}_{0}^{-1}\triangledown_{\bm{\eta},\bm{\zeta}} \ell_{2}(\bm{\eta}, \bm{\zeta},
\bm{y})\right]\bigg|_{\bm{\theta}=\bm{\theta}_{0}}
\end{align*}
and applying CLT again, we have
\begin{align*}
\sqrt{n}\frac{1}{n}\bigg[\tilde{\bm{I}}^{-1}_0Q_{n}^{(1)}(\bm{\zeta}_{0})-\mbox{lower}\big(\bm{D}_{1}\big)\bm{I}_{0}^{-1}L_{n}^{(1)}(\bm{\eta}_{0},\bm{\zeta}_{0})\bigg]
  \overset{d}{\to} \mathcal{N}\big(\bm{0}, \bm{V}\big).
\end{align*}
We have already shown that $\lim_{n\to\infty}\tilde{\bm{I}}_{n}=
\tilde{\bm{I}}_0$, which implies that
$\lim_{n\to\infty}\tilde{\bm{I}}_{n}^{-1}\tilde{\bm{I}}_0 =
\bm{D}_{2}$. Similarly, we have shown that $\lim_{n\to\infty}\hat{\bm{I}}_{n}=\bm{I}_0$, which implies
$\lim_{n\to\infty}\hat{\bm{I}}_{n}^{-1}\bm{I}_0 = \bm{D}_{1}$. Combining (\ref{eq:zeta-clt}) and (\ref{eq:theta11-clt}) and
applying Slutsky's theorem, we have
\begin{align*}
& \big(\tilde{\bm{I}}_{n}^{-1}\tilde{\bm{I}}_0-\bm{D}_{2}\big)\sqrt{n}\frac{1}{n}\tilde{\bm{I}}^{-1}Q_{n}^{(1)}(\bm{\zeta}_{0})
  \overset{d}{\to} \bm{0}, \\
&
  \mbox{lower}\big[\big(\hat{\bm{I}}_{n}^{-1}\bm{I}_{0}-\bm{D}_{1}\big)\big]\sqrt{n}\frac{1}{n}\bm{I}_{0}^{-1}L_{n}^{(1)}(\bm{\eta}_{0},\bm{\zeta}_{0}) \overset{d}{\to} \bm{0}.
\end{align*}
Since convergence in distribution to a
constant implies convergence in probability, we have
\begin{align*}
& \big(\tilde{\bm{I}}_{n}^{-1}\tilde{\bm{I}}_0-\bm{D}_{2}\big)\sqrt{n}\frac{1}{n}\tilde{\bm{I}}^{-1}Q_{n}^{(1)}(\bm{\zeta}_{0})
  \overset{P_{\bm{\theta}_0}}{\to} \bm{0}, \\
&
  \mbox{lower}\big[\big(\hat{\bm{I}}_{n}^{-1}\bm{I}_{0}-\bm{D}_{1}\big)\big]\sqrt{n}\frac{1}{n}\bm{I}_{0}^{-1}L_{n}^{(1)}(\bm{\eta}_{0},\bm{\zeta}_{0}) \overset{P_{\bm{\theta}_0}}{\to} \bm{0},
\end{align*}
which implies that
\begin{align*}
\sqrt{n}\big(
\tilde{\bm{\zeta}}_{n} - 
\hat{\bm{\zeta}}_{n}
\big) \overset{P_{\bm{\theta}_0}}{\to} & -
     \sqrt{n}\frac{1}{n}\bigg[\tilde{\bm{I}}^{-1}_0Q_{n}^{(1)}(\bm{\zeta}_{0})-\mbox{lower}\big(\bm{D}_{1}\big)\bm{I}_{0}^{-1}L_{n}^{(1)}(\bm{\eta}_{0},\bm{\zeta}_{0})\bigg],
\end{align*}
where the right part has been shown to converge in distribution to
$\mathcal{N}(\bm{0},\bm{V})$. Hence we can conclude that
\begin{align*}
\sqrt{n}\big(
\tilde{\bm{\zeta}}_{n} - 
\hat{\bm{\zeta}}_{n}
\big) \overset{d}{\to} & \mathcal{N}(\bm{0},\bm{V}).
\end{align*}

\end{proof}

\subsection{Proof of Corollary~\ref{coro:joint-consistency}}
Applying random variable transformation to (\ref{eq:joint-normality}),
it can be shown that 
\begin{align*}
&\lim_{n\to\infty}\int \int
  \left|g_{n}( \bm{\eta}, \bm{\zeta})\right|\intd \bm{\eta} \intd \bm{\zeta}= 0,
\end{align*}
where
\begin{align*}
& g_{n}( \bm{\eta}, \bm{\zeta})
= \tau_{n}\left(\bm{\eta}|\bm{\zeta}\right)
  \kappa_{n}\left(\bm{\zeta}\right)\\
&\quad-\phi\left(\bm{\eta}\middle|\hat{\bm{\eta}}_{n}-\big(\bm{I}^{11}_{0}\big)^{-1}\bm{I}^{12}_{0}\left(\bm{\zeta}-\hat{\bm{\zeta}}_{n}\right),\big(\bm{I}^{11}_{0}\big)^{-1}/\sqrt{n}\right) \phi\left(\bm{\zeta}\middle|\tilde{\bm{\zeta}}_{n},
  \tilde{\bm{I}}_{0}^{-1}/\sqrt{n}\right).
\end{align*}
Let $\bm{z}=\begin{brsm}
\bm{\eta} \\
    \bm{\zeta}
\end{brsm}$ and $\bm{z}_{0}=\begin{brsm}
\bm{\eta}_{0} \\
    \bm{\zeta}_{0}
\end{brsm}$. For any neighborhood $U$ of $\bm{z}_{0}$, $\exists\delta>0$
such that $B=\mathcal{B}_{\bm{z}}(\bm{z}_{0},\delta)\in U$. Then it can be shown
that
\begin{align*}
 & \lim_{n\to\infty}\int_{U}\tau_{n}\left(\bm{\eta}|\bm{\zeta}\right)
  \kappa_{n}\left(\bm{\zeta}\right) \intd \bm{\eta} \intd \bm{\zeta}\\
= & \lim_{n\to\infty}\int_{U}\phi\left(\bm{\eta}\middle|\hat{\bm{\eta}}_{n}-\big(\bm{I}^{11}_{0}\big)^{-1}\bm{I}^{12}_{0}\left(\bm{\zeta}-\hat{\bm{\zeta}}_{n}\right),\big(\bm{I}^{11}_{0}\big)^{-1}/\sqrt{n}\right) \phi\left(\bm{\zeta}\middle|\tilde{\bm{\zeta}}_{n},
  \tilde{\bm{I}}_{0}^{-1}/\sqrt{n}\right)\intd \bm{\eta} \intd
    \bm{\zeta} \\
& \qquad + \lim_{n\to\infty}\int_{U}
\left|g_{n}( \bm{\eta}, \bm{\zeta})\right|\intd \bm{\eta} \intd \bm{\zeta}\\
= & \lim_{n\to\infty}\int_{U}\phi\left(\bm{\eta}\middle|\hat{\bm{\eta}}_{n}-\big(\bm{I}^{11}_{0}\big)^{-1}\bm{I}^{12}_{0}\left(\bm{\zeta}-\hat{\bm{\zeta}}_{n}\right),\big(\bm{I}^{11}_{0}\big)^{-1}/\sqrt{n}\right) \phi\left(\bm{\zeta}\middle|\tilde{\bm{\zeta}}_{n},
  \tilde{\bm{I}}_{0}^{-1}/\sqrt{n}\right)\intd \bm{\eta} \intd
    \bm{\zeta} \\
\geq & \lim_{n\to\infty}\int_{B}\phi\left(\bm{\eta}\middle|\hat{\bm{\eta}}_{n}-\big(\bm{I}^{11}_{0}\big)^{-1}\bm{I}^{12}_{0}\left(\bm{\zeta}-\hat{\bm{\zeta}}_{n}\right),\big(\bm{I}^{11}_{0}\big)^{-1}/\sqrt{n}\right) \phi\left(\bm{\zeta}\middle|\tilde{\bm{\zeta}}_{n},
  \tilde{\bm{I}}_{0}^{-1}/\sqrt{n}\right)\intd \bm{\eta} \intd
    \bm{\zeta} 
\end{align*}
Since condition~\ref{cond:1-5} holds for both
$f_{1}(\bm{y}|\bm{\zeta})$ and $f_{2}(\bm{y}|\bm{\eta},\bm{\zeta})$, we have
$\lim_{n\to\infty}\hat{\bm{\eta}}_{n}= \bm{\eta}_{0}$,
$\lim_{n\to\infty}\hat{\bm{\zeta}}_{n}= \bm{\zeta}_{0}$ and $\lim_{n\to\infty}\tilde{\bm{\zeta}}_{n}= \bm{\zeta}_{0}$. Hence the above limit goes to 1.

\subsection{Proof of Theorem~\ref{thm:plugin-tile-normality}}
Letting
$\bm{r}^{*}_{n}=\sqrt{n}\left(\bm{\zeta}^{*}_{n}-\hat{\bm{\zeta}}_{n}\right)$,
it can be seen that $\pi_{n,3}^{*}(\bm{t})=\pi^{*}_{n,2}(\bm{t}|\bm{r}^{*}_{n})$.
Suppose that the conditions for Theorem~\ref{thm:joint-normality}
hold, then slightly modifying the step 3 and step 4 in the proof of
Theorem~\ref{thm:joint-normality} we can show that $\exists \delta>0$ such that for
$\left\|\bm{r}^{*}_{n}\right\|<\sqrt{n}\delta$,
\begin{align*}
\lim_{n\to\infty}\int \left|
  \pi_{n,3}^{*}(\bm{t})-\phi\left(\bm{t}\middle|-\big(\bm{I}^{11}_{0}\big)^{-1}\bm{I}^{12}_{0}\bm{\mu}_{n},\big(\bm{I}^{11}_{0}\big)^{-1}\right)
\right|\intd \bm{t}= 0.
\end{align*}
Define a sequence of events
$A_{n,\delta}=\{\bm{r}^{*}_{n}:\left\|\bm{r}^{*}_{n}\right\|<\sqrt{n}\delta\}$ and
$B_{n,\delta}=\{\bm{r}^{*}_{n}:\left\|\bm{r}^{*}_{n}\right\|>\sqrt{n}\delta\}$, then for any $\epsilon>0$, we have
\begin{align} \label{eq:plugin-proof}
\begin{split}
& p\left[\int \left|
  \pi_{n,3}^{*}(\bm{t})-\phi\left(\bm{t}\middle|-\big(\bm{I}^{11}_{0}\big)^{-1}\bm{I}^{12}_{0}\bm{\mu}_{n},\big(\bm{I}^{11}_{0}\big)^{-1}\right)
\right|\intd \bm{t}>\epsilon\right] \\
=& p(A_{n,\delta}) p\left[\int \left|
  \pi_{n,3}^{*}(\bm{t})-\phi\left(\bm{t}\middle|-\big(\bm{I}^{11}_{0}\big)^{-1}\bm{I}^{12}_{0}\bm{\mu}_{n},\big(\bm{I}^{11}_{0}\big)^{-1}\right)
\right|\intd \bm{t}>\epsilon\bigg|A_{n,\delta}\right] \\
&\quad+p(B_{n,\delta}) p\left[\int \left|
  \pi_{n,3}^{*}(\bm{t})-\phi\left(\bm{t}\middle|-\big(\bm{I}^{11}_{0}\big)^{-1}\bm{I}^{12}_{0}\bm{\mu}_{n},\big(\bm{I}^{11}_{0}\big)^{-1}\right)
\right|\intd \bm{t}>\epsilon\bigg|B_{n,\delta}\right]
\end{split}
\end{align}
We have already shown that
\begin{align*}
\lim_{n\to\infty}p\left[\int \left|
  \pi_{n,3}^{*}(\bm{t})-\phi\left(\bm{t}\middle|-\big(\bm{I}^{11}_{0}\big)^{-1}\bm{I}^{12}_{0}\bm{\mu}_{n},\big(\bm{I}^{11}_{0}\big)^{-1}\right)
\right|\intd \bm{t}>\epsilon\bigg|A_{n,\delta}\right]=0.
\end{align*}
Since $p(A_{n,1})\leq 1$, the first part in
(\ref{eq:plugin-proof}) goes to 0.

Similarly, we also know
\[p\left(\int \bigg|
  \pi_{n,3}^{*}(\bm{t})-\phi\bigg(\bm{t}\bigg|-\big(\bm{I}^{11}_{0}\big)^{-1}\bm{I}^{12}_{0}\bm{\mu}_{n},\big(\bm{I}^{11}_{0}\big)^{-1}\bigg)
\bigg|d \bm{t}>\epsilon\bigg|A_{n,2}\right)\leq 1.\]
From Lemma~\ref{lem:knot-posterior-mean}, we know that
$\lim_{n\to\infty}\sqrt{n}\left(\bm{\zeta}^{*}_{n}- \tilde{\bm{\zeta}}_{n}\right)=0$. From
Lemma~\ref{lem:mle-dif}, we know that $\sqrt{n}\big(
\tilde{\bm{\zeta}}_{n} - 
\hat{\bm{\zeta}}_{n}
\big) \overset{d}{\to} \mathcal{N}(\bm{0},\bm{V})$. Combining them we
can show that $\bm{r}^{*}_{n} \overset{d}{\to}
\mathcal{N}(\bm{0},\bm{V})$, hence 
\[p(B_{n,\delta}) =p(\left\|\bm{r}^{*}_{n}\right\|>\sqrt{n}\delta)\]
also goes to zero. We have shown that the second part in
(\ref{eq:plugin-proof}) goes to 0.

\subsection{Proof of Lemma~\ref{coro:knot-complexity-bound}}
We first provide the following lemma.
\begin{lemma} \label{lem:normal-evt}
(\citet{david1970order}) Let $\Phi$ be the cdf function of the standard
normal distribution, then
\begin{align} \label{eq:stdn-extreme}
\lim_{n\to\infty}\Phi\left(a_n x+b_n\right)^n=e^{-\exp(-x)},
\end{align}
where $b_{n} = \Phi^{-1}\left(1-\frac{1}{n}\right)$ and $a_{n} = \frac{1}{n\phi(b_{n})}$.
\end{lemma}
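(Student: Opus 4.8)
The plan is to prove the Gumbel limit directly from the tail asymptotics of the normal distribution, with Mills' ratio as the sole analytic input. First I would take logarithms: since $a_n x + b_n \to \infty$ forces $\Phi(a_n x + b_n) \to 1$, the expansion $\log u = -(1-u) + O\big((1-u)^2\big)$ as $u \to 1$ reduces the claim—the quadratic remainder being negligible once the linear term is shown to converge—to showing
\begin{align*}
n\big(1-\Phi(a_n x+b_n)\big) \to e^{-x}.
\end{align*}
Writing $\overline{\Phi} = 1-\Phi$ for the survival function and $z_n = a_n x + b_n$, the definition $b_n = \Phi^{-1}(1-1/n)$ gives $\overline{\Phi}(b_n) = 1/n$, so the target is equivalent to the ratio statement $\overline{\Phi}(z_n)/\overline{\Phi}(b_n) \to e^{-x}$.

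Next I would record the two consequences of the definitions that drive everything. Since $\overline{\Phi}(b_n) = 1/n \to 0$ and $\Phi$ is continuous and strictly increasing, $b_n \to \infty$. Combining $a_n = 1/\big(n\phi(b_n)\big)$ with $n = 1/\overline{\Phi}(b_n)$ yields the clean identity $a_n = \overline{\Phi}(b_n)/\phi(b_n)$. Mills' ratio, $\overline{\Phi}(z) \sim \phi(z)/z$ as $z \to \infty$, then gives $a_n b_n \to 1$ and in particular $a_n \to 0$. Consequently $a_n x/b_n \sim x/b_n^2 \to 0$, so $b_n/z_n \to 1$, while both $z_n$ and $b_n$ tend to infinity.

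Finally I would decompose the ratio as
\begin{align*}
\frac{\overline{\Phi}(z_n)}{\overline{\Phi}(b_n)} = \frac{\overline{\Phi}(z_n)}{\phi(z_n)/z_n}\cdot\frac{\phi(b_n)/b_n}{\overline{\Phi}(b_n)}\cdot\frac{\phi(z_n)}{\phi(b_n)}\cdot\frac{b_n}{z_n}.
\end{align*}
The first two factors tend to $1$ by Mills' ratio (as $z_n,b_n\to\infty$), and the fourth tends to $1$ by the previous step. For the third I would write $z_n^2 - b_n^2 = a_n^2 x^2 + 2 a_n b_n x$, which converges to $0 + 2x$ because $a_n \to 0$ and $a_n b_n \to 1$, whence $\phi(z_n)/\phi(b_n) = \exp\{-(z_n^2-b_n^2)/2\} \to e^{-x}$. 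Multiplying the four limits gives $\overline{\Phi}(z_n)/\overline{\Phi}(b_n) \to e^{-x}$, completing the reduction and hence the lemma.

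The only genuinely analytic ingredient is Mills' ratio $\overline{\Phi}(z)\sim \phi(z)/z$; once it is invoked, the remainder is careful bookkeeping of limits. The step requiring the most attention is the joint behaviour of $a_n$ and $b_n$: one must verify both $a_n b_n \to 1$, so that the cross term $2a_n b_n x$ converges to exactly $2x$, and $a_n \to 0$, so that the quadratic term $a_n^2 x^2$ vanishes. It is precisely the balance between these two that produces the nondegenerate limit $e^{-x}$, and getting the normalizing constants $a_n$, $b_n$ to cooperate is the crux of the argument.
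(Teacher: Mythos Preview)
Your argument is correct and is the standard textbook derivation of the Gumbel limit for normal maxima via Mills' ratio. The paper, however, does not supply its own proof of this lemma: it is stated with a citation to \citet{david1970order} and then used as a black box in the proof of Lemma~\ref{coro:knot-complexity-bound}. So there is no proof in the paper to compare against; what you have written is essentially what one would find in the cited reference, and indeed the paper separately invokes Mills' ratio (in the form $a_n \sim 1/b_n$) a few lines later when bounding $a_n$ and $b_n$, which is precisely the analytic core of your argument.
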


For any $j$ and for fixed $\mu_{j}$ and $\sigma_{jj}$, we standardized the $x_{ij}$'s and introduce
$z_{i}=\frac{x_{ij}-\mu_{j}}{\sqrt{\sigma_{jj}}}$. Clearly
$z_{1},\ldots,z_{n}$ are i.i.d. standard normal random
variables. Applying Lemma~\ref{lem:normal-evt}, for any $\delta>0$,
\[\lim_{n\to\infty}p(\max_{1\leq i\leq n}z_{i}<a_{n}\delta+b_{n})=
  e^{-\exp(-\delta)}.\]
Transforming $z_{i}$'s back to $x_{ij}$'s, we get
\[\lim_{n\to\infty}p\left[\max_{1\leq i\leq n}x_{ij}<\sqrt{\sigma_{jj}}\left(a_{n}\delta+b_{n}\right)+\mu_{j}\right]=
  e^{-\exp(-\delta)}.\]

Since $y_{ij}=\mathbbm{1}\{x_{j}>0\}\ceil{x_{ij}}$, it is easily seen
that for any $a\geq 0$, $\max_{1\leq i\leq n}x_{ij}<a$ implies $\max_{1\leq i\leq n}y_{ij}<a+1$. Hence
\[p\left[\max_{1\leq i\leq
      n}y_{ij}<\sqrt{\sigma_{jj}}\left(a_{n}\delta+b_{n}\right)+\mu_{j}+1\right]
\geq
  p\left[\max_{1\geq i\leq
      n}x_{ij}<\sqrt{\sigma_{jj}}\left(a_{n}\delta+b_{n}\right)+\mu_{j}\right]\]
holds for any $n$, which implies
\begin{align*}
\lim_{n\to\infty}pr\left[\max_{1\leq i\leq
  n}y_{ij}<\sqrt{\sigma_{jj}}\left(a_{n}\delta+b_{n}\right)+\mu_{j}+1\right]\geq e^{-\exp(-\delta)}.
\end{align*}

It is easily seen that $b_n\to\infty$. Using Mills ratio, we can show
that for any $x>0$,
$\lim_{n\to\infty}\frac{1-\Phi(b_{n})}{\phi(b_{n})}=\frac{1}{b_{n}}$.
Noting that $1-\Phi(b_{n})=\frac{1}{n}$, we have shown that
\begin{align} \label{eq:mills-ratio}
\lim_{n\to\infty}a_{n}=\frac{1}{b_{n}}.
\end{align}

Integrating by parts, one can easily show the following two bounds:
\begin{align*}
1-\Phi(x) \leq \frac{e^{-x^2/2}}{\sqrt{2\pi x}}, & \quad 1-\Phi(x) \geq \frac{e^{-x^2/2}}{\sqrt{2\pi}}\left(\frac{1}{x}-\frac{1}{x^{3}}\right).
\end{align*}
Since $1-\Phi(b_{n})=\frac{1}{n}$, it can be shown that
$\sqrt{2\log n} \leq b_{n}\leq \sqrt{2\log n}$ for sufficiently large
$n$. Coupled with (\ref{eq:mills-ratio}),
we would have $a_{n}<\frac{2}{\sqrt{\log n}}$. This implies that
\[a_{n}\delta+b_{n} < \frac{2\delta}{\sqrt{\log n}}+\sqrt{2\log n},\]
and hence
\begin{align*}
& pr\left[\max_{1\leq i\leq
  n}y_{ij}<\sqrt{\sigma_{jj}}\left(a_{n}\delta+b_{n}\right)+\mu_{j}+1\right] \\
< & pr\left[\max_{1\leq i\leq
  n}y_{ij}<\sqrt{\sigma_{jj}}\left(\frac{2\delta}{\sqrt{\log
      n}}+\sqrt{2\log n}\right)+\mu_{j}+1\right],
\end{align*}
which completes the proof.

\newpage
\bibliography{references}

 \end{document}